\documentclass[journal,10pt,twocolumn]{IEEEtran}
%\IEEEoverridecommandlockouts
\usepackage{amsmath,amsfonts,amsthm,amssymb,bbm}
\usepackage{cite}
\usepackage{balance}
\usepackage{mathrsfs}
\usepackage{xcolor}
\usepackage{graphicx}
\allowdisplaybreaks

\usepackage{dblfloatfix}
\newcounter{MYtempeqncnt}
\interdisplaylinepenalty=1500
\allowdisplaybreaks
\DeclareMathOperator*{\argmax}{arg\,max}
\usepackage{subcaption}
\theoremstyle{plain} 
\newtheorem{theorem}{Theorem}

\newtheorem{definition}{Definition}
\newtheorem{lemma}{Lemma}
\newtheorem{proposition}{Proposition}
\theoremstyle{definition} \newtheorem{remark}{Remark}
\theoremstyle{definition}

\DeclareMathOperator\erf{erf}
\DeclareMathOperator\erfi{erfi}
\usepackage[ colorlinks = true,
linkcolor = blue,
urlcolor = blue,
citecolor = blue,
anchorcolor = green,
]{hyperref}

\title{Unifying Privacy Measures via Maximal $(\alpha,\beta)$-Leakage ({M$\alpha$beL})}
\author{}

\begin{document}
\author{Atefeh Gilani, \IEEEmembership{Student Member,~IEEE}, Gowtham R. Kurri, \IEEEmembership{Member,~IEEE},  Oliver Kosut, \IEEEmembership{Senior Member,~IEEE}, and Lalitha Sankar, \IEEEmembership{Senior Member,~IEEE}
\thanks{This article was presented in part at the 2022 Information Theory Workshop (ITW)~\cite{GilaniKKS22}. This work is supported in part by NSF grants CIF-1901243, CIF-2312666, and CIF-2007688.}
\thanks{Atefeh Gilani, Oliver Kosut, and Lalitha Sankar are with the School of Electrical, Computer and Energy Engineering, Arizona State University, Tempe, AZ 85281 USA (e-mail: {\tt{agilani2@asu.edu}, \tt{okosut@asu.edu}, \tt{lalithasankar@asu.edu}}). 

Gowtham R. Kurri was with the School of Electrical, Computer and Energy Engineering at Arizona State University. He is now with the Signal Processing and Communications Research Centre at International Institute of Information Technology, Hyderabad, Telangana - 500032, India (e-mail: \tt{gowtham.kurri@iiit.ac.in}).}}

\maketitle
\begin{abstract}
    We introduce a family of information leakage measures called \emph{maximal $(\alpha,\beta)$-leakage} (M$\alpha$beL), parameterized by real numbers $\alpha$ and $\beta$ greater than or equal to 1. The measure is formalized via an operational definition involving an adversary guessing an unknown (randomized) function of the data given the released data. We obtain a simplified computable expression for the measure and show that it satisfies several basic properties such as monotonicity in $\beta$ for a fixed $\alpha$, non-negativity, data processing inequalities, and additivity over independent releases. We highlight the relevance of this family by showing that it bridges several known leakage measures, including maximal $\alpha$-leakage $(\beta=1)$, maximal leakage $(\alpha=\infty,\beta=1)$, local differential privacy (LDP) $(\alpha=\infty,\beta=\infty)$, and local R\'{enyi} differential privacy (LRDP) $(\alpha=\beta)$, thereby giving an operational interpretation to local R\'{e}nyi differential privacy. We also study a conditional version of M$\alpha$beL on leveraging which we recover differential privacy and R\'{e}nyi differential privacy. A new variant of LRDP, which we call \emph{maximal R\'{e}nyi leakage}, appears as a special case of M$\alpha$beL for $\alpha=\infty$ that smoothly tunes between maximal leakage ($\beta=1$) and LDP ($\beta=\infty$). Finally, we show that a vector form of the maximal R\'{e}nyi leakage relaxes differential privacy under Gaussian and Laplacian mechanisms.
\end{abstract}

\begin{IEEEkeywords}

Maximal leakage, maximal $\alpha$-leakage, (local) differential privacy, (local) R\'{e}nyi differential privacy, Shannon channel capacity. 

\end{IEEEkeywords}
\section{Introduction}\label{intro}
\IEEEPARstart{H}{ow} much information does an observation released to an adversary reveal/leak about correlated sensitive data? This fundamental question arises in many privacy problems whenever data about users is stored (e.g., social networks and cloud-based services) and a certain level of information leakage is unavoidable in exchange for certain services. Limiting such an information leakage is desirable. Quantifying such leakage is the first step towards limiting it. In an effort to quantify this leakage precisely, a variety of privacy measures have been proposed in computer science~\cite{Dwork2006,smith2009foundations,braun2009quantitative,kasiviswanathan2011can,duchi2013local,Alvimetal12,Alvimetal14} and in information theory~\cite{MerhavA99,CalmonF12,SchielerC14,IssaW15,asoodeh2015maximal,AsoodehDL2017,mironov2017renyi,IssaWK2020,LiaoKS20,RassouliDeniz,SaedianCOS21,saeidian2022pointwise}.

For any leakage measure, one of the key challenges is to associate an operational interpretation to it, so that a certain amount of leakage corresponds to a particular privacy guarantee. Only a few leakage measures possess such an operational meaning. For example, the works in \cite{IssaWK2020,LiaoKS20}, which pertain to the release of observation due to a side channel, measure privacy in terms of an adversary's gain in \emph{guessing} the sensitive data after observing the released data. In particular, Issa~\emph{et al.}~\cite{IssaWK2020} consider an adversary interested in guessing a possibly randomized function of $X$. They study the logarithm of the multiplicative increase, upon observing $Y$, of the probability of correctly guessing a randomized function of $X$, say $U$, maximized over all the random variables $U$ such that $U-X-Y$ forms a Markov chain. This maximization captures the scenario that the function of interest $U$ is unknown to the system designer. The resulting quantity is referred to as \emph{maximal leakage} (MaxL). Liao \emph{et al.}~\cite{LiaoKS20} later generalized maximal leakage to a family of leakages, \emph{maximal $\alpha$-leakage} (Max-$\alpha$L) that consider a family of losses, namely $\alpha$-loss, to quantify the adversarial gain. 
Thus, similar to MaxL, Max-$\alpha$L quantifies the maximal logarithmic gain in a monotonically increasing power function (dependent on $\alpha$) applied to the probability of correctly guessing. By doing so, \cite{LiaoKS20} presents an operational interpretation of leakage measures using adversarial loss functions.

Among leakage measures motivated by worst-case adversaries, \emph{differential privacy} (DP)~\cite{Dwork2006} has emerged as the gold standard. Relegating precise definitions to the sequel, we state that a differentially private algorithm guarantees that its outputs restrict the adversary from distinguishing between neighboring datasets (i.e., the datasets that differ only in a single data entry), where each dataset can be viewed as $n$ instantiations of $X$. An operational interpretation of DP in the framework of hypothesis testing is given by Kairouz \emph{et al.}\cite{kairouz2015composition}, where they show that it determines the trade-off between probabilities of false alarm and missed detection. When privacy guarantees have to be provided in a distributed setting, \emph{local differential privacy} (LDP)~\cite{kasiviswanathan2011can,duchi2013local} provides strong privacy guarantees between any two realizations of $X$. Issa~\emph{et al.}~\cite{IssaWK2020} extended their definition of maximal leakage to introduce a \emph{worst-case} measure via \emph{maximal realizable leakage} (MaxRL) by taking a maximum over all realizations of $Y$.
%also study a leakage measure \textcolor{red}{concerned with worst-case analysis} \textcolor{blue}{ (where there is a maximation)} over all the realizations of $Y$, named as . 
They show that maximizing MaxRL over all the distributions $P_X$ yields LDP, thereby providing an operational interpretation to the latter. In the context of composing DP outputs sequentially, \emph{R\'{e}nyi differential privacy} (RDP)~\cite{mironov2017renyi} has emerged as a better variant to compute tight DP guarantees over multiple compositions of differentially private algorithms. Specifically, RDP relaxes DP based on the R\'{e}nyi divergence~\cite{renyi1961measures}. One can define \emph{local R\'{e}nyi differential privacy} (LRDP) as a generalization of LDP based on the R\'{e}nyi divergence. All the aforementioned leakage measures find applications in many areas such as privacy utility trade-offs~\cite{LiaoKS20,SaraCOS21,Alvim12}, hypothesis testing~\cite{LiaoSCT17}, source coding~\cite{Liuetal21}, Census data~\cite{abowd2018us}, anomaly detection~\cite{aljably2019anomaly}, age of information~\cite{NityaYSM22}, membership inference~\cite{SaedianCOS21}, deep learning~\cite{abadi2016deep}, posterior sampling~\cite{geumlek2017renyi}, and mechanism design~\cite{McSherryT07}. 

No single measure of privacy/information leakage suits all the scenarios in practice. In spite of the existence of a large number of privacy measures in the literature, it is often challenging to make an informed choice of a measure for a particular application in view of the diversity and complexity of various privacy measures. This compels a need for a unification of privacy leakage measures, in general, via a principled approach. In this paper, motivated by \cite{IssaWK2020,LiaoKS20}, we propose a family of information leakage measures, called \emph{maximal $(\alpha,\beta)$-leakage} (M$\alpha$beL), unifying all the aforementioned leakage measures into a structured landscape of leakage measures in an operationally motivated manner. 
\subsection{Main Contributions.}
The main contributions of this paper are as follows.
\begin{itemize}
    \item We introduce M$\alpha$beL in the framework of a guessing adversary, which is parameterized by two real numbers $\alpha\in[1,\infty]$ and $\beta\in[1,\infty]$(Definition~\ref{def:maximal-alpha-beta-leakage}), and obtain a simplified computable expression for it (Theorem~\ref{theorem:alpha-beta-leakage})~\cite{GilaniKKS22}. We prove that it satisfies all the axiomatic properties of a measure of information leakage, including non-negativity, monotonicity in $\beta$ for a fixed $\alpha$, data-processing inequalities, and additivity over independent releases (Theorem~\ref{theorem:properties})~\cite{GilaniKKS22}. We also show that it is continuous at $(\alpha,\beta)\in [1,\infty]\times [1,\infty]$, with the exception of $\alpha=\beta=1$ (Theorem~\ref{thm:continuity}).
    \item We show that this family of measures encompasses a host of existing leakage measures: in particular, Max-$\alpha$L ($\beta=1$), MaxL ($\alpha \to \infty,\beta=1$), LDP ($\alpha\to \infty,\beta \to \infty$), LRDP ($\alpha=\beta$) (Proposition~\ref{prop1} and Figure~\ref{alpha-beta-leakage-relationships})~\cite{GilaniKKS22}. Theorem~\ref{theorem:properties} gives another proof that LDP satisfies both the post-processing and linkage inequalities \footnote{In the context of privacy, 'linkage inequality' is often used interchangeably with 'preprocessing inequality'.} unlike DP which does not satisfy the linkage inequality~\cite{Basciftietal}. Interestingly, M$\alpha$beL is defined apparently in terms of average-case analysis (in the spirit of MaxL and Max-$\alpha$L), and yet, it recovers the worst-case LDP and LRDP by exploiting the interplay between the parameters $\alpha$ and $\beta$. 
    \item We propose \emph{conditional} M$\alpha$beL which takes into account the side-information an adversary may have and obtain a simplified computable expression for it (Theorem~\ref{theorem:conditional-alpha-beta-leakage}). We prove that M$\alpha$beL upper bounds conditional M$\alpha$beL if the side-information is conditionally independent of the released data given the original data (Theorem~\ref{theorem:conditioning-reduces-leakage}). That is, minimizing M$\alpha$beL is still a reasonable objective for a situation in which an adversary has access to side information which is unknown to the system designer. We also show that conditional M$\alpha$beL is subadditive over multiple releases (Theorem~\ref{theorem:sub-additivity}). 

    \item We generalize the conditional M$\alpha$beL to a vector form which allows us to quantify the leakage associated with a change in only one entry of dataset with an assumption that an adversary has the knowledge of all the remaining entries (Definition~\ref{differential-leakage}). We show that this naturally recovers DP and RDP (Proposition~\ref{vector-leakage-cases}). An important consequence of our results is an operational interpretation to RDP and LRDP. We note that this subsumes an operational meaning of LDP given by Issa \emph{et al.}~\cite{IssaWK2020} via maximal realizable leakage.
    \item We introduce a reparameterization of M$\alpha$beL, called maximal $(\alpha,\tau$)-leakage, in terms of $\alpha$ and $\tau$ with $\alpha\geq 1$ and $\beta=\frac{\alpha\tau}{\alpha+\tau-1}$, where $\tau\geq 1$. We show that this new measure, in contrast to M$\alpha$beL, is monotonic in both orders (Lemma~\ref{lemma:alpha_tau_variational}) and maintains continuity at all points $(\alpha,\tau)\in[1,\infty]\times[1,\infty]$ (Theorem~\ref{thm:continuity}). A new information-theoretic quantity arises as a special case of this leakage measure when $\alpha\to 1$. We call this measure \emph{$\tau$-Shannon leakage} and show that it recovers KL-divergence and Shannon channel capacity when $\tau=\infty$ and $\tau=1$, respectively (Proposition~\ref{prop4} and Figure~\ref{fig:alpha-tau-leakage}). 
    \item A new variant of LRDP, which we call \emph{maximal R\'{e}nyi leakage}, appears as a special case of M$\alpha$beL when $\alpha=\infty$ (Definition~\ref{def:Renyi-leakage}) that smoothly tunes between maximal leakage ($\beta=1$) and LDP ($\beta=\infty$). Finally, we show that a vector form of maximal R\'{e}nyi leakage relaxes differential privacy under Laplacian and Gaussian mechanisms (Proposition~\ref{lemma:Laplacian-mechanism-differential}).
\end{itemize}
\subsection{Related Work}
There are different approaches to quantifying information leakage. The works \cite{smith2009foundations}, \cite{braun2009quantitative}, and \cite{Alvimetal12} quantify leakage similar to maximal leakage with an adversary interested in guessing $X$ itself rather than its randomized functions. A variant of maximal leakage capturing the amount of information leaked about $X$ due to disclosing a single outcome $Y=y$ rather than focusing on the \emph{average outcome} as in maximal leakage has been studied in \cite{saeidian2022pointwise} and \cite{kurri2022operational}. Several measures have been proposed to quantify information leakage, e.g., maximal correlation~\cite{Calmonetal13,Lig18}, probability of correctly guessing~\cite{AsoodehDL2017}, total variation distance~\cite{Rassouli20}, mutual information ~\cite{Shannon49,PrabhakaranR07,TyagiNG11,CalmonF12,SankarRP13,asoodeh2015maximal,Wangetal16,Kameletal19}.

 The notion of DP is known to be very strict and has limited applicability~\cite{McSherry10,ComasD13}. Approximate differential privacy is  proposed as a relaxation of DP to allow data releases with higher utility~\cite{dwork2006our}. Resorting to the fact that composition with RDP has a simple linear form compared to DP, the authors of \cite{abadi2016deep} developed a method called Moments Accountant (MA) where the privacy guarantees are obtained first in terms of RDP before translating them to those of DP.  The shuffle model for differential privacy, where a shuffler randomly permutes the (randomized) data of all the users before forwarding them to the (untrusted) server, is introduced in \cite{bittau2017prochlo} and \cite{cheu2019distributed}.  The authors of \cite{Feldman21} and \cite{girgis2021renyi} obtained privacy gurantees with local randomized mechanisms for approximate DP and RDP, respectively. The role of interactivity in LDP is studied by Joseph~\emph{et al.}~\cite{joseph2019role}. Though there has been a lot of work building up on (L)DP and (L)RDP in the literature, an operational interpretation to (L)RDP remained open so far which we settled by introducing M$\alpha$beL that subsumes (L)RDP as a special case for $\alpha=\beta$. For an extensive list of leakage measures see the surveys by Wagner and Eckhoff~\cite{wagner2018technical}, Bloch~\emph{et~al.}~\cite{Blochsurvey}, and Hsu~\emph{et al.}~\cite{hsuetal21}.
 
\subsection{Organization of the Paper}
The remainder of this paper is organized as follows. We review various relevant information leakage measures in Section~\ref{section:info-measures}. We introduce (conditional) M$\alpha$beL in Section~\ref{sec:Theorems-maximal alpha-beta leakage} and prove that it satisfies the axiomatic properties of a leakage measure. In Section~\ref{sec:relationship}, we show that M$\alpha$beL recovers several existing information leakage measures as special cases. We present our results on reparameterization of M$\alpha$beL in Section~\ref{sec:reparameterization}. We extend the notion of the leakage to continuous alphabets in Section~\ref{section:continuous-alphabets} and discuss its applications in Section~\ref{sec:Illustration of Results}.

\emph{Notation.} We use capital letters to denote random variables, e.g., $X$, and capital calligraphic letters to denote their corresponding alphabet, e.g., $\mathcal{X}$. We write $U-X-Y$ to denote that the random variables form a Markov chain. We use $\text{supp}(X):=\{x:P_X(x)>0\}$ to denote the support set of a discrete random variable $X$. We use $H(X)$, $I(X;Y)$, and $D(P_X\|Q_X)$ to denote entropy, mutual information, and relative entropy, respectively. Given two probability distributions $P_X$ and $Q_X$ over an alphabet $\mathcal{X}$, we write $P_X\ll Q_X$ to denote that $P_X$ is absolutely continuous with respect to $Q_X$. We also consider continuous random variables and use $f_X$ to denote the probability density function of $X$. We use $\log$ to denote the natural logarithm. Throughout the sequel, we use the terms \textit{privacy mechanisms} and \textit{conditional distributions} interchangeably. Finally, in our analyses, we employ the extended real number line, extending the real numbers to include positive and negative infinity.%~\cite{extended-real-numbers}.

\section{Overview of Existing Information Leakage Measures}\label{section:info-measures}
We review the definitions of some existing information leakage measures.
\begin{definition}[Maximal leakage~\cite{IssaWK2020}]
 Let $P_{XY}$ be a joint distribution on finite alphabet $\mathcal{X}\times \mathcal{Y}$, where $X$ and $Y$ represent the original data and the released data, respectively. The maximal leakage from $X$ to $Y$ is defined as
  \begin{align}
    \mathcal{L}(X \to Y):=
     &\sup_{U-X-Y}\log\frac{\max\limits_{P_{\hat{U}|Y}}\sum_{u,y}P_{UY}(u,y)P_{\hat{U}|Y}(u|y)}{\max\limits_{P_{\hat{U}}}\sum_uP_U(u)P_{\hat{U}}(u)},  
 \end{align}
 where $U$ represents any randomized function of $X$ that an adversary is interested in guessing and takes values in an arbitrary finite alphabet. Moreover, $\hat{U}$ is an estimator of $U$ with the same support as $U$.
\end{definition}
Liao \emph{et al.}~\cite{LiaoKS20} generalized maximal leakage by introducing a tunable leakage measure known as maximal $\alpha$-leakage.
\begin{definition}[Maximal $\alpha$-leakage~\cite{LiaoKS20}]
Given a joint distribution $P_{XY}$ on finite alphabet $\mathcal{X}\times \mathcal{Y}$, the maximal $\alpha$-leakage from $X$ to $Y$ is defined as 
\begin{align}\label{eqn:max-alpha-leakage-def}
   \nonumber &\mathcal{L}_\alpha^{\emph{max}}(X \to Y)\\&:=\frac{\alpha}{\alpha-1}
     \sup_{U-X-Y}\log\frac{\max\limits_{P_{\hat{U}|Y}}\sum_{u,y}P_{UY}(u,y)P_{\hat{U}|Y}(u|y)^{\frac{\alpha-1}{\alpha}}}{\max\limits_{P_{\hat{U}}}\sum_uP_U(u)P_{\hat{U}}(u)^{\frac{\alpha-1}{\alpha}}},
\end{align}
 for $\alpha \in (1,\infty)$ and by continuous extension of \eqref{eqn:max-alpha-leakage-def} for $\alpha=1$ and $\alpha=\infty$, where $U$ represents any randomized function of $X$ with an arbitrary finite alphabet, and $\hat{U}$ is an estimator of $U$ with the same support as $U$.
\end{definition}
Liao~\emph{et al.}~\cite{LiaoKS20} showed that 
 \begin{align}\label{eqn:maximalalpha}
     \mathcal{L}_\alpha^{\text{max}}(X\rightarrow Y)=\sup_{P_{\tilde{X}}}I_\alpha^{\text{S}}(\tilde{X};Y),
 \end{align}
 where the supremum is over all the probability distributions $P_{\tilde{X}}$ on the support of $P_X$ and $I_\alpha^\text{S}(\cdot;\cdot)$ is the Sibson mutual information of order $\alpha$~\cite{sibson1969information}. 
Maximal $\alpha$-leakage recovers Shannon channel capacity (and mutual information) and 
maximal leakage for $\alpha=1$ and $\alpha=\infty$, respectively.

Conditional versions of maximal leakage and maximal $\alpha$-leakage are also defined to quantify the leakage when the adversary has access to side-information \cite[Definition~6]{IssaWK2020}, \cite[Definition~3]{liao2019robustness}.
\begin{definition}[Local differential privacy~\cite{kasiviswanathan2011can,duchi2013local}]
Given a conditional distribution $P_{Y|X}$ with $X$ and $Y$ taking values in finite sets $\mathcal{X}$ and $\mathcal{Y}$, respectively, the local differential privacy (LDP) is defined as
\begin{align}
    \mathcal{L}^{\emph{LDP}}(X\rightarrow Y):=\max_{\substack{y\in\mathcal{Y},\\ x,x^\prime\in\mathcal{X}}}\log{\frac{P_{Y|X}(y|x)}{P_{Y|X}(y|x^\prime)}}.
\end{align}
\end{definition}
Let $x^n=(x_1,x_2,\dots,x_n)$ denote a dataset comprising $n$ points from $\mathcal{X}$. We say $x^n=(x_1,x_2,\dots,x_n)$ and $\tilde{x}^n=(\tilde{x}_1,\tilde{x}_2,\dots,\tilde{x}_n)$ are neighbouring datasets, denoted $x^n \sim \tilde{x}^n$, if the Hamming distance between them is $1$, i.e., $\sum_{i=1}^n\mathbbm{1}\{x_i\neq \tilde{x}_i\}=1$, or in other words if there exists a unique $i\in[1:n]$ such that $x_i\neq \tilde{x}_i$. The following notion of differential privacy captures the privacy incurred of a user in participating in a dataset.

\begin{definition}[Differential privacy~\cite{Dwork2006}]
Given a conditional distribution $P_{Y|X^n}$ with $X^n$ and $Y$ taking values in finite sets $\mathcal{X}^n$ and $\mathcal{Y}$, respectively, the differential privacy (DP) is defined as
\begin{align}
    \mathcal{L}^{\emph{DP}}(X\rightarrow Y):=\max_{\substack{y\in\mathcal{Y},x^n,\tilde{x}^n\in\mathcal{X}^n:\\x^n\sim\tilde{x}^n}}\log\frac{P_{Y|X^n}(y|x^n)}{P_{Y|X^n}(y|\tilde{x}^n)}.
\end{align}

\end{definition}
\begin{definition}[Maximal realizable leakage~\cite{IssaWK2020}] Given a joint distribution $P_{XY}$ on finite alphabets $\mathcal{X}$ and $\mathcal{Y}$, the maximal realizable leakage from $X$ to $Y$ is defined as
\begin{align}\label{eq:max-realizable-leakage}
    \mathcal{L}^r(X\to Y)=\displaystyle\sup_{U-X-Y}\  \log\frac{\displaystyle\max_y \; \max_u \;P_{U|Y}(u|y)}{\displaystyle\max_u P_U(u)}
\end{align}
where $U$ takes values in an arbitrary finite alphabet.
\end{definition}

In~\cite{IssaWK2020}, it has been shown that 
\begin{align}\label{eqn:relation-LDP-realizable}
    \mathcal{L}^{\text{LDP}}(X\to Y)=\sup_{P_X} \ 
    \mathcal{L}^r(X \to Y),
\end{align}
where the supremum is over all probability distribution $P_X$.
A natural relaxation of DP is introduced by Mironov~\cite{mironov2017renyi} based on the R\'{e}nyi divergence to allow stronger results for composition.
\begin{definition}[R\'{e}nyi differential privacy~\cite{mironov2017renyi}]
Given a conditional distribution $P_{Y|X^n}$ with $X^n$ and $Y$ taking values in finite sets $\mathcal{X}^n$ and $\mathcal{Y}$, respectively, the R\'{e}nyi differential privacy (RDP) of order $\alpha$ is defined as
\begin{align}
    \nonumber&\mathcal{L}^{\emph{RDP}}_{\alpha}(X\rightarrow Y)\\&:=\max_{\substack{x^n,\tilde{x}^n\in\mathcal{X}^n:\\x^n\sim\tilde{x}^n}}D_\alpha(P_{Y|X^n=x^n}\|P_{Y|X^n=\tilde{x}^n})\\
    &=\max_{\substack{x^n,\tilde{x}^n\in\mathcal{X}^n:\\x^n\sim\tilde{x}^n}}\frac{1}{\alpha-1}
 \log \sum_y P_{Y|X^n}(y|\tilde{x}^n)^{1-\alpha} P_{Y|X^n}(y|x^n)^\alpha.
\end{align}
 
\end{definition}
We may define local R\'{e}nyi differential privacy as a generalization of local differential privacy based on the R\'{e}nyi divergence~\cite{renyi1961measures}.
\begin{definition}[Local R\'{e}nyi differential privacy]
Given a conditional distribution $P_{Y|X}$ with $X$ and $Y$ taking values in finite sets $\mathcal{X}$ and $\mathcal{Y}$, respectively, the local R\'{e}nyi differential privacy (LRDP) of order $\alpha$ is defined as
\begin{align}
    \nonumber&\mathcal{L}^{\emph{LRDP}}_{\alpha}(X\rightarrow Y)
    \\&:=\max_{x,x^\prime\in\mathcal{X}}D_\alpha(P_{Y|X=x}\|P_{Y|X=x^\prime})\\
    &=\max_{x,x^\prime\in\mathcal{X}}\frac{1}{\alpha-1}
 \log \sum_y P_{Y|X}(y|x')^{1-\alpha} P_{Y|X}(y|x)^\alpha.
\end{align}
\end{definition}
As $\alpha\rightarrow \infty$, it can be verified using L'H\^{o}pital's rule that LRDP and RDP simplify to LDP and DP, respectively.

\section{A Unified Measure of Information Leakage}\label{sec:Theorems-maximal alpha-beta leakage}
In this section, we introduce a unified leakage measure, called maximal $(\alpha,\beta)$-leakage (M$\alpha$beL). The new leakage measure includes maximal leakage, maximal $\alpha$-leakage, local R\'{e}nyi differential privacy and local differential privacy as its special cases. As our unified measure includes these leakage measures, its definition naturally inherits some complexity, mirroring that of the definitions of these existing measures. However, in Theorem~\ref{theorem:alpha-beta-leakage}, we shed light on its inherent complexity and significantly simplify it, thereby allowing us to relate to a large class of privacy measures. Furthermore, to recover R\'{e}nyi differential privacy and differential privacy, we introduce conditional M$\alpha$beL,
and provide a simplified form for it in Theorem~\ref{theorem:conditional-alpha-beta-leakage}.
\subsection{M$\alpha$beL}
Before introducing our most general unified leakage measure, we start with a measure which smoothly transitions between maximal leakage and LDP. The following definition makes use of the similarity between the definitions of maximal leakage and maximal realizable leakage, and the fact that the latter is related to LDP via \eqref{eqn:relation-LDP-realizable}.
% Motivated by the definitions of maximal leakage and maximal realizable leakage, which is related to LDP via \eqref{eqn:relation-LDP-realizable}, the following measure takes the first step toward the unification by connecting maximal leakage to local differential privacy.
\begin{definition}[Maximal R\'{e}nyi leakage of order $\beta$]\label{def:Renyi-leakage}
Given a conditional distribution $P_{Y|X}$ (or $f_{Y|X}$) on alphabets $\mathcal{X}$ and $\mathcal{Y}$, maximal R\'{e}nyi leakage of order $\beta$ from $X$ to $Y$ for $\beta \in [1,\infty)$ is defined as
\begin{align}\label{def:varint-of-LRDP}
   \nonumber&\mathcal{L}_{\beta}(X\to Y):=\sup_{P_{X}}\ \sup_{U\to X\to Y}\\&  \
 \log \frac{\displaystyle \max_{P_{\hat{U}|Y}} \left[\mathbb{E}_Y\left[ \left(\sum_u P_{U|Y}(u|Y) P_{\hat{U}|Y}(u|Y)\right)^{\beta}\right]\right]^{1/\beta}}{\displaystyle \max_{P_{\hat{U}}} \sum_u P_U(u)P_{\hat{U}}(u)}.
\end{align}
where $\hat{U}$ represents an estimator taking values from the same arbitrary finite alphabet as $U$. It is defined by continuous extension for $\beta \to \infty$.
\end{definition}

There are two important aspects to this definition. First, we introduce a parameter $\beta$ in the numerator in \eqref{def:varint-of-LRDP} thereby allowing a continuous transition from a simple average over $y$ (at $\beta=1$) to a maximum over $y$ (at $\beta \to \infty$). Thus, ignoring for the moment the supremum over $P_X$, when $\beta\to\infty$ we recover maximal realizable leakage, and at $\beta=1$ we recover maximal leakage. Secondly, by introducing the supremum over $P_X$, we do not change the value at $\beta=1$, since maximal leakage depends on the distribution of $X$ only through its support, and at $\beta\to\infty$ we recover LDP due to \eqref{eqn:relation-LDP-realizable}.

As a next step, we combine the definition of maximal R\'enyi leakage with that of maximal $\alpha$-leakage, keeping both as special cases, and including both $\alpha$ and $\beta$ as independent parameters. Remarkably, this yields our most general unified measure which also recovers LDP and LRDP.
\begin{definition}[Maximal $(\alpha,\beta)$-leakage (M$\alpha$beL)]\label{def:maximal-alpha-beta-leakage}  Given a conditional distribution $P_{Y|X}$ (or $f_{Y|X}$) on supports $\mathcal{X}$ and $\mathcal{Y}$, the maximal $(\alpha,\beta)$-leakage from $X$ to $Y$ for $(\alpha,\beta)\in(1,\infty)\times[1,\infty)$ is defined as
\begin{align}\label{eqn:alpha,beta-leakage-original-def}
   \nonumber&\mathcal{L}_{\alpha,\beta}(X\to Y):=\sup_{P_X}\ \sup_{U\to X\to Y}\frac{\alpha}{\alpha-1}\\&  \
 \log \frac{\displaystyle \max_{P_{\hat{U}|Y}} \left[\mathbb{E}_Y\left[\left(\sum_u P_{U|Y}(u|Y) P_{\hat{U}|Y}(u|Y)^{\frac{\alpha-1}{\alpha}}\right)^{\beta}\right]\right]^{1/\beta}}{\displaystyle \max_{P_{\hat{U}}} \sum_u P_U(u)P_{\hat{U}}(u)^{\frac{\alpha-1}{\alpha}}}.
\end{align}
where $\hat{U}$ represents an estimator taking values from the same arbitrary finite alphabet as $U$. M$\alpha$beL is defined by its continuous extension for $(\alpha,\beta)\in\{1,\infty\}\times [1,\infty)\setminus\{(1,1)\}$ and $(\alpha,\beta)\in(1,\infty)\times \{\infty\}$. It is also defined by $\displaystyle\lim_{\beta\to \infty}\lim_{\alpha\to 1}\mathcal{L}_{\alpha,\beta}(X\to Y)$ and  $\displaystyle\lim_{\beta\to \infty}\lim_{\alpha\to \infty}\mathcal{L}_{\alpha,\beta}(X\to Y)$ for $(1,\infty)$ and $(\infty,\infty)$, respectively.
\end{definition}
We remark that the definition of M$\alpha$beL in \eqref{eqn:alpha,beta-leakage-original-def} recovers the definition of maximal $\alpha$-leakage from \eqref{eqn:max-alpha-leakage-def} when $\beta=1$. While at the outset this simplification does not appear to be the same as that of maximal $\alpha$-leakage in \eqref{eqn:max-alpha-leakage-def} (i.e., the latter does not include a supremum over $P_X$), maximal $\alpha$-leakage depends on the distribution of $X$ only through its support (see \eqref{eqn:maximalalpha}), and therefore, including the supremum over $P_X$ does not change its value.

We also observe that the definition of M$\alpha$beL specializes to the definition of maximal R\'{e}nyi leakage of order $\beta$ in \eqref{def:varint-of-LRDP} for $\alpha=\infty$.

In the following theorem, we present a simplification of the expression of M$\alpha$beL in \eqref{eqn:alpha,beta-leakage-original-def}. As a special case of $\alpha\to \infty$, it also includes a simplified form for the maximal R\'enyi leakage of order $\beta$ introduced in Definition \ref{def:Renyi-leakage}.

\begin{theorem}\label{theorem:alpha-beta-leakage}
Let $X$ and $Y$ take values from finite supports $\mathcal{X}$ and $\mathcal{Y}$, respectively. For $(\alpha,\beta)\in(1,\infty)\times [1,\infty)$, M$\alpha$beL defined in \eqref{eqn:alpha,beta-leakage-original-def} simplifies to
\begin{align}\label{eqn:thm-alpha-beta-leakage}
\nonumber&\mathcal{L}_{\alpha,\beta}(X\to Y)
=\max_{x'\in \mathcal{X}} \  \sup_{P_{\Tilde{X}}}\frac{\alpha}{(\alpha-1)\beta} \\& \  \log 
\sum_{y\in \mathcal{Y}} P_{Y|X}(y|x')^{1-\beta} \left(\sum_{x\in \mathcal{X}} P_{\Tilde{X}}(x) P_{Y|X}(y|x)^\alpha \right)^{\beta/\alpha},
\end{align}
where $P_{\Tilde{X}}$ is a probability distribution on the support of $P_{X}$. For $\alpha\to \infty$, since Definition \ref{def:maximal-alpha-beta-leakage} simplifies to the definition of maximal R\'{e}nyi leakage of order $\beta$ in \eqref{def:varint-of-LRDP}, \eqref{eqn:thm-alpha-beta-leakage} simplifies maximal R\'enyi leakage of order $\beta$ to
    \begin{align}
\nonumber&\mathcal{L}_{\beta}(X\to Y)\\&=\max_{x'\in \mathcal{X}} \ \frac{1}{\beta}  \log
\sum_{y\in \mathcal{Y}} P_{Y|X}(y|x')^{1-\beta} \max_{x\in \mathcal{X}} P_{Y|X}(y|x)^\beta.\label{eqn:unconditional-infty-beta-step2}
\end{align}
\end{theorem}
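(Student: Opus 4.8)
The plan is to evaluate the two maximizations over estimators inside \eqref{eqn:alpha,beta-leakage-original-def} in closed form, and then to collapse the supremum over $U\to X\to Y$ using a data-processing-type argument that reduces the optimal $U$ to a deterministic relabeling of $X$ (equivalently, to $U=X$ with an adjusted input distribution). First I would handle the denominator: $\max_{P_{\hat U}}\sum_u P_U(u)P_{\hat U}(u)^{(\alpha-1)/\alpha}$ is a standard concave optimization whose optimizer is $P_{\hat U}(u)\propto P_U(u)^{\alpha}$ (a Hölder/Lagrange computation), giving the value $\bigl(\sum_u P_U(u)^\alpha\bigr)^{1/\alpha}=\|P_U\|_\alpha$. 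For the numerator, I would first fix $y$ and maximize $\sum_u P_{U|Y}(u|y)P_{\hat U|Y}(u|y)^{(\alpha-1)/\alpha}$ over $P_{\hat U|Y}(\cdot|y)$, getting $\bigl(\sum_u P_{U|Y}(u|y)^{\alpha}\bigr)^{1/\alpha}$ for each $y$; the crucial point is that this inner maximization decouples across $y$, so the single outer estimator $P_{\hat U|Y}$ can be chosen to be simultaneously optimal for every $y$. Hence the numerator becomes $\bigl[\mathbb E_Y[(\sum_u P_{U|Y}(u|Y)^\alpha)^{\beta/\alpha}]\bigr]^{1/\beta}$, and \eqref{eqn:alpha,beta-leakage-original-def} turns into
\begin{align}
\mathcal{L}_{\alpha,\beta}(X\to Y)=\sup_{P_X}\sup_{U\to X\to Y}\frac{\alpha}{(\alpha-1)\beta}\log\frac{\mathbb E_Y\bigl[\bigl(\sum_u P_{U|Y}(u|Y)^\alpha\bigr)^{\beta/\alpha}\bigr]}{\bigl(\sum_u P_U(u)^\alpha\bigr)^{\beta/\alpha}}.\nonumber
\end{align}

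Next I would carry out the reduction over $U$. Writing $P_{U|Y}(u|y)=\sum_x P_{U|X}(u|x)P_{X|Y}(x|y)$ and $P_U(u)=\sum_x P_{U|X}(u|x)P_X(x)$, I want to show the ratio above is maximized (for fixed $P_X$, then optimized over $P_X$) by taking $U$ a deterministic function of $X$, and in fact $U=X$, so that the expression becomes
\begin{align}
\sup_{P_{\tilde X}}\frac{\alpha}{(\alpha-1)\beta}\log\sum_y\Bigl(\sum_x P_{\tilde X}(x)P_{Y|X}(y|x)^\alpha\Bigr)^{\beta/\alpha}P_Y(y)^{1-\beta/\alpha}\cdot(\text{normalization}),\nonumber
\end{align}
which after absorbing $P_Y(y)=\sum_x P_{\tilde X}(x)P_{Y|X}(y|x)$ and re-indexing is \emph{almost} \eqref{eqn:thm-alpha-beta-leakage} — except that \eqref{eqn:thm-alpha-beta-leakage} has the extra outer $\max_{x'}$ and a factor $P_{Y|X}(y|x')^{1-\beta}$ rather than $P_Y(y)^{1-\beta/\alpha}$. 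That discrepancy is the real content of the theorem: the $\max_{x'}$ and the $P_{Y|X}(y|x')^{1-\beta}$ arise from pushing the optimization over $P_X$ (and over $U$) to an extreme where the "reference" distribution in the denominator concentrates on a single column $P_{Y|X}(\cdot|x')$. I would argue this by a limiting/support argument: one may take $P_X$ to put vanishing mass on a point $x'$ while keeping the conditional $P_{\tilde X}$ on the support free, and the denominator $\|P_U\|_\alpha$ together with the $P_Y$-weighting in the numerator conspire (via the homogeneity of the log-ratio) to leave a clean expression indexed by the worst $x'$. The analogue of this phenomenon is exactly what happens in passing from maximal $\alpha$-leakage to LDP in the $\beta\to\infty$ limit, and from maximal realizable leakage to LDP via \eqref{eqn:relation-LDP-realizable}; I would model the argument on those.

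For the $\alpha\to\infty$ specialization giving \eqref{eqn:unconditional-infty-beta-step2}, I would simply take the limit in \eqref{eqn:thm-alpha-beta-leakage}: $\frac{\alpha}{\alpha-1}\to 1$, and $\bigl(\sum_x P_{\tilde X}(x)P_{Y|X}(y|x)^\alpha\bigr)^{1/\alpha}\to\max_{x\in\mathrm{supp}(P_{\tilde X})}P_{Y|X}(y|x)$ (an $\ell_\alpha\to\ell_\infty$ norm limit), so the supremum over $P_{\tilde X}$ collapses — any $P_{\tilde X}$ of full support achieves $\max_x P_{Y|X}(y|x)$ for every $y$ — and one is left with $\max_{x'}\frac1\beta\log\sum_y P_{Y|X}(y|x')^{1-\beta}\max_x P_{Y|X}(y|x)^\beta$, which is \eqref{eqn:unconditional-infty-beta-step2}. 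I should be mildly careful that the interchange of the $\alpha\to\infty$ limit with the suprema is legitimate; continuity of the relevant $\ell_\alpha$ quantities in $\alpha$ on the compact simplex handles this, and this is anyway consistent with the "continuous extension" clause in Definition~\ref{def:maximal-alpha-beta-leakage}.

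The main obstacle I anticipate is the $U$-reduction step together with the emergence of $\max_{x'}$ and the $P_{Y|X}(y|x')^{1-\beta}$ factor: showing rigorously that the joint supremum over $(P_X, U)$ of the decoupled ratio equals the stated single-column expression requires care, since $\beta$ can be less than $\alpha$ (so the exponent $1-\beta/\alpha$ is positive) and one must verify the optimizer genuinely drives the reference column to an extreme point rather than a nondegenerate mixture. I would isolate this as a lemma: for fixed $P_X$ with support $\mathcal S$, $\sup_{U\to X\to Y}(\text{ratio})$ is attained at $U=X$, reducing to an optimization over $P_{\tilde X}$ on $\mathcal S$; then the outer $\sup_{P_X}$ amounts to choosing $\mathcal S$ and the denominator's effective reference, which yields $\max_{x'}$. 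Everything else (the two estimator maximizations, the $\alpha\to\infty$ limit) is routine Hölder-inequality bookkeeping.
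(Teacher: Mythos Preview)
Your treatment of the two estimator maximizations is correct and matches the paper. The gap is in the $U$-reduction.

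You claim the supremum over $U\to X\to Y$ is attained at ``$U$ a deterministic function of $X$, and in fact $U=X$.'' This is the wrong direction. With $U=X$ the ratio (after your simplification) becomes
\[
\sum_y P_Y(y)^{1-\beta}\left(\frac{\sum_x P_X(x)^\alpha P_{Y|X}(y|x)^\alpha}{\sum_{x''} P_X(x'')^\alpha}\right)^{\beta/\alpha},
\]
so the induced $P_{\tilde X}$ is forced to be the $\alpha$-tilt $P_{\tilde X}(x)\propto P_X(x)^\alpha$, and since $P_Y$ is also determined by $P_X$, the pair $(P_Y,P_{\tilde X})$ is \emph{coupled}. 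You cannot then ``take $P_X$ to put vanishing mass on a point $x'$ while keeping $P_{\tilde X}$ on the support free''; those two requirements are contradictory under $U=X$. (Also note the exponent on $P_Y$ is $1-\beta$, not $1-\beta/\alpha$.)

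The paper's device is the opposite refinement: construct $U$ with $H(X|U)=0$ (so $X$ is a function of $U$, not vice versa) by ``shattering'' each atom $x$ into a uniform block $\mathcal U_x$ of size $|\mathcal U_x|$. Then $P_{Y|U}(y|u)=P_{Y|X}(y|x)$ for $u\in\mathcal U_x$ (so the Jensen upper bound is tight), while the induced $P_{\tilde X}(x)\propto |\mathcal U_x|^{1-\alpha}P_X(x)^\alpha$ can be tuned to \emph{any} distribution by choosing the block sizes, \emph{without changing $P_X$ and hence $P_Y$}. This decoupling is the crux of the lower bound and is exactly what your $U=X$ proposal cannot provide. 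Once $P_X$ and $P_{\tilde X}$ are decoupled, the emergence of $\max_{x'}$ is a one-line convexity argument: $P_X$ affects only $P_Y$, the expression inside the log is convex in $P_Y$, so the sup over $P_X$ lands on an extreme point $P_{Y|X=x'}$. Your ``limiting/support'' heuristic is unnecessary (and, as written, points the mass the wrong way).

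Your $\alpha\to\infty$ sketch is essentially right; the paper likewise sandwiches by $\max_x P_{Y|X}(y|x)^\alpha$ from above and by a uniform $P_{\tilde X}$ (paying a vanishing $|\mathcal X|^{-\beta/\alpha}$ factor) from below.
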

A detailed proof for Theorem~\ref{theorem:alpha-beta-leakage} is given in Appendix~\ref{proof:thm-alpha-beta-leakage}. 

For $\beta\le\alpha$, the quantity inside the log in \eqref{eqn:thm-alpha-beta-leakage} is concave in $P_{\tilde{X}}$; thus the supremum over $P_{\tilde{X}}$ can be efficiently solved using convex optimization techniques. As we will show in Section~\ref{sec:relationship}, for $\beta\ge\alpha$, the supremum over $P_{\tilde{X}}$ can be replaced by a maximum over $x\in\mathcal{X}$. Thus, in either case the quantity in \eqref{eqn:thm-alpha-beta-leakage} can be efficiently computed for finite alphabets.

\begin{remark}
  To achieve a finite value for M$\alpha$beL, it is necessary that $P_Y\ll P_{Y|X=x'}$ for each $x'\in \mathcal{X}$. Failure to satisfy this condition leads to infinite leakage when $\beta > 1$. 
\end{remark}

Like other leakage measures, M$\alpha$beL satisfies several basic properties such as non-negativity, data processing inequalities and additivity, as shown in the following theorem.
\begin{theorem}\label{theorem:properties}
Let $X$ and $Y$ take values from finite alphabets $\mathcal{X}$ and $\mathcal{Y}$, respectively. For $\alpha\in (1,\infty)$ and $\beta \in [1,\infty)$, M$\alpha$beL 
\begin{enumerate}
 \item is monotonically non-decreasing in $\beta$ for a fixed $\alpha$;
  \item satisfies data processing inequalities, i.e., for the Markov chain $X -Y - Z$:
    \begin{subequations}
            \begin{equation}\label{data-processing2}
        \mathcal{L}_{\alpha,\beta}(X\to Z) \leq \mathcal{L}_{\alpha,\beta}(X \to Y)
    \end{equation}
    \begin{equation}\label{data-processing1}
        \mathcal{L}_{\alpha,\beta}(X\to Z) \leq \mathcal{L}_{\alpha,\beta}(Y \to Z).
        \end{equation}
    \end{subequations}
     \item is non-negative, i.e.,
    \begin{align}
        \mathcal{L}_{\alpha,\beta}(X\to Y) \geq 0
    \end{align}
with equality if and only if $X$ and $Y$ are independent.
\item satisfies additivity: i.e., if $(X_i,Y_i)$ for $i=1,2,\ldots,n$ are independent, then
\end{enumerate}
\begin{align}\label{eqn:additivity}
  \mathcal{L}_{\alpha,\beta}(X_1,\ldots,X_n \to Y_1,\ldots,Y_n)=\sum_{i=1}^n\mathcal{L}_{\alpha,\beta}(X_i \to Y_i).  
\end{align}

\end{theorem}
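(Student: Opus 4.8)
The plan is to work directly from the simplified expression \eqref{eqn:thm-alpha-beta-leakage}, since each of the four properties reduces to a statement about the quantity
\[
g_{\alpha,\beta}(P_{Y|X},P_{\tilde X},x') := \sum_y P_{Y|X}(y|x')^{1-\beta}\Bigl(\sum_x P_{\tilde X}(x)P_{Y|X}(y|x)^\alpha\Bigr)^{\beta/\alpha},
\]
of which $\mathcal{L}_{\alpha,\beta}$ is a normalized logarithm maximized over $x'$ and $P_{\tilde X}$. For \textbf{monotonicity in $\beta$}, I would fix $\alpha$, $x'$ and $P_{\tilde X}$ and view $\tfrac{\alpha}{(\alpha-1)\beta}\log g_{\alpha,\beta}$ as a function of $\beta$; rewriting the summand as $\exp\bigl((1-\beta)\log P_{Y|X}(y|x') + \tfrac{\beta}{\alpha}\log\sum_x P_{\tilde X}(x)P_{Y|X}(y|x)^\alpha\bigr)$ exhibits it (after dividing by $P_{Y|X}(y|x')$) as a Rényi-type divergence of order $\beta$ between $P_{Y|X=x'}$ and the ``tilted'' measure $\sum_x P_{\tilde X}(x)P_{Y|X}(y|x)^\alpha / P_{Y|X}(y|x')^{\alpha-1}$ (up to normalization), and monotonicity of Rényi divergence in its order then gives the claim; the supremum over $P_{\tilde X}$ and max over $x'$ preserve monotonicity. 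Alternatively one can differentiate in $\beta$ and recognize the sign condition as a Jensen/log-sum inequality.

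For the \textbf{data-processing inequalities}, inequality \eqref{data-processing2} follows most cleanly from the original operational definition \eqref{eqn:alpha,beta-leakage-original-def}: if $X-Y-Z$ then any $U-X-Z$ also satisfies $U-X-Y$, and any estimator of $U$ from $Z$ can be simulated from $Y$ (since $Z$ is a stochastic function of $Y$), so the supremum defining $\mathcal{L}_{\alpha,\beta}(X\to Z)$ is over a smaller feasible set — hence no larger. For \eqref{data-processing1} I would instead use the variational form: because $\mathcal{L}_{\alpha,\beta}$ is, for $\beta\le\alpha$, a supremum over $P_{\tilde X}$ of a quantity built from the channel $P_{Z|X} = \sum_y P_{Y|X}P_{Z|Y}$, and the channel $X\to Z$ factors through $X\to Y$, a convexity/monotonicity argument (a data-processing property of the underlying Sibson-type information measure, analogous to the Max-$\alpha$L case in \cite{LiaoKS20}) yields the bound; for $\beta\ge\alpha$ the same goes through with the supremum replaced by a max over $x$. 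For \textbf{non-negativity}, taking $P_{\tilde X}=\delta_{x'}$ makes the inner sum equal $\sum_y P_{Y|X}(y|x')^{1-\beta}P_{Y|X}(y|x')^{\beta}=1$, so $g_{\alpha,\beta}\ge 1$ after the supremum and $\log\ge 0$; equality forces, for every $x'$ and every $P_{\tilde X}$, the Rényi divergence above to vanish, which by the equality conditions for Rényi divergence forces $P_{Y|X=x}=P_{Y|X=x'}$ for all $x,x'$ in the support, i.e.\ independence (and conversely independence gives a constant integrand, hence value $0$).

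For \textbf{additivity}, I would again use \eqref{eqn:thm-alpha-beta-leakage}. With independent blocks, $P_{Y^n|X^n}(y^n|x^n)=\prod_i P_{Y_i|X_i}(y_i|x_i)$; the outer max over $x'^n$ and the normalizing exponents are multiplicative/additive in the obvious way, so the only real content is that the supremum over joint $P_{\tilde X^n}$ is achieved by a product distribution $\prod_i P_{\tilde X_i}$. This I expect to be the main obstacle: the inner expression is not a plain product in $P_{\tilde X^n}$ because of the nested exponent $\beta/\alpha$ acting on a sum over $x^n$. The standard route is to show ``$\le$'' by a tensorization/Minkowski-type inequality (the $\beta/\alpha \le 1$ versus $\ge 1$ cases being handled by Minkowski's integral inequality in the appropriate direction, exactly as in the tensorization proof of Sibson mutual information), while ``$\ge$'' is immediate by restricting to product $P_{\tilde X^n}$; combining the single-letter optimizers then gives equality. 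I would first prove the two-block case $n=2$ carefully and then induct. The sub-case $\beta=\alpha$ should be checked to be consistent with the known additivity of LRDP, and $\alpha\to\infty$ with that of maximal Rényi leakage via \eqref{eqn:unconditional-infty-beta-step2}, as sanity checks.
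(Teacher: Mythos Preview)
Your plan for monotonicity, non-negativity, and additivity is essentially aligned with the paper's approach (and your $P_{\tilde X}=\delta_{x'}$ trick for non-negativity is in fact cleaner than the paper's two applications of Jensen). The real issue is in the data-processing inequalities, where you have the two arguments \emph{swapped} relative to the paper, and one of them does not go through as you describe.

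For the post-processing inequality \eqref{data-processing2}, your operational ``simulation'' argument is not enough. You claim that since $Z$ is a stochastic function of $Y$, any estimator $P_{\hat U|Z}$ can be simulated from $Y$, so the feasible set for $\mathcal{L}_{\alpha,\beta}(X\to Z)$ is smaller. But the two objectives are not over the same expectation: the numerator for $Y$ is a $\beta$-norm under $P_Y$, while for $Z$ it is a $\beta$-norm under $P_Z$. Replacing $P_{\hat U|Z}$ by its $Y$-simulated version $P_{\hat U|Y}(\cdot|y)=\sum_z P_{Z|Y}(z|y)P_{\hat U|Z}(\cdot|z)$ does \emph{not} yield $\bigl[\mathbb{E}_Y(\cdot)^\beta\bigr]^{1/\beta}\ge \bigl[\mathbb{E}_Z(\cdot)^\beta\bigr]^{1/\beta}$ by a feasibility argument alone; for $\beta>1$ the relevant Jensen step goes the wrong way. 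What actually proves \eqref{data-processing2} is a two-step inequality on the simplified form \eqref{result thm1}: one Jensen step using convexity of $t\mapsto t^\beta$, and then the Minkowski/triangle inequality for the $\alpha$-norm to pass $P_{Z|Y}$ through $\bigl(\sum_x P_{\tilde X}(x)P_{Y|X}(y|x)^\alpha\bigr)^{1/\alpha}$. This is precisely what the paper does, and it is not a ``smaller feasible set'' argument.

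Conversely, the linkage inequality \eqref{data-processing1} is the one that \emph{does} follow cleanly from the operational definition: if $X-Y-Z$, then any $U$ with $U-X-Z$ also satisfies $U-X-Y-Z$ and hence $U-Y-Z$ with the same $P_{UZ}$, and the supremum over $P_Y$ dominates the one induced by ranging over $P_X$. Your proposed ``Sibson-type DPI on the variational form'' for \eqref{data-processing1} is much vaguer and would require essentially re-deriving the Minkowski/Jensen machinery on the input side. In short: use the operational definition for \eqref{data-processing1} and the simplified expression plus Minkowski/Jensen for \eqref{data-processing2}, not the other way around.
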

A detailed proof of Theorem~\ref{theorem:properties} is in Appendix~\ref{proof:thm-properties}.
\begin{remark}\label{re:cont-mabel}
M$\alpha$beL is continuous at $(\alpha,\beta)\in [1,\infty]\times [1,\infty]$, with the exception of the point $(\alpha,\beta)=(1,1)$. The proof of this property relies on a reparameterization of M$\alpha$beL and is covered in detail in Section~\ref{sec:reparameterization}. 
\end{remark}
\subsection{Conditional M$\alpha$beL}
Analogously to the connection between M$\alpha$beL and maximal leakage, we define conditional M$\alpha$beL based on conditional maximal leakage as follows.
\begin{definition}[Conditional M$\alpha$beL] Let $Z$ be the knowledge of an adversary or third-party about $(X,Y)$. Given a conditional distribution $P_{Y|X,Z}$ (or $f_{Y|X,Z}$) and a marginal distribution $P_Z$ (or $f_Z$) on supports $\mathcal{X},\mathcal{Y}$ and $\mathcal{Z}$, the conditional M$\alpha$beL from $X$ to $Y$ given $Z$ for $(\alpha,\beta)\in(1,\infty)\times[1,\infty)$ is defined as
\begin{align}\label{def:conditional_alpha-beta_leakage}
\nonumber&\mathcal{L}_{\alpha,\beta}(X\to Y|Z):=\sup_{P_{X|Z}} \ \sup_{U\to X\to Y|Z} \frac{\alpha}{(\alpha-1)\beta} \\&  \log \ \frac{\displaystyle \max_{P_{\hat{U}|Z,Y}} \mathbb{E} \left(\sum_u P_{U|Z,Y}(u|Z,Y) P_{\hat{U}|Z,Y}(u|Z,Y)^{\frac{\alpha-1}{\alpha}}\right)^{\beta}}{\displaystyle \max_{P_{\hat{U}|Z}}\ \mathbb{E} \left( \sum_u P_{U|Z}(u|Z) P_{\hat{U}|Z}(u|Z)^{\frac{\alpha-1}{\alpha}}\right)^\beta }.
\end{align}
Here $\hat{U}$ represents an estimator taking values from the same arbitrary finite alphabet as $U$, and the expression $U-X-Y|Z$ represents the conditional Markov chain constraint where
\begin{align*}
    \nonumber &P_{UXY|Z}(u,x,y|z)\\&=P_{X|Z}(x|z) \ P_{U|XZ}(u|x,z) \  P_{Y|XZ}(y|x,z).
\end{align*}
Thus, the conditional Markov chain $U-X-Y|Z$ is equivalent to the Markov chain $U-(X,Z)-Y$. The continuous extensions can be defined analogously to Definition~\ref{def:maximal-alpha-beta-leakage}. 
\end{definition}
For a similar reason to that stated below Definition~\ref{def:maximal-alpha-beta-leakage}, the definition of conditional M$\alpha$beL in \eqref{def:conditional_alpha-beta_leakage} recovers the definition of conditional maximal $\alpha$-leakage (and thus conditional maximal leakage)  for $\beta=1$. The following theorem simplifies the expression of conditional M$\alpha$beL.
\begin{theorem}\label{theorem:conditional-alpha-beta-leakage}
Let $X$, $Y$, and $Z$ take values from finite supports $\mathcal{X}$, $\mathcal{Y}$, and $\mathcal{Z}$, respectively. For $(\alpha,\beta)\in(1,\infty)\times[1,\infty)$, conditional M$\alpha$beL defined in \eqref{def:conditional_alpha-beta_leakage} simplifies to
\begin{align}\label{eqn:thm-conditional-alpha-beta-leakage}
\nonumber&\mathcal{L}_{\alpha,\beta}(X \to Y|Z)\\\nonumber&=\max_{z\in \mathcal{Z}} \ \max_{x'\in \mathcal{X}} \ \sup_{P_{\Tilde{X}|Z}} \
  \frac{\alpha}{(\alpha-1)\beta}    \log \Bigg[ \displaystyle \sum_{y\in \mathcal{Y}} P_{Y|X,Z}(y|x',z)^{1-\beta}\\& \ \times  \left( \displaystyle\sum_{x\in \mathcal{X}} P_{Y|X,Z}(y|x,z)^\alpha P_{\Tilde{X}|Z}(x|z)\right)^{\frac{\beta}{\alpha}}\Bigg]
    \end{align}
where $P_{\Tilde{X}|Z}$ is a distribution on the support of $P_{X|Z}$.
\end{theorem}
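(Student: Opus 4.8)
The plan is to reduce the conditional statement to the unconditional one proved in Theorem~\ref{theorem:alpha-beta-leakage}. The key observation, already recorded in the definition, is that the conditional Markov chain $U-X-Y\,|\,Z$ is equivalent to the ordinary Markov chain $U-(X,Z)-Y$. So I would first rewrite the numerator and denominator of \eqref{def:conditional_alpha-beta_leakage} by conditioning on $Z=z$. Writing $A(z)$ for the inner optimization appearing in the numerator of the unconditional leakage $\mathcal{L}_{\alpha,\beta}$ computed for the channel $P_{Y|X,Z=z}$ and source $P_{X|Z=z}$, and $B(z)$ for the corresponding denominator (the guessing term involving only $P_{U|Z=z}$), the definition says
\begin{align*}
\mathcal{L}_{\alpha,\beta}(X\to Y|Z)=\sup_{P_{X|Z}}\ \sup_{U}\ \frac{\alpha}{(\alpha-1)\beta}\log\frac{\mathbb{E}_Z[A(Z)^{?}]}{\mathbb{E}_Z[B(Z)]},
\end{align*}
so the first step is to disentangle the single expectation over $Z$ in both numerator and denominator into a per-$z$ quantity, and argue that the outer supremum can be pushed inside to become a maximum over $z\in\mathrm{supp}(Z)$.

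The mechanism for that push is a standard ``pointwise optimization'' argument: the adversary's optimal estimators $P_{\hat U|Z,Y}$ and $P_{\hat U|Z}$ decompose across $z$, so $\max_{P_{\hat U|Z,Y}}\mathbb{E}_{Z,Y}[\,\cdot\,]=\mathbb{E}_Z\big[\max_{P_{\hat U|Y}}\mathbb{E}_{Y|Z}[\,\cdot\,]\big]$ and similarly for the denominator; then, because the ratio of two nonnegative $Z$-averages is maximized (over the choice of $U$ and $P_{X|Z}$, which may both depend arbitrarily on $z$) by concentrating all the ``mass'' on the single $z$ achieving the largest per-$z$ ratio, the $\mathbb{E}_Z$'s collapse to a $\max_z$. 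This is precisely the same collapse used to turn maximal leakage with side information into a maximum over $z$ in~\cite{IssaWK2020}; I would cite that and adapt it. After this collapse, for each fixed $z$ the remaining quantity is exactly the unconditional maximal $(\alpha,\beta)$-leakage of the channel $P_{Y|X,Z=z}$ with source support $\mathrm{supp}(P_{X|Z=z})$, so Theorem~\ref{theorem:alpha-beta-leakage} applies verbatim and yields the inner expression
\begin{align*}
\max_{x'}\ \sup_{P_{\tilde X|Z=z}}\ \frac{\alpha}{(\alpha-1)\beta}\log\sum_y P_{Y|X,Z}(y|x',z)^{1-\beta}\Big(\sum_x P_{Y|X,Z}(y|x,z)^{\alpha}P_{\tilde X|Z}(x|z)\Big)^{\beta/\alpha},
\end{align*}
which combined with the $\max_{z\in\mathrm{supp}(Z)}$ is exactly \eqref{eqn:thm-conditional-alpha-beta-leakage}.

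There are two points that need care. First, one must check that the ``$\sup$ inside the expectation'' exchange is legitimate: the per-$z$ suprema over $P_{X|Z=z}$ and over $U$ with support potentially varying with $z$ are genuinely separable, since the constraint $U-(X,Z)-Y$ imposes no coupling across distinct values of $z$, and the objective is a monotone function of a ratio of sums that themselves split as $\sum_z P_Z(z)(\cdot)$. Second, one must be slightly careful that the denominator term $B(z)$ in the conditional definition is $\max_{P_{\hat U|Z}}\mathbb{E}_Z[(\sum_u P_{U|Z}P_{\hat U|Z}^{(\alpha-1)/\alpha})^{\beta}]$ rather than the $\beta=1$-style denominator appearing in the unconditional Definition~\ref{def:maximal-alpha-beta-leakage}; I expect that, just as in the unconditional proof, the optimal $U$ makes this denominator term reduce to the same normalizing constant (essentially $|\mathrm{supp}(U)|^{-(\alpha-1)/\alpha}$ to some power) so that it cancels correctly against the numerator's normalization, leaving the stated closed form. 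The main obstacle is therefore not any single hard estimate but the bookkeeping of this reduction—verifying that every optimization ($P_{\hat U|Z,Y}$, $P_{\hat U|Z}$, $P_{X|Z}$, $U$, and the collapse of $\mathbb{E}_Z$) decouples cleanly across $z$ so that Theorem~\ref{theorem:alpha-beta-leakage} can be invoked slice-by-slice; once that is in place the result is immediate.
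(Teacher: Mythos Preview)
Your strategy is correct and lands on the same two ingredients the paper uses, just packaged more modularly. The paper (Appendix~\ref{proof:thm-conditional-alpha-beta-leakage}) does not reduce to Theorem~\ref{theorem:alpha-beta-leakage} as a black box; it reproves the upper and lower bounds directly: the upper bound optimizes $\hat U$, applies the mediant inequality $\frac{\sum_z a_z}{\sum_z b_z}\le\max_z\frac{a_z}{b_z}$ to pass from $\mathbb{E}_Z$ to $\max_z$, and then Jensen; the lower bound builds an explicit $U$ that combines the shattering construction at the optimal $z^*$ with $U$ uniform on a large set $\mathcal{U}_0$ for $z\ne z^*$, so that both numerator and denominator contributions from $z\ne z^*$ vanish as $|\mathcal{U}_0|\to\infty$. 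Your plan instead (i) collapses to $\max_z$ and then (ii) invokes Theorem~\ref{theorem:alpha-beta-leakage} slice-by-slice; this is legitimate and slightly cleaner, since the shattering part is hidden inside Theorem~\ref{theorem:alpha-beta-leakage}.

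Two remarks on your write-up. First, the phrase ``concentrating all the mass on the single $z$'' is misleading: $P_Z$ is fixed in the definition, so you cannot move its mass. What actually makes the collapse work for the lower bound is exactly the $|\mathcal{U}_0|\to\infty$ device from \cite{IssaWK2020} (and used verbatim in the paper's Appendix~\ref{proof:thm-conditional-alpha-beta-leakage}): for $z\ne z^*$ take $U$ uniform on a large alphabet disjoint from the one used at $z^*$, so that $(\sum_u P_{U|Z}(u|z)^\alpha)^{\beta/\alpha}=|\mathcal{U}_0|^{(1-\alpha)\beta/\alpha}\to 0$. You should state this mechanism explicitly rather than leave it at ``concentrating mass.'' Second, your worry about the $\beta$-power in the conditional denominator is unnecessary: once $P_{\hat U|Z}$ is optimized, the per-$z$ denominator is $(\sum_u P_{U|Z}(u|z)^\alpha)^{\beta/\alpha}$, and together with the outer $\frac{\alpha}{(\alpha-1)\beta}\log$ this matches exactly the simplified form in the proof of Theorem~\ref{theorem:alpha-beta-leakage} (compare \eqref{eq:simplified-leakage}), so Theorem~\ref{theorem:alpha-beta-leakage} applies without any extra cancellation argument.
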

A detailed proof of Theorem~\ref{theorem:conditional-alpha-beta-leakage} is in Appendix~\ref{proof:thm-conditional-alpha-beta-leakage}.  
\begin{remark}
  Interestingly, despite the fact that there is an expectation over $z$ in the definition of conditional M$\alpha$beL, the simplified form has a maximum over $z$. This is in contrast to some other conditional measures in~\cite{9517944,9611409,10206606}.   
\end{remark}
Under a specific Markov chain, the following theorem shows the effect of the side information $Z$ on leakage about any function $U$ of $X$ through $Y$.
\begin{theorem}\label{theorem:conditioning-reduces-leakage}
Let $X$, $Y$, and $Z$ take values from finite alphabets $\mathcal{X}$, $\mathcal{Y}$, and $\mathcal{Z}$, respectively. If $Z-X-Y$ holds, for $(\alpha,\beta)\in(1,\infty)\times[1,\infty)$, we have
\begin{align}\label{eqn:thm-conditioning-reduces-leakage}
    \mathcal{L}_{\alpha,\beta}(X\to Y|Z) \le \mathcal{L}_{\alpha,\beta}(X\to Y),
\end{align}
with equality if for some $z \in \emph{supp}(Z)$, $$\emph{supp}(X)=\emph{supp}(X|Z=z).$$
\end{theorem}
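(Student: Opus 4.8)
The plan is to argue entirely from the closed-form expressions in Theorems~\ref{theorem:alpha-beta-leakage} and~\ref{theorem:conditional-alpha-beta-leakage}, reducing the claim to the elementary fact that an optimization over a smaller feasible set cannot exceed the same optimization over a larger set. I would treat $\alpha\in(1,\infty),\ \beta\in[1,\infty)$ first and recover the boundary cases $\alpha=\infty$, $\beta=\infty$, $\alpha=1$ at the end by the continuous extension used in Definition~\ref{def:maximal-alpha-beta-leakage} (or, for $\alpha=\infty$, directly from \eqref{eqn:unconditional-infty-beta-step2} and its conditional analogue, where the inner $\sup_{P_{\tilde X}}$ becomes a $\max_x$ and the bookkeeping is even simpler).

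The core step is to collapse the $Z$-dependence of the objective in \eqref{eqn:thm-conditional-alpha-beta-leakage}. Since $Z-X-Y$ holds, for every $z\in\text{supp}(Z)$ and every $x\in\text{supp}(X|Z=z)$ we have $P_{Y|X,Z}(y|x,z)=P_{Y|X}(y|x)$ for all $y$. In \eqref{eqn:thm-conditional-alpha-beta-leakage}, for a fixed $z$, the inner sum over $x$ is supported on $\text{supp}(P_{\tilde X|Z}(\cdot|z))\subseteq\text{supp}(X|Z=z)$ and $x'$ ranges over $\text{supp}(X|Z=z)$, so on the relevant domain the bracketed expression is the \emph{same} function of $(x',P_{\tilde X|Z}(\cdot|z))$ as the bracketed expression in \eqref{eqn:thm-alpha-beta-leakage} is of $(x',P_{\tilde X})$. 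Hence the $z$-slice of $\mathcal{L}_{\alpha,\beta}(X\to Y|Z)$ is exactly that common objective optimized over pairs $(x',P)$ with $x'\in\text{supp}(X|Z=z)$ and $\text{supp}(P)\subseteq\text{supp}(X|Z=z)$, whereas $\mathcal{L}_{\alpha,\beta}(X\to Y)$ is the same objective optimized over pairs $(x',P)$ with $x'\in\text{supp}(X)$ and $\text{supp}(P)\subseteq\text{supp}(X)$. Because $\text{supp}(X|Z=z)\subseteq\text{supp}(X)$ for every $z\in\text{supp}(Z)$, the former feasible set is contained in the latter, so each $z$-slice is at most $\mathcal{L}_{\alpha,\beta}(X\to Y)$; taking the maximum over $z\in\text{supp}(Z)$ yields \eqref{eqn:thm-conditioning-reduces-leakage}. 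For the equality condition, if some $z_0\in\text{supp}(Z)$ has $\text{supp}(X|Z=z_0)=\text{supp}(X)$, the two feasible sets coincide for that $z_0$, so the $z_0$-slice already equals $\mathcal{L}_{\alpha,\beta}(X\to Y)$, giving the reverse inequality and hence equality.

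The boundary cases then follow: for $\alpha\to\infty$ the identical set-inclusion argument applies to \eqref{eqn:unconditional-infty-beta-step2} and its conditional version, and for $\beta\to\infty$ or $\alpha\to1$ the inequality passes to the limit from the interior parameters already handled. I expect the only point requiring care to be the first step: one must make sure the Markov condition is invoked on exactly the set where the two objectives are being compared, i.e.\ that $x'$ and $\text{supp}(P_{\tilde X|Z}(\cdot|z))$ lie inside $\text{supp}(X|Z=z)$ (where $P_{Y|X,Z}(\cdot|\cdot,z)$ genuinely agrees with $P_{Y|X}$), and that the conventions for the factor $P_{Y|X}(y|x')^{1-\beta}$ when $\beta>1$ and the conditional vanishes are kept consistent between \eqref{eqn:thm-alpha-beta-leakage} and \eqref{eqn:thm-conditional-alpha-beta-leakage}; once this is pinned down the conclusion is immediate and no further computation is needed.
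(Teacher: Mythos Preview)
Your proposal is correct and follows essentially the same approach as the paper: start from the closed-form expressions of Theorems~\ref{theorem:alpha-beta-leakage} and~\ref{theorem:conditional-alpha-beta-leakage}, use the Markov chain $Z-X-Y$ to replace $P_{Y|X,Z}(\cdot|\cdot,z)$ by $P_{Y|X}$, and then observe that for each $z\in\text{supp}(Z)$ the feasible set $\{(x',P):x'\in\text{supp}(X|Z=z),\ \text{supp}(P)\subseteq\text{supp}(X|Z=z)\}$ is contained in the unconditional feasible set, with equality when $\text{supp}(X|Z=z)=\text{supp}(X)$ for some $z$. The paper's proof is exactly this three-line argument (Appendix~\ref{proof:thm-conditioning-reduces-leakage}); your additional remarks on the boundary cases $\alpha\in\{1,\infty\}$, $\beta=\infty$ and on the $0^{1-\beta}$ convention are reasonable hygiene but are not treated explicitly there.
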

A detailed proof of Theorem~\ref{theorem:conditioning-reduces-leakage} is in Appendix~\ref{proof:thm-conditioning-reduces-leakage}. Therefore, minimizing $\mathcal{L}_{\alpha,\beta}(X\to Y)$ is still a reasonable objective for a situation in which 
an adversary has access to side information $Z$ which is unknown to the system designer.
The following theorem shows that successive releases increase the total leakage.
\begin{theorem}[Sub-additivity/Composition]\label{theorem:sub-additivity}
  Let $Z$ represent the side information of an adversary and $X$, $Y_1$, $Y_2$, and $Z$ take values from finite supports $\mathcal{X}$, $\mathcal{Y}_1$, $\mathcal{Y}_2$, and $\mathcal{Z}$, respectively. For $(\alpha,\beta)\in(1,\infty)\times[1,\infty)$, we have
\begin{align}\label{eqn:sub-additivity}
   \nonumber &\mathcal{L}_{\alpha,\beta}(X\to Y_1,Y_2|Z)\\&  \le \mathcal{L}_{\alpha,\beta}(X\to Y_1|Z)+\mathcal{L}_{\alpha,\beta}(X \to Y_2|Y_1,Z).
    \end{align}
\end{theorem}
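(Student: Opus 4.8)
The plan is to work directly with the simplified expressions from Theorem~\ref{theorem:alpha-beta-leakage} and Theorem~\ref{theorem:conditional-alpha-beta-leakage}, reducing the claimed inequality to a pointwise estimate on the summands. Write $Y=(Y_1,Y_2)$ and note that $P_{Y_1Y_2|X}(y_1,y_2|x)=P_{Y_1|X}(y_1|x)\,P_{Y_2|X,Y_1}(y_2|x,y_1)$. Starting from \eqref{eqn:thm-alpha-beta-leakage} applied to $X\to(Y_1,Y_2)$, the inner object is
\begin{align*}
\sum_{y_1,y_2} P_{Y_1|X}(y_1|x')^{1-\beta}P_{Y_2|X,Y_1}(y_2|x',y_1)^{1-\beta}\Bigl(\sum_x P_{\tilde X}(x)P_{Y_1|X}(y_1|x)^\alpha P_{Y_2|X,Y_1}(y_2|x,y_1)^\alpha\Bigr)^{\beta/\alpha}.
\end{align*}
First I would factor the term $P_{Y_1|X}(y_1|x)^\alpha$ out of the inner sum in a way that splits it off: for fixed $y_1$, define a tilted distribution $\tilde P_{\tilde X|Y_1}(x|y_1)\propto P_{\tilde X}(x)P_{Y_1|X}(y_1|x)^\alpha$, so the inner sum becomes $\bigl(\sum_x P_{\tilde X}(x)P_{Y_1|X}(y_1|x)^\alpha\bigr)\cdot\sum_x \tilde P_{\tilde X|Y_1}(x|y_1)P_{Y_2|X,Y_1}(y_2|x,y_1)^\alpha$. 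Raising to the power $\beta/\alpha$ and summing over $y_2$ first, the $y_2$-sum is exactly of the form appearing in \eqref{eqn:thm-conditional-alpha-beta-leakage} for $\mathcal{L}_{\alpha,\beta}(X\to Y_2|Y_1)$ with $z=y_1$ and reference distribution $\tilde P_{\tilde X|Y_1}(\cdot|y_1)$; bounding it by the maximum over $x'$ and the supremum over $P_{\tilde X|Y_1}$ pulls out a factor $\exp\!\bigl(\tfrac{(\alpha-1)\beta}{\alpha}\mathcal{L}_{\alpha,\beta}(X\to Y_2|Y_1)\bigr)$, uniformly in $y_1$. What remains after this step is
\begin{align*}
\exp\!\Bigl(\tfrac{(\alpha-1)\beta}{\alpha}\mathcal{L}_{\alpha,\beta}(X\to Y_2|Y_1)\Bigr)\sum_{y_1}P_{Y_1|X}(y_1|x')^{1-\beta}\Bigl(\sum_x P_{\tilde X}(x)P_{Y_1|X}(y_1|x)^\alpha\Bigr)^{\beta/\alpha},
\end{align*}
and the remaining $y_1$-sum is precisely the inner object of \eqref{eqn:thm-alpha-beta-leakage} for $\mathcal{L}_{\alpha,\beta}(X\to Y_1)$; taking $\sup_{P_{\tilde X}}\max_{x'}$ and applying $\tfrac{\alpha}{(\alpha-1)\beta}\log(\cdot)$ turns the product of two bounds into the sum of the two leakages, giving \eqref{eqn:sub-additivity}.

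The one technical subtlety I anticipate is the interchange of the order of optimization: in the definition of $\mathcal{L}_{\alpha,\beta}(X\to Y_2|Y_1)$ the supremum is over $P_{\tilde X|Y_1}$ ranging on the support of $P_{X|Y_1}$, whereas the tilted distribution $\tilde P_{\tilde X|Y_1}$ I constructed is supported on $\mathrm{supp}(P_X)$ — I need to check that the tilting does not enlarge the support and that the relevant conditional Markov-chain/support conditions line up, which should follow because $P_{Y_1|X}(y_1|x)^\alpha>0$ exactly when $P_{Y_1|X}(y_1|x)>0$, so the tilt preserves the support of $P_{\tilde X}$ and hence lies within the admissible set once $P_{\tilde X}$ does. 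The other mild point is that the exponents $1-\beta$, $\beta/\alpha$ can be negative, so when I bound the $y_2$-sum I must make sure I am bounding a genuinely nonnegative quantity by its supremum and not flipping an inequality; since the entire $y_2$-sum is a positive number and I replace it by its maximum over $x'$ and supremum over the reference distribution, the direction is correct. Everything else is routine algebra with the power-mean/tilting identities.

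The main obstacle, then, is not any deep inequality — there is no need for a convexity or data-processing argument here — but rather careful bookkeeping of the nested suprema and of which reference distributions are admissible at each stage; once the summand is rewritten in the factored form above, the bound is essentially forced. If the support alignment turns out to be delicate, a clean fallback is to first prove the inequality assuming full support (i.e.\ $P_X$ has full support and $P_{Y_i|X}(\cdot|\cdot)>0$ everywhere) and then recover the general case by a limiting/restriction argument, exactly as is standard for these leakage identities.
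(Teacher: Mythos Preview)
Your proposal is correct and follows essentially the same route as the paper's own proof: both factor $P_{Y_1Y_2|X}$ via the chain rule, introduce the same tilted distribution $P_{\bar X|Y_1=y_1}(x)\propto P_{\tilde X}(x)P_{Y_1|X}(y_1|x)^\alpha$ to split the inner sum, and then bound the resulting $y_2$-sum uniformly in $y_1$ (and over $x'$ and the reference distribution) to peel off the conditional-leakage factor, leaving exactly the inner object for $\mathcal{L}_{\alpha,\beta}(X\to Y_1)$. The support-alignment check you flag is handled correctly and is in fact more explicit than in the paper.
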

A detailed proof of Theorem~\ref{theorem:sub-additivity} is in Appendix~\ref{proof:thm-sub-additivity}. 
\begin{remark}
In the scenario where no side information is available to an adversary, applying  Theorem~\ref{theorem:sub-additivity}, we can show that
\begin{align}
  \mathcal{L}_{\alpha,\beta}(X\to Y_1,Y_2)&\le \mathcal{L}_{\alpha,\beta}(X\to Y_1)+\mathcal{L}_{\alpha,\beta}(X\to Y_2|Y_1)\label{ineq:sub-add-general}. 
\end{align}
 Combining this result with Theorem~\ref{theorem:conditioning-reduces-leakage},
 we can conclude that if $Y_1-X-Y_2$ holds, then 
\begin{align}
\mathcal{L}_{\alpha,\beta}(X\to Y_1,Y_2)\le \mathcal{L}_{\alpha,\beta}(X\to Y_1)+\mathcal{L}_{\alpha,\beta}(X \to Y_2)\label{eq:sub-add}.
\end{align}
Equation \eqref{eq:sub-add} recovers Liao~\emph{et al.}'s result \cite{LiaoKS20} on the sub-additivity of maximal $\alpha$-leakage, and equation \eqref{ineq:sub-add-general}
 generalizes it.
\end{remark}
\begin{remark}
Repeated use of privacy mechanisms on the outcome of previous private releases requires computing the overall privacy guarantees, a problem known as composition. A related useful property of any privacy measure, namely, composability, identifies the ease of computing  this overall privacy.
 Let $X$ be a sensitive random variable, $\mathcal{M}_1$ and $\mathcal{M}_2$  be privacy mechanisms\footnote{As previously mentioned, we employ the terms \textit{privacy mechanisms} and \textit{conditional distributions} interchangeably.}, and $\mathcal{M}$ be their composition. $\mathcal{M}$ is called \textit{adaptive} if $\mathcal{M}(X)=\left(\mathcal{M}_1(X),\mathcal{M}_2(X,\mathcal{M}_1(X))\right)$, that is, the output of $\mathcal{M}_2$ depends on both $X$ and $\mathcal{M}_1(X)$. In contrast, $\mathcal{M}$ is called \textit{non-adaptive} if $\mathcal{M}(X)=\left(\mathcal{M}_1(X),\mathcal{M}_2(X)\right)$, that is, the output of $\mathcal{M}_2$ depends on $\mathcal{M}_1(X)$ only through the random variable $X$. In \eqref{eqn:sub-additivity}, let $\displaystyle P_{Y_1|X,Z}$, $P_{Y_2|X,Z,Y_1}$, and $P_{Y_1,Y_2|X,Z}$ be privacy mechanisms associated with privacy measures $\mathcal{L}_{\alpha,\beta}(X\to Y_1|Z)$, $\mathcal{L}_{\alpha,\beta}(X\to Y_2|Y_1,Z)$, and $\mathcal{L}_{\alpha,\beta}(X\to Y_1,Y_2|Z)$, respectively. Random variables $Y_1$ and $Y_2$ can be viewed as $\mathcal{M}_1(X)$ and $\mathcal{M}_2\left(X,\mathcal{M}_1(X)\right)$, respectively. Thus, this implies that M$\alpha$beL satisfies adaptive composition. In \eqref{eq:sub-add}, random variables $Y_1$ and $Y_2$ can be viewed as $\mathcal{M}_1(X)$ and $\mathcal{M}_2(X)$, respectively, leading to a non-adaptive composition result for M$\alpha$beL. It is known that DP and R\'enyi DP mechanisms also satisfy composability. In Section~\ref{sec:relationship}, we show that DP and R\'enyi DP can be recovered through conditional M$\alpha$beL.  Consequently, we can employ \eqref{eqn:sub-additivity}  to establish bounds for the adaptive composition of R\'enyi DP and DP mechanisms.
\end{remark}

\section{Relationships of Other Leakage Measures with (Conditional) M$\alpha$beL}\label{sec:relationship}
As mentioned earlier, (conditional) M$\alpha$beL recovers (conditional) maximal $\alpha$-leakage for $\beta=1$ which simplifies to (conditional) maximal leakage for $\beta=1$ and $\alpha \to \infty$. Moreover, in this section we show that M$\alpha$beL includes various other leakage measures, particularly, different notions of DP (see Fig.~\ref{alpha-beta-leakage-relationships}).  
\begin{figure*}[t!]
    \centering
    \begin{subfigure}[t]{0.45\textwidth}
        \centering
        \includegraphics[width=1\linewidth]{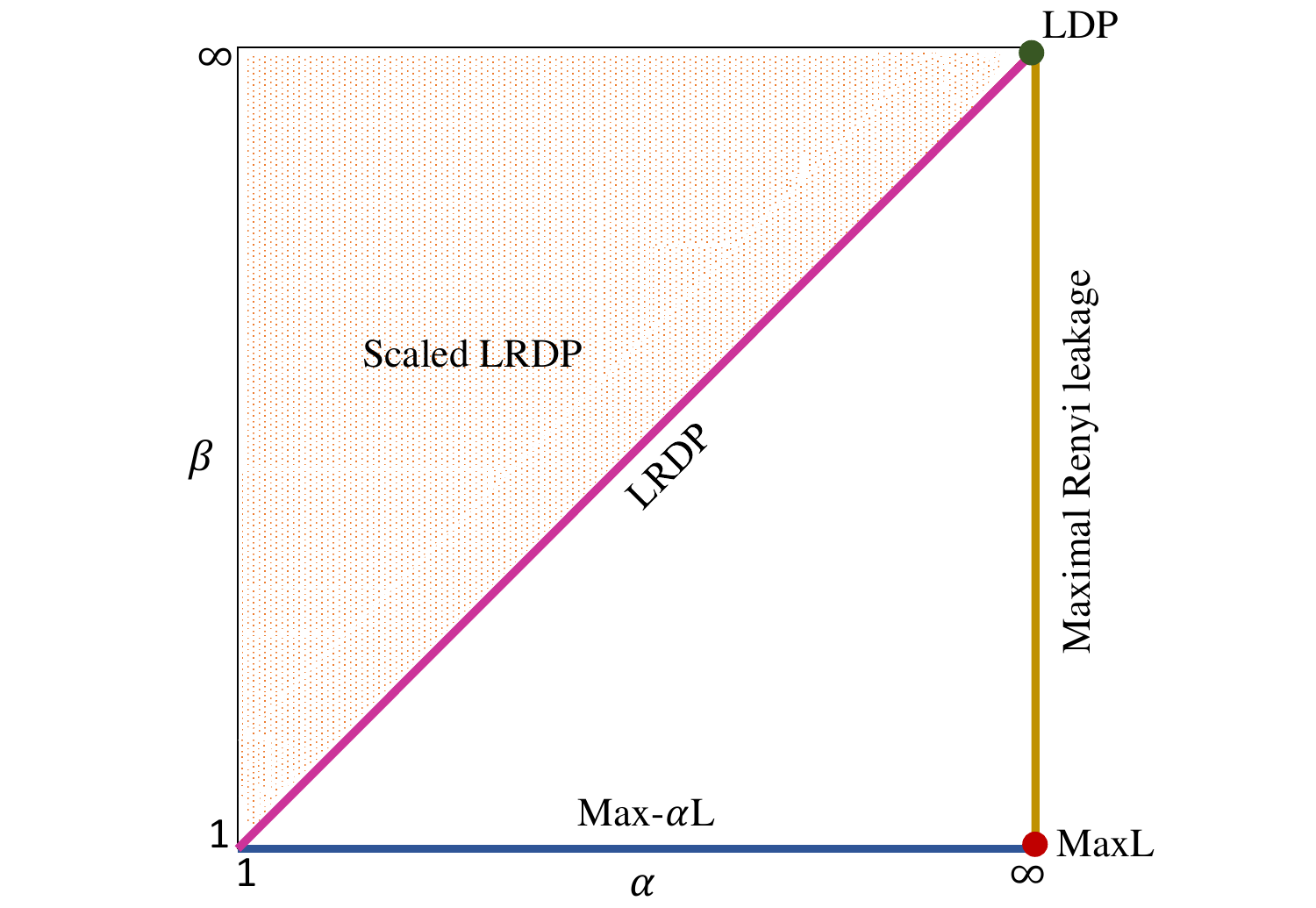}
        \caption{}%Relationship between maximal $(\alpha,\beta)$-leakage and other\\ leakage measures as a function of $\alpha$ and $\beta$.}
        \label{alpha-beta-leakage-relationships}
    \end{subfigure}%
    ~ 
    \begin{subfigure}[t]{0.45\textwidth}
        \centering
        \includegraphics[width=1\linewidth]{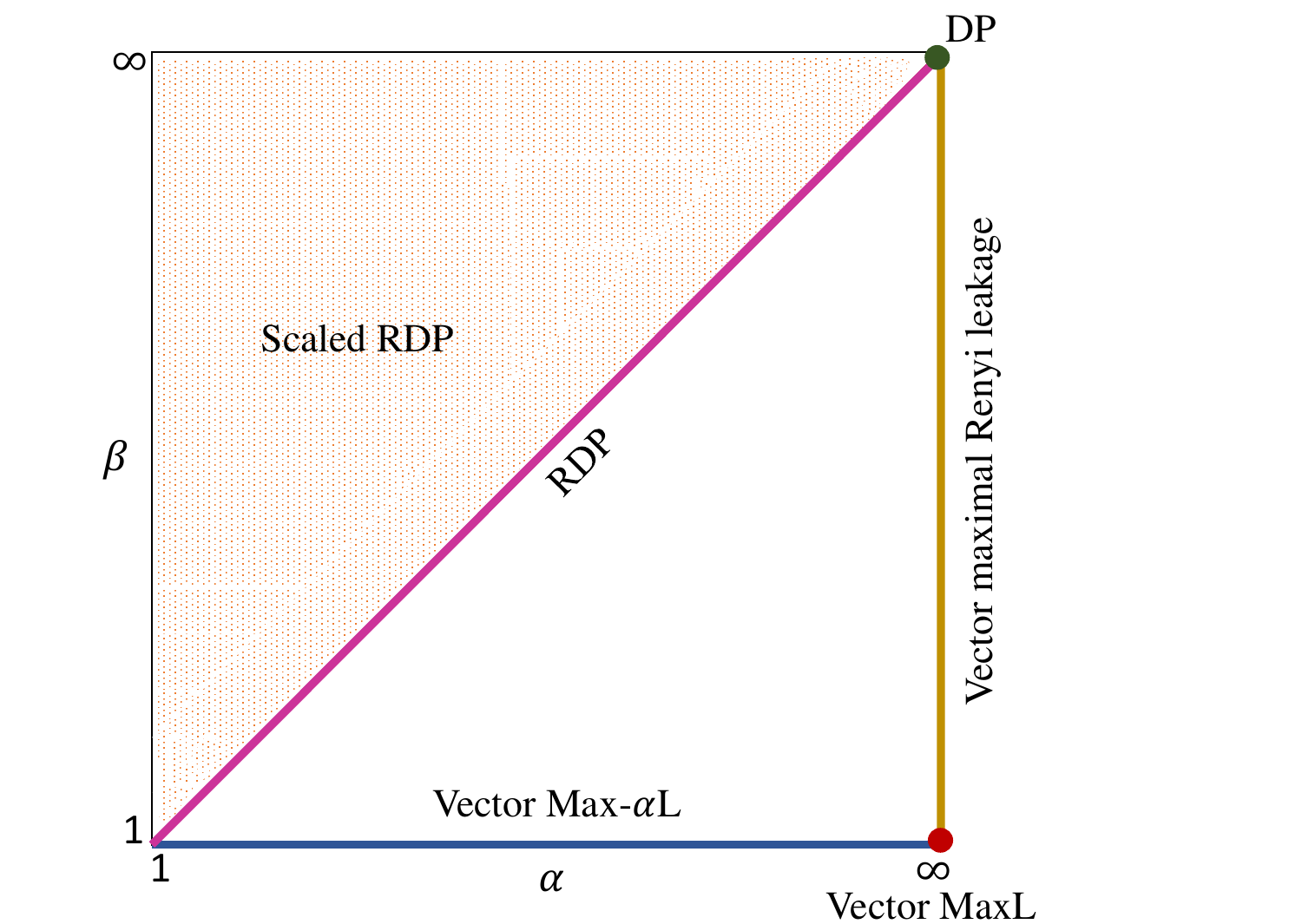}
        \caption{}%Relationship between conditional maximal $(\alpha,\beta)$-leakage and\\ other leakage measures as a function of $\alpha$ and $\beta$.}
        \label{vector-alpha-beta-leakage-relationships}
    \end{subfigure}%
 
    \caption{Subplots \ref{alpha-beta-leakage-relationships} and \ref{vector-alpha-beta-leakage-relationships} show the relationships between existing leakage measures with M$\alpha$beL and vector M$\alpha$beL, respectively. }
    \label{fig:alpha_beta_leakage_and_other_measures}
\end{figure*}
\begin{proposition}\label{prop1}
    M$\alpha$beL can be simplified as follows:
    \begin{itemize}
        \item If $\alpha\leq \beta$, then $$\mathcal{L}_{\alpha,\beta}(X\to Y)=\displaystyle\frac{\alpha (\beta-1)}{(\alpha-1)\beta}\     \mathcal{L}^{\emph{LRDP}}_{\beta}(X\rightarrow Y).$$
        %\vspace{2mm}
         \item$ 
         \mathcal{L}_{\beta,\beta}(X\to Y)=  \mathcal{L}^{\emph{LRDP}}_{\beta}(X\rightarrow Y)$.
          \vspace{4mm}
           \item $\mathcal{L}_{\infty,\infty}(X\to Y)=\displaystyle\lim_{\beta\to \infty}\lim_{\alpha\to\infty}\mathcal{L}_{\alpha,\beta}(X\to Y)
           \\= \mathcal{L}^{\emph{LDP}}(X\rightarrow Y).$
    \end{itemize}
\end{proposition}
A detailed proof of Proposition~\ref{prop1} is in Appendix~\ref{App:prop1-proof}. The key result of Proposition \ref{prop1} is captured in Fig. \ref{alpha-beta-leakage-relationships} which shows that LDP (top right corner point in Fig. \ref{alpha-beta-leakage-relationships}) is the limiting point of M$\alpha$beL as $(\alpha\to \infty,\beta\to\infty)$.
In particular, M$\alpha$beL reduces to maximal R\'{e}nyi leakage of order $\beta$ (defined in \eqref{def:varint-of-LRDP}) for $\alpha=\infty$ (illustrated by the gold vertical line in Fig. \ref{alpha-beta-leakage-relationships}). 
\begin{proposition}\label{prop1-conditional}
    Conditional M$\alpha$beL can be simplified as follows:
    \begin{itemize}
        \item If $\alpha\leq \beta$, then 
        \begin{align*}
        &\mathcal{L}_{\alpha,\beta}(X\to Y|Z)=\displaystyle\max_{z,x',x} \ 
  \frac{\alpha}{(\alpha-1)\beta} \\& \ \log \displaystyle \sum_y P_{Y|X,Z}(y|x',z)^{1-\beta} P_{Y|X,Z}(y|x,z)^\beta.
  \end{align*}
        %\vspace{4mm}
        \item If $\alpha =\beta$, then \begin{align*}&\mathcal{L}_{\beta,\beta}(X\to Y|Z)=\displaystyle\max_{z,x',x} \ 
  \frac{1}{\beta-1} \\ & \ \log \displaystyle \sum_y P_{Y|X,Z}(y|x',z)^{1-\beta} P_{Y|X,Z}(y|x,z)^\beta.\end{align*}
          %\vspace{4mm}
           \item $\mathcal{L}_{\infty,\infty}(X\to Y|Z)=\displaystyle\lim_{\beta\to\infty}\lim_{\alpha\to\infty}\mathcal{L}_{\alpha,\beta}(X\to Y|Z)\\\vspace{3mm}=\displaystyle\max_{z,x',y,x} \log \frac{P_{Y|X,Z}(y|x,z)}{P_{Y|X,Z}(y|x',z)}$.
           \vspace{1mm}
           \item $\mathcal{L}_{\infty,\beta}(X\to Y|Z)=\displaystyle\max_{z,x'}\ \frac{1}{\beta} \\ \ \log\sum_y P_{Y|X,Z}(y|x',z)^{1-\beta}\ \max_x P_{Y|X,Z}(y|x,z)^\beta$.
    \end{itemize}
\end{proposition}
The proof of Proposition~\ref{prop1-conditional} follows from Theorem~\ref{theorem:conditional-alpha-beta-leakage} and similar steps as those in the proofs of Proposition~\ref{prop1} and equation \eqref{eqn:unconditional-infty-beta-step2}.
\subsection{Vector M$\alpha$beL}
In this section, we show that conditional M$\alpha$beL can be used to derive a general version of DP and RDP rather than their local versions (see Fig.~\ref{vector-alpha-beta-leakage-relationships}). Specifically, given a dataset with $n$ entries, we define a vector form of M$\alpha$beL as when the adversary has access to all but one of the entries.
 \begin{definition}[Vector M$\alpha$beL]\label{differential-leakage}
  Let $X^n=(X_1,X_2,\ldots,X_n)$ and $X_{-i}$ represent
 a dataset with $n$ entries and all entries except the $i$th, respectively. %Moreover, let $Z=X_{-i}$ and $X=X_i$, which characterizes a situation in which an adversary may have side information access to all entries of the dataset except one. Under these assumptions,  
 Vector M$\alpha$beL is defined as
 \begin{align}\label{def:diff-alpha-beta-leakage}
     \mathcal{L}_{\alpha,\beta}^{\text{vec}}(X^n\to Y):=\max_i \  \mathcal{L}_{\alpha,\beta}(X_i\to Y|X_{-i}).
 \end{align}
  \end{definition}
 \begin{proposition}\label{vector-leakage-cases}
   For finite alphabets, vector M$\alpha$beL defined in \eqref{def:diff-alpha-beta-leakage} simplifies to
       \begin{align}
 \nonumber&\displaystyle \mathcal{L}_{\alpha,\beta}^{\text{vec}}(X^n\to Y)= \max_{i,x_{-i},x_i'} \sup_{P_{\Tilde{X}_i|X_{-i}}}
  \frac{\alpha}{(\alpha-1)\beta} \\\nonumber& \ \log\Bigg[  \displaystyle \sum_y P_{Y|X_i,X_{-i}}(y|x'_i,x_{-i})^{1-\beta}\\&\ \left( \displaystyle\sum_{x_i} P_{Y|X_i,X_{-i}}(y|x_i,x_{-i})^\alpha P_{\Tilde{X}_i|X_{-i}}(x_i|x_{-i})\right)^{\frac{\beta}{\alpha}}\Bigg],\label{eq:diff-simplified}
    \end{align}
    where  $P_{\Tilde{X}_i|X_{-i}}$ is a distribution on the support of $P_{X_i|X_{-i}}$. Moreover, it recovers
    \begin{itemize}
    \item vector maximal $\alpha$-leakage (and thus vector maximal leakage) when $\beta=1$, that is,
    \begin{align}\label{eq:vec-max-alpha-leak}
 \displaystyle&\nonumber \mathcal{L}_{\alpha,1}^{\text{vec}}(X^n\to Y)= \max_{i,x_{-i}} \sup_{P_{\Tilde{X}_i|X_{-i}}}
  \frac{\alpha}{\alpha-1}  \\& \log  \displaystyle \sum_y \Bigg[ \displaystyle\sum_{x_i} P_{Y|X_i,X_{-i}}(y|x_i,x_{-i})^\alpha P_{\Tilde{X}_i|X_{-i}}(x_i|x_{-i})\Bigg]^{\frac{1}{\alpha}}
    \end{align}
 
        \item a scaled RDP of order $\beta$ when $\alpha\le \beta$, that is,
         \begin{align}\label{eq:scaled-RDP}
\nonumber&\mathcal{L}_{\alpha,\beta}^{\text{vec}}(X^n\to Y)=      
  \max_{x^n\sim x'^n}\frac{\alpha}{(\alpha-1)\beta}\\& \ \log \displaystyle \sum_y P_{Y|X^n}(y|x'^n)^{1-\beta} P_{Y|X^n}(y|x^n)^\beta;
    \end{align}
        \item RDP of order $\alpha=\beta$ when $\alpha=\beta$, that is,
          \begin{align}
      \nonumber & \mathcal{L}_{\alpha=\beta}^{\text{vec}}(X^n\to Y)=      
  \max_{x^n\sim x'^n}\frac{1}{\beta-1} \\&\log \ \displaystyle \sum_y P_{Y|X^n}(y|x'^n)^{1-\beta} P_{Y|X^n}(y|x^n)^\beta;
    \end{align}
        \item DP when $\alpha,\beta\to \infty$;
        \item a variant of RDP of order $\beta$ for $\alpha \to \infty$ and an arbitrary $\beta$,
         which we call \emph{vector maximal R\'{e}nyi leakage}. That is, 
        \begin{align}\label{eqn:centralizedRendileakage}
 \nonumber&\mathcal{L}_{\infty,\beta}^{\text{vec}}(X^n\to Y)\\\nonumber&=\max_{i,x_i',x_{-i}}  \frac{1}{\beta} \  \log\Bigg[
\sum_y P_{Y|X_i,X_{-i}}(y|x_i',x_{-i})^{1-\beta} \\&\quad \times\ \max_{x_i} P_{Y|X_i,X_{-i}}(y|x_i,x_{-i})^\beta\Bigg].
\end{align}
    \end{itemize}
  \end{proposition}  
 We remark that vector maximal R\'{e}nyi leakage defined in \eqref{eqn:centralizedRendileakage} differs from RDP of order $\beta$ mainly in that the max over $x_i$ is inside the summation over $y$ rather than outside. A detailed proof of Proposition~\ref{vector-leakage-cases} can be found in Appendix~\ref{proof:vector-cases}.

\section{Maximal $(\alpha,\tau)$-leakage: A Reparameterization of M$\alpha$beL}\label{sec:reparameterization}
Maximal $(\alpha,\beta)$-leakage is not uniquely defined at $\alpha=\beta=1$. As $\alpha=\beta\rightarrow 1$, it can be verified using L'H\^{o}pital's rule that maximal $(\alpha=\beta)$-leakage, i.e., local R\'enyi DP, simplifies to $$\displaystyle\max_{x,x'}\ D_{KL}\left(P_{Y|X}(y|x)\|P_{Y|X}(y|x')\right)$$ whereas the limit of maximal $(\alpha,\beta=1)$-leakage, i.e., maximal $\alpha$-leakage, gives Shannon channel capacity as $\alpha \to 1$~\cite{LiaoKS20}. In this section, we consider a reparameterization which
leads to a new measure. The new measure is uniquely defined in all its endpoints and it is monotonic in both orders (unlike maximal $(\alpha,\beta)$-leakage which is monotonic in only one of its orders).
\begin{figure*}[t!]
    \centering
    \begin{subfigure}[t]{0.475\textwidth}
        \centering
        \includegraphics[width=1\linewidth]{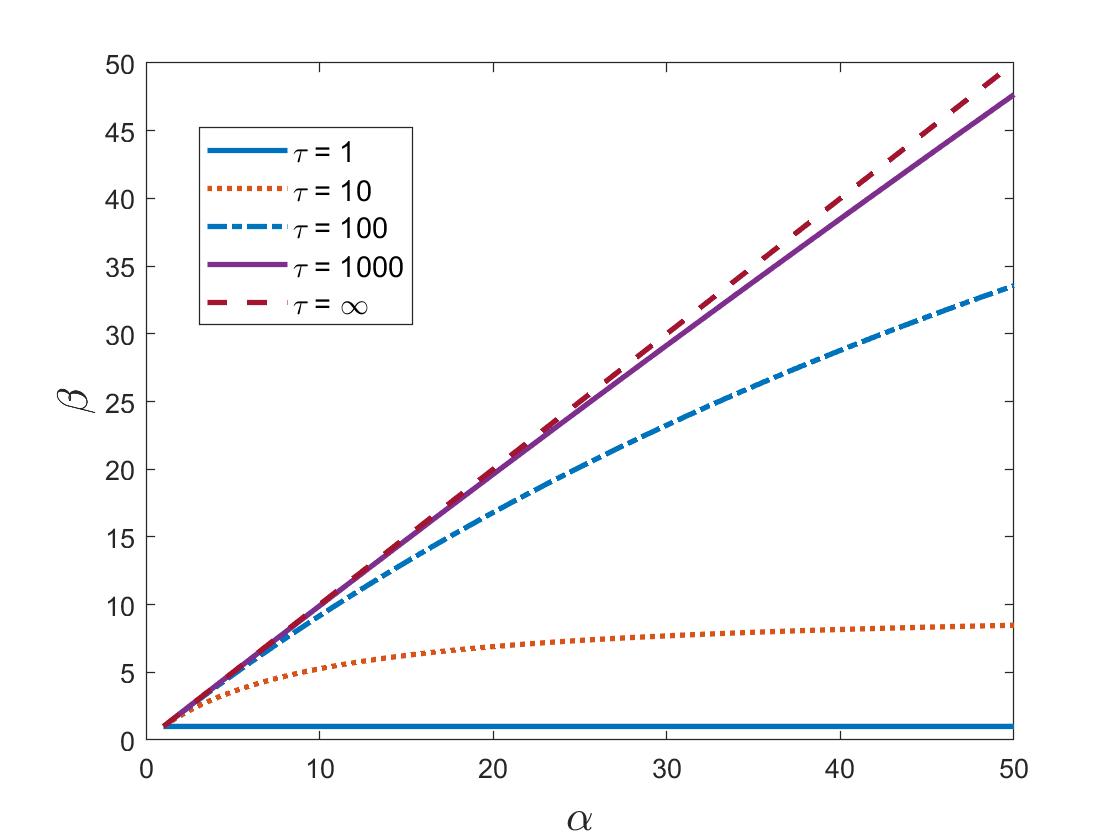}
        \caption{}
           \label{alpha-beta-curve}
    \end{subfigure}%
    ~ 
    \begin{subfigure}[t]{0.44\textwidth}
        \centering
        \includegraphics[width=1\linewidth]{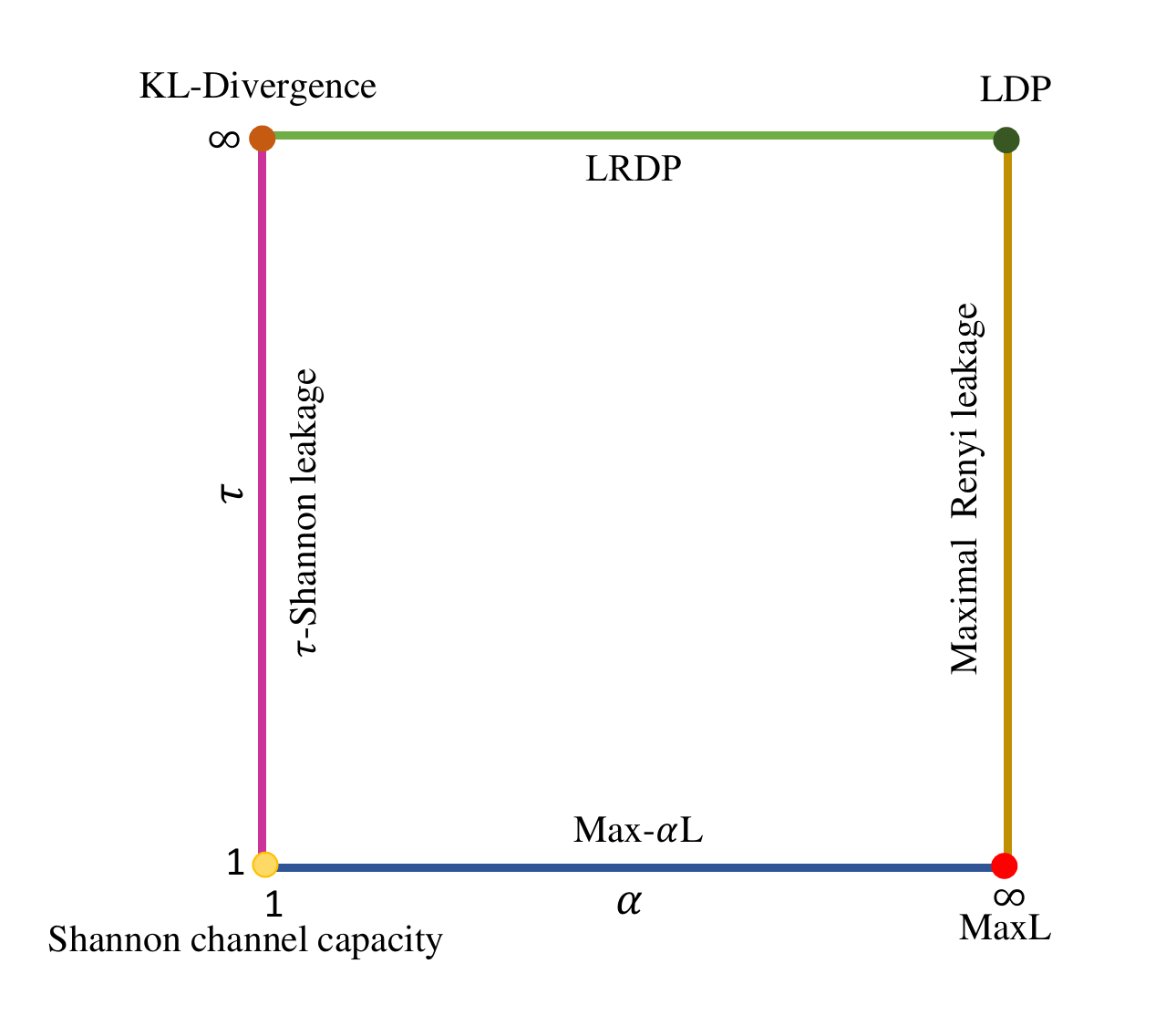}
        \caption{}
        \label{fig:alpha-tau-leakage}
    \end{subfigure}%
 
    \caption{Subplot \ref{alpha-beta-curve} shows $\beta$ vs $\alpha$ curves across different $\tau$ values. Subplot \ref{fig:alpha-tau-leakage} shows  relationship between maximal $(\alpha,\tau)$-leakage and other leakage measures as a function of $\alpha$ and $\tau$.} 
    %\label{fig:alpha_beta_leakage_and_other_measures}
\end{figure*}
Let $\tau \in [1,\infty)$ and $\beta=\frac{\alpha   \tau}{\tau+\alpha-1}$, where $\alpha\in(1,\infty)$. We may re-write the expression of maximal $(\alpha,\beta)$-leakage in \eqref{eqn:thm-alpha-beta-leakage} in terms of $\alpha$ and $\tau$, as follows.
\begin{align}\label{alpha_tau_version}
&\nonumber\mathcal{L}_{\alpha,\tau}(X\to Y)
\\&= \max_{x'}\ \sup_{P_{\Tilde{X}}} \;  \left(\frac{1}{\alpha-1}+\frac{1}{\tau}\right) \log\Bigg[ \sum_y  P_{Y|X}(y|x')^{\frac{(\tau-1)(1-\alpha)}{\tau+\alpha-1}} \nonumber \\
& \ \times \left(\sum_x P_{\Tilde{X}}(x)P_{Y|X}(y|x)^\alpha\right)^{\frac{\tau}{\tau+\alpha-1}}\Bigg].
\end{align}
This measure is defined by its continuous extension for $(\alpha,\tau)\in\{1,\infty\}\times [1,\infty)$ and $(\alpha,\tau)\in(1,\infty)\times \{\infty\}$. It is also defined by  $\displaystyle\lim_{\tau\to \infty}\lim_{\alpha\to \infty}\mathcal{L}_{\alpha,\tau}(X\to Y)$ and $\displaystyle\lim_{\tau\to\infty} \lim_{\alpha\to 1} \mathcal{L}_{\alpha,\tau}(X\to Y)$ for $(\alpha=\infty,\tau=\infty)$ and $(\alpha=1,\tau=\infty)$, respectively. We call the quantity in \eqref{alpha_tau_version} \textit{maximal $(\alpha,\tau)$-leakage}. It is important to note that $\beta$ is non-decreasing in $\tau$ for a fixed $\alpha$ and so $\alpha>\beta$ (if $\tau\to\infty$, then $\beta\to \alpha$, see Fig.~\ref{alpha-beta-curve}).
\begin{lemma}\label{lemma:alpha_tau_variational} For $(\alpha,\tau)\in(1,\infty)\times [1,\infty)$, maximal $(\alpha,\tau)$-leakage can be represented by
\begin{align}\label{alpha_tau_variational}
&\nonumber\mathcal{L}_{\alpha,\tau}(X\to Y)
\\\nonumber&=\max_{x'}\,\sup_{P_{\Tilde{X}}}\,\inf_{Q_Y}\ \frac{1}{\alpha-1} \log\Bigg[\sum_{x,y} P_{\Tilde{X}}(x) P_{Y|X}(y|x)^\alpha \\& \ \times\left(Q_Y(y)^{\frac{1}{\tau}} P_{Y|X}(y|x')^{1-\frac{1}{\tau}}\right)^{1-\alpha}\Bigg],
\end{align}
and it is non-decreasing in $\tau$  and $\alpha$ for a fixed $\alpha$   and $\tau$, respectively. 
\end{lemma}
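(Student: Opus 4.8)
The plan is to prove the two assertions of the lemma separately: first the variational identity between \eqref{alpha_tau_version} and \eqref{alpha_tau_variational}, and then the two monotonicity claims, which call for different ideas — monotonicity in $\tau$ is a quick consequence of an earlier theorem, whereas monotonicity in $\alpha$ is the crux and is exactly what the variational form is designed to expose. For the variational identity, fix $x'$ and $P_{\tilde X}$ and abbreviate $A_y:=\sum_x P_{\tilde X}(x)P_{Y|X}(y|x)^\alpha$ and $b_y:=P_{Y|X}(y|x')$. Expanding the argument of the logarithm in \eqref{alpha_tau_variational} and using $\alpha>1$ (so $\tfrac{1}{\alpha-1}>0$, and minimizing the logarithm is the same as minimizing its argument), the infimum over $Q_Y$ reduces to minimizing $\sum_y c_y\,Q_Y(y)^{-s}$ over the probability simplex, where $c_y:=A_y\,b_y^{(1-\alpha)(1-1/\tau)}\ge 0$ and $s:=\tfrac{\alpha-1}{\tau}>0$. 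Since $q\mapsto q^{-s}$ is convex and strictly decreasing, a Lagrange-multiplier computation (equivalently, the power-mean/reverse-H\"older inequality) gives minimizer $Q_Y(y)\propto c_y^{1/(s+1)}$ and minimum value $\big(\sum_y c_y^{1/(s+1)}\big)^{s+1}$. It then remains to substitute $s=\tfrac{\alpha-1}{\tau}$ and check the exponent bookkeeping: $\tfrac{1}{s+1}=\tfrac{\tau}{\tau+\alpha-1}$, $(1-\alpha)(1-\tfrac1\tau)\tfrac{1}{s+1}=\tfrac{(\tau-1)(1-\alpha)}{\tau+\alpha-1}$, and $\tfrac{s+1}{\alpha-1}=\tfrac1\tau+\tfrac1{\alpha-1}$, which turns $\tfrac{s+1}{\alpha-1}\log\sum_y c_y^{1/(s+1)}$ into exactly the prefactor and summand of \eqref{alpha_tau_version}. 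Taking $\sup_{P_{\tilde X}}$ and then $\max_{x'}$ of both sides yields the identity for $\alpha\in(1,\infty),\tau\in[1,\infty)$; the endpoints $\alpha\in\{1,\infty\}$, $\tau=\infty$, and degenerate-support configurations (where the relevant quantities are $0$ or $+\infty$ under the usual conventions) are handled by continuous extension, which is how those cases are defined.

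For monotonicity in $\tau$ at fixed $\alpha$, observe that \eqref{alpha_tau_version} is nothing but the expression \eqref{eqn:thm-alpha-beta-leakage} for maximal $(\alpha,\beta)$-leakage evaluated at $\beta=\beta(\tau):=\tfrac{\alpha\tau}{\tau+\alpha-1}$. For $\alpha\in(1,\infty)$ one has $\beta'(\tau)=\tfrac{\alpha(\alpha-1)}{(\tau+\alpha-1)^2}>0$, so $\beta(\tau)$ is increasing on $[1,\infty)$ with values in $[1,\alpha)\subseteq[1,\infty)$. Hence the claim is immediate from the first part of Theorem~\ref{theorem:properties}, which states that maximal $(\alpha,\beta)$-leakage is non-decreasing in $\beta$ for fixed $\alpha$.

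For monotonicity in $\alpha$ at fixed $\tau$ — the main obstacle, since Theorem~\ref{theorem:properties} says nothing about the $\alpha$-direction — the variational representation \eqref{alpha_tau_variational} is essential. Fix $\tau$, $x'$, $P_{\tilde X}$ and $Q_Y$, and note that $g_y:=Q_Y(y)^{1/\tau}P_{Y|X}(y|x')^{1-1/\tau}$ does not depend on $\alpha$. Setting $t:=\alpha-1>0$ and $Z:=\log\tfrac{P_{Y|X}(Y|X)}{g_Y}$ under $(X,Y)\sim P_{\tilde X}(x)P_{Y|X}(y|x)$, the argument of the logarithm in \eqref{alpha_tau_variational} equals $\mathbb{E}\big[e^{tZ}\big]$, so the inner quantity is $\tfrac1t\log\mathbb{E}\big[e^{tZ}\big]=\Lambda(t)/t$, where $\Lambda(t):=\log\mathbb{E}\big[e^{tZ}\big]$ is convex with $\Lambda(0)=0$. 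For any convex function vanishing at the origin the secant slope $\Lambda(t)/t$ is non-decreasing in $t>0$; therefore, for each fixed $(x',P_{\tilde X},Q_Y)$, the inner quantity is non-decreasing in $\alpha$. Since $\inf_{Q_Y}$, then $\sup_{P_{\tilde X}}$, then $\max_{x'}$ each preserve the property of being non-decreasing in $\alpha$, this gives the claim. (Configurations with $g_y=0$ but $P_{Y|X}(y|x)>0$ make $\Lambda\equiv+\infty$ and are irrelevant to the infimum; the endpoints $\alpha\in\{1,\infty\}$, $\tau=\infty$ follow by continuity.) The one point requiring a little care is that the optimizing $Q_Y$ generally depends on $\alpha$, but this is sidestepped entirely by arguing the monotonicity pointwise in $Q_Y$ before taking the infimum.
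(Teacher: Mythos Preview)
Your proof is correct and follows essentially the same approach as the paper: the variational identity is obtained by the same Lagrange-multiplier optimization of $\sum_y C(y)Q_Y(y)^\gamma$ (with $\gamma=-s=\tfrac{1-\alpha}{\tau}$), monotonicity in $\tau$ is reduced to the already-proved monotonicity in $\beta$ via the monotone substitution $\beta=\tfrac{\alpha\tau}{\tau+\alpha-1}$, and monotonicity in $\alpha$ is argued pointwise in $(x',P_{\tilde X},Q_Y)$ from the fact that $\tfrac{1}{\alpha-1}\log\mathbb{E}[C^{\alpha-1}]$ is non-decreasing in $\alpha$, which you phrase equivalently as the secant-slope monotonicity of the convex cumulant generating function. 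The only (minor) additions relative to the paper are your explicit remarks on endpoint/degenerate cases and on why $\inf_{Q_Y}$, $\sup_{P_{\tilde X}}$, $\max_{x'}$ preserve monotonicity in $\alpha$.
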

A detailed proof is in Appendix~\ref{proof:remark-reparameterization}.
\begin{remark}
Some of the relationships to other measures become clear from this lemma. Namely, if $\tau=1$, then we see the expression of maximal $\alpha$-leakage in terms of Sibson mutual information, that is, 
\begin{align}
\sup_{P_{\tilde{X}}} \ \inf_{Q_Y} D_\alpha(P_{\tilde{X}} \times P_{Y|X}\|P_{\tilde{X}}\times Q_Y)=\sup_{P_{\tilde{X}}} I_{\alpha}^S(\tilde{X};Y).
\end{align}
If $\tau\to \infty$, then we see the definition of LRDP as
\begin{align}
\max_{x,x'} D_\alpha(P_{Y|X=x}\|P_{Y|X=x'}).
\end{align}
\end{remark}
Note that $\frac{\tau}{\tau+\alpha-1}<1$ for $\tau\in [1,\infty)$ and $\alpha \in (1,\infty)$, and so the quantity inside the logarithm in \eqref{alpha_tau_version} is concave in $P_{\tilde{X}}$. Now we consider some endpoints of $\alpha$ and $\tau$ values (see Fig.~\ref{fig:alpha-tau-leakage}). If $\tau=1$, then we recover maximal $\alpha$-leakage which reduces to maximal leakage when $\alpha=\infty$.

 If $\tau=\infty$, then we recover local R\'enyi differential privacy of order $\alpha$ which simplifies to local differential privacy for $\alpha=\infty$. If $\alpha=\infty$, then we recover maximal R\'{e}nyi leakage of order $\tau$, that is,
\begin{align}
&\nonumber\mathcal{L}_{\alpha=\infty,\tau}(X\to Y)\\&=
\max_{x'} \; \frac{1}{\tau}  \log \sum_y  P_{Y|X}(y|x')^{1-\tau} \ \max_x  P_{Y|X}(y|x)^\tau.
\end{align}
Following similar steps as those in the proof of Proposition~\ref{prop1}, we can also show that 
\begin{align}
\nonumber\mathcal{L}_{\alpha=\infty,\tau=\infty}(X\to Y)&=\displaystyle\lim_{\tau\to \infty} \lim_{\alpha\to \infty}\mathcal{L}_{\alpha,\tau}(X\to Y)\\&=\mathcal{L}^{\text{LDP}}(X\to Y).
\end{align}
\begin{proposition}\label{prop4}
    For $\alpha\to 1$, maximal $(\alpha,\tau)$-leakage simplifies to
\begin{align}\label{eqn:tau-shannon-leakage}
   \nonumber&\mathcal{L}_{\alpha=1,\tau}(X\to Y)= \max_{x'}\ \sup_{P_{\Tilde{X}}}\Bigg[ \frac{1}{\tau}I(\tilde{X};Y)\\&+\left(1-\frac{1}{\tau}\right)D_{KL}\left(P_{Y|X}(y|x)\|P_{Y|X}(y|x')|P_{\tilde{X}}(x)\right)\Bigg].
\end{align}
\end{proposition}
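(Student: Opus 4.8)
The plan is to evaluate the $\alpha\to 1$ limit of the reparameterized expression \eqref{alpha_tau_version} directly, for each fixed $x'$ and $P_{\tilde X}$, and then move the limit through $\max_{x'}\sup_{P_{\tilde X}}$. Set $\epsilon=\alpha-1\in(0,\infty)$ and let
\[
g_\epsilon(P_{\tilde X},x'):=\sum_y P_{Y|X}(y|x')^{-\frac{(\tau-1)\epsilon}{\tau+\epsilon}}\Big(\sum_x P_{\tilde X}(x)\,P_{Y|X}(y|x)^{1+\epsilon}\Big)^{\frac{\tau}{\tau+\epsilon}}
\]
be the argument of the logarithm, so that the prefactor in \eqref{alpha_tau_version} is $\frac{1}{\alpha-1}+\frac{1}{\tau}=\frac{1}{\epsilon}+\frac{1}{\tau}$. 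The first observation is that $g_0(P_{\tilde X},x')=\sum_y P_Y(y)=1$, where $P_Y(y):=\sum_x P_{\tilde X}(x)P_{Y|X}(y|x)$, because at $\epsilon=0$ the exponent on $P_{Y|X}(y|x')$ vanishes and the outer exponent equals $1$. Hence we face a $0\cdot\infty$ form, which we resolve by Taylor expansion: since $g_\epsilon$ is $C^1$ in $\epsilon$ near $0$ (on the regime where $P_{Y|X}(\cdot|x)\ll P_{Y|X}(\cdot|x')$ for all $x\in\mathrm{supp}(P_{\tilde X})$), we have $\log g_\epsilon = \epsilon\,\partial_\epsilon g_\epsilon\big|_{\epsilon=0}+o(\epsilon)$, and therefore $\big(\tfrac1\epsilon+\tfrac1\tau\big)\log g_\epsilon\to \partial_\epsilon g_\epsilon\big|_{\epsilon=0}$ as $\epsilon\to 0^+$. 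The whole computation thus reduces to this one derivative.

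Next I compute $\partial_\epsilon g_\epsilon\big|_{\epsilon=0}$ by the product rule applied termwise in $y$, differentiating the three $\epsilon$-dependent pieces. (i) The factor $A_y(\epsilon)=P_{Y|X}(y|x')^{-(\tau-1)\epsilon/(\tau+\epsilon)}$ has $A_y(0)=1$ and, using $\frac{d}{d\epsilon}\frac{\epsilon}{\tau+\epsilon}\big|_0=\frac1\tau$, $A_y'(0)=-(1-\tfrac1\tau)\log P_{Y|X}(y|x')$. (ii) The inner sum $C_y(\epsilon)=\sum_x P_{\tilde X}(x)P_{Y|X}(y|x)^{1+\epsilon}$ has $C_y(0)=P_Y(y)$ and $C_y'(0)=\sum_x P_{\tilde X}(x)P_{Y|X}(y|x)\log P_{Y|X}(y|x)$. (iii) The outer exponent $\frac{\tau}{\tau+\epsilon}$ has derivative $-\tfrac1\tau$ at $\epsilon=0$, which combined with (ii) gives $\partial_\epsilon\big(C_y(\epsilon)^{\tau/(\tau+\epsilon)}\big)\big|_0 = -\tfrac1\tau P_Y(y)\log P_Y(y)+C_y'(0)$. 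Assembling,
\[
\partial_\epsilon g_\epsilon\big|_{\epsilon=0}= -\Big(1-\tfrac1\tau\Big)\sum_y P_Y(y)\log P_{Y|X}(y|x') \;-\;\tfrac1\tau\sum_y P_Y(y)\log P_Y(y)\;+\;\sum_{x,y}P_{\tilde X}(x)P_{Y|X}(y|x)\log P_{Y|X}(y|x).
\]

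Then I identify the information-theoretic quantities: $-\tfrac1\tau\sum_y P_Y(y)\log P_Y(y)=\tfrac1\tau H(Y)$ and $\sum_{x,y}P_{\tilde X}(x)P_{Y|X}(y|x)\log P_{Y|X}(y|x)=-H(Y\mid\tilde X)$ (with $\tilde X\sim P_{\tilde X}$), so the last two terms sum to $\tfrac1\tau H(Y)-H(Y\mid\tilde X)$. Expanding the conditional KL divergence as $D:=D_{KL}\big(P_{Y|X}(y|x)\|P_{Y|X}(y|x')\mid P_{\tilde X}(x)\big)=-H(Y\mid\tilde X)-\sum_y P_Y(y)\log P_{Y|X}(y|x')$ and writing $I(\tilde X;Y)=H(Y)-H(Y\mid\tilde X)$, a short rearrangement shows $\partial_\epsilon g_\epsilon\big|_{\epsilon=0}=\tfrac1\tau I(\tilde X;Y)+(1-\tfrac1\tau)D$, which is precisely the bracketed expression in \eqref{eqn:tau-shannon-leakage}. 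It remains to move $\lim_{\alpha\to1}$ inside $\max_{x'}\sup_{P_{\tilde X}}$: the $\max$ over the finite set $\mathcal X$ is harmless, and for the $\sup$ one invokes joint continuity of $(\epsilon,P_{\tilde X})\mapsto\big(\tfrac1\epsilon+\tfrac1\tau\big)\log g_\epsilon(P_{\tilde X},x')$ on the compact set $[0,\bar\epsilon]\times\{P_{\tilde X}:P_{\tilde X}\ll P_X\}$ (the value at $\epsilon=0$ being the continuous extension computed above), which yields uniform convergence as $\epsilon\to0^+$ and hence convergence of the suprema.

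The main obstacle is the careful bookkeeping of the $0\cdot\infty$ limit together with the boundary cases: when some $y$ has $P_{Y|X}(y|x')=0$ while $P_{Y|X}(y|x)>0$ for an $x\in\mathrm{supp}(P_{\tilde X})$, the pre-limit quantity equals $+\infty$ for every $\epsilon>0$ (as does $D$), so these cases are mutually consistent and may be excluded from the differentiable regime without loss; everything else is routine Taylor expansion and standard entropy identities.
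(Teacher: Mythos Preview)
Your computation is correct, and the approach is genuinely different from the paper's. The paper works not with \eqref{alpha_tau_version} directly but with the variational form \eqref{alpha_tau_variational} from Lemma~\ref{lemma:alpha_tau_variational}, which carries an extra $\inf_{Q_Y}$. It then exploits the monotonicity in $\alpha$ (also from Lemma~\ref{lemma:alpha_tau_variational}) to replace $\lim_{\alpha\to 1}$ by $\inf_\alpha$, invokes a minimax swap (the objective being linear in $P_{\tilde X}$ and convex in $Q_Y$) to push $\inf_{Q_Y}$ outside $\sup_{P_{\tilde X}}$, reduces the inner $\sup_{P_{\tilde X}}$ to a $\max_x$ by linearity, applies L'H\^opital's rule pointwise, and finally computes the $\inf_{Q_Y}$ to recover $I(\tilde X;Y)$. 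By contrast, you Taylor-expand the primal expression \eqref{alpha_tau_version} directly at $\epsilon=\alpha-1=0$, obtain the target formula via the product rule and entropy identities, and justify the limit--sup interchange by a joint-continuity/compactness argument. Your route is shorter and avoids both the auxiliary $Q_Y$ and the minimax theorem; the paper's route is arguably cleaner on the interchange step, since monotonicity makes $\lim_{\alpha\to 1}=\inf_\alpha$ exact and the swaps are standard concave--convex minimax, whereas your uniform-convergence claim on $[0,\bar\epsilon]\times\Delta(\mathrm{supp}(P_X))$ relies on the (true but somewhat implicit) fact that the $C^1$ regularity of $\epsilon\mapsto g_\epsilon$ is uniform in $P_{\tilde X}$ over the compact simplex.
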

We call the quantity in \eqref{eqn:tau-shannon-leakage} the \emph{$\tau$-Shannon leakage}. A detailed proof of Proposition~\ref{prop4} is in Appendix~\ref{proof:alpha-equals-one-tau-leakage}. 
\begin{remark}
    For $\tau\to \infty$, $\tau$-Shannon leakage simplifies to  
        \begin{align}
  \nonumber&\mathcal{L}_{\alpha=1,\tau=\infty}(X\to Y)\\&=\displaystyle\lim_{\tau\to\infty} \lim_{\alpha\to 1} \mathcal{L}_{\alpha,\tau}(X\to Y)\\
    &= \max_{x'}\ \sup_{P_{\Tilde{X}}}\ D_{KL}\left(P_{Y|X}(y|x)\|P_{Y|X}(y|x')|P_{\tilde{X}}(x)\right)\\
    &=\max_{x,x'}\ D_{KL}\left(P_{Y|X}(y|x)\|P_{Y|X}(y|x')\right)\label{eqn:linear in p_tilde_x}.
\end{align}
 \eqref{eqn:linear in p_tilde_x} follows because $D_{KL}\left(P_{Y|X}(y|x)\|P_{Y|X}(y|x')|P_{\tilde{X}}(x)\right)$ is linear in $P_{\tilde{X}}$ and so the supremum is attained at an extreme point. This quantity is KL divergence. Also, for $\tau=1$, $\tau$-Shannon leakage is given by 
\begin{align}
   \mathcal{L}_{\alpha=1,\tau=1}(X\to Y)=\sup_{P_{\tilde{X}}} \ I(\tilde{X};Y)
\end{align}
which is Shannon channel capacity. So $\tau$-Shannon leakage smoothly tunes between KL divergence ($\tau=\infty$) and Shannon channel capacity ($\tau=1$).
\end{remark} \begin{theorem}\label{thm:continuity} Let $P_Y\ll P_{Y|X=x'}$ for each $x'\in \mathcal{X}$.
Maximal $(\alpha,\tau)$-leakage is continuous in $(\alpha,\tau)$ for all $(\alpha,\tau)\in[1,\infty]\times[1,\infty]$, and M$\alpha$beL is continuous in $(\alpha,\beta)$ for all $(\alpha,\beta)\in[1,\infty]\times[1,\infty]\setminus \{(1,1)\}$.    
\end{theorem}
The proof of Theorem~\ref{thm:continuity} can be found in Appendix~\ref{proof:continuity}. 
Similar to the definition of vector M$\alpha$beL, maximal $(\alpha,\tau)$-leakage can be generalized to a vector form.
\section{Continuous Alphabets}\label{section:continuous-alphabets}
In this section, we generalize Theorem~\ref{theorem:alpha-beta-leakage} and Theorem~\ref{theorem:conditional-alpha-beta-leakage}  to continuous alphabets.
\begin{theorem}\label{thm:continuous-alpha-beta-leakage}
Let $X$ and $Y$ be continuous random variables having a continuous joint pdf $f_{XY}$. M$\alpha$beL defined in \eqref{eqn:alpha,beta-leakage-original-def} simplifies to
\begin{align}\label{eqn:continousalphabet}
\nonumber&\mathcal{L}_{\alpha,\beta}(X\to Y)
=\max_{x': f_X(x')>0} \  \sup_{f_{\Tilde{X}}}\ \frac{\alpha}{(\alpha-1)\beta} \\&\  \log 
\int_{\mathcal{Y}} f_{Y|X}(y|x')^{1-\beta} \left(\int_{\mathcal{X}} f_{\Tilde{X}}(x) f_{Y|X}(y|x)^\alpha dx\right)^{\beta/\alpha}dy,
\end{align}
where $f_{\tilde{X}}$ is a pdf on $\mathcal{X}$.
\end{theorem}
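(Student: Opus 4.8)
\emph{Proof sketch.} The plan is to mirror the proof of Theorem~\ref{theorem:alpha-beta-leakage} line by line, replacing the sums over $\mathcal{X}$ and $\mathcal{Y}$ by integrals and tracking the measure-theoretic points that the continuity hypothesis on $f_{XY}$ is designed to handle. First I would fix $f_X$ (equivalently, restrict attention to densities $f_{\tilde X}$ supported on $\{x:f_X(x)>0\}$) and a Markov kernel $P_{U|X}$; by Definition~\ref{def:maximal-alpha-beta-leakage} the randomized function $U$ still takes values in a finite alphabet, so the inner optimizations over the estimators $P_{\hat U|Y}$ and $P_{\hat U}$ are \emph{unchanged} from the finite case. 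Since $P_{\hat U|Y}$ may be selected separately for each $y$ and $t\mapsto t^{\beta}$ is increasing, the Lagrange/Hölder computation gives $\max_{P_{\hat U|Y=y}}\sum_u P_{U|Y}(u|y)P_{\hat U|Y}(u|y)^{\frac{\alpha-1}{\alpha}}=\big(\sum_u P_{U|Y}(u|y)^{\alpha}\big)^{1/\alpha}$ for each $y$, and similarly $\max_{P_{\hat U}}\sum_u P_U(u)P_{\hat U}(u)^{\frac{\alpha-1}{\alpha}}=\big(\sum_u P_U(u)^{\alpha}\big)^{1/\alpha}$. Hence \eqref{eqn:alpha,beta-leakage-original-def} reduces to
\[
\sup_{f_X}\ \sup_{U\to X\to Y}\ \frac{\alpha}{\alpha-1}\,\log\frac{\Big(\int_{\mathcal Y} f_Y(y)\big(\sum_u P_{U|Y}(u|y)^{\alpha}\big)^{\beta/\alpha}\,dy\Big)^{1/\beta}}{\big(\sum_u P_U(u)^{\alpha}\big)^{1/\alpha}},
\]
which is precisely the continuous analogue of the corresponding intermediate step in the finite proof.

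Next I would substitute $f_Y(y)=\int_{\mathcal X} f_{Y|X}(y|x)f_X(x)\,dx$ and $P_{U|Y}(u|y)=\frac{1}{f_Y(y)}\int_{\mathcal X}P_{U|X}(u|x)f_{Y|X}(y|x)f_X(x)\,dx$, both well defined by continuity of $f_{XY}$; the $f_Y(y)$ factors cancel and the $\mathcal{Y}$-integrand becomes $f_Y(y)^{1-\beta}\big(\sum_u(\int_{\mathcal X}P_{U|X}(u|x)f_{Y|X}(y|x)f_X(x)\,dx)^{\alpha}\big)^{\beta/\alpha}$. From here the optimization over $P_{U|X}$ and $f_X$ is carried out exactly as in the finite case: the upper bound follows from Jensen's and Hölder's inequalities applied to the inner integral and to the sum over $u$, showing that for every admissible $U$ the ratio is at most the right-hand side of \eqref{eqn:continousalphabet} for a suitable $x'$ and $f_{\tilde X}$; the matching lower bound is obtained by letting $U$ encode whether $X$ lies in a shrinking neighbourhood of a fixed $x'$ together with an asymptotically perfect quantized description of $X$, and then passing to the limit. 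When $\beta>1$ the integral $\int_{\mathcal Y} f_{Y|X}(y|x')^{1-\beta}(\cdots)^{\beta/\alpha}\,dy$ may diverge, in which case both sides of \eqref{eqn:continousalphabet} equal $+\infty$ and there is nothing to prove.

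The step I expect to be the main obstacle is this lower-bound construction together with the interchange of limits and the integral over $\mathcal{Y}$: in the finite case one may let $U$ ``shatter'' $\mathcal X$ directly, but for an uncountable $\mathcal X$ this has to be realized through a sequence of finite quantizations $X_k\to X$, and one must argue --- using continuity of $f_{XY}$, dominated convergence (and monotone convergence when the limit is $+\infty$), and the approximation of the maximizing $f_{\tilde X}$ by a density concentrated near finitely many points --- that the resulting lower bounds converge to the claimed expression with no loss. A secondary technical point is that the outer supremum now ranges over the infinite-dimensional family of densities $f_X$ (resp. $f_{\tilde X}$), so interchanging it with the manipulations above requires a routine approximation argument, identical in spirit to the one reducing $\sup_{P_{\tilde X}}$ to finitely supported distributions in Theorem~\ref{theorem:alpha-beta-leakage}.
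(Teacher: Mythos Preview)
Your upper-bound argument is essentially identical to the paper's, and your identification of the lower bound via quantization as the crux is correct. However, the paper's route to the lower bound is structurally different from what you outline, and the difference is worth noting.

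You propose to re-run the ``shattering'' construction of $U$ directly in the continuous setting, building $U$ from a sequence of finite quantizations of $X$ and then interchanging limits with the $\mathcal{Y}$-integral. The paper instead avoids rebuilding $U$ altogether: it quantizes \emph{both} $X$ and $Y$ to finite-alphabet variables $\bar{X},\bar{Y}$ satisfying the Markov chain $\bar{X}-X-Y-\bar{Y}$, and then invokes the data processing inequalities (linkage and post-processing, proved in the appropriate mixed discrete/continuous forms) at the level of the \emph{operational definition} to obtain $\mathcal{L}_{\alpha,\beta}(X\to Y)\ge \mathcal{L}_{\alpha,\beta}(\bar{X}\to Y)\ge \mathcal{L}_{\alpha,\beta}(\bar{X}\to\bar{Y})$. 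Since $\bar{X},\bar{Y}$ are finite, Theorem~\ref{theorem:alpha-beta-leakage} applies as a black box to give the simplified expression for $\mathcal{L}_{\alpha,\beta}(\bar{X}\to\bar{Y})$, and the only remaining analysis is showing that these Riemann-type sums converge to the double integral in \eqref{eqn:continousalphabet} as the partitions refine.

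What each approach buys: the paper's DPI-based reduction is more modular --- it reuses Theorem~\ref{theorem:alpha-beta-leakage} wholesale and isolates the measure-theoretic limit to a single convergence-of-Riemann-sums step, with no need to revisit the shattering construction or to track the estimator optimization through the limit. Your direct approach is in principle fine but, as you yourself flag, carries more moving parts (the explicit $U$ construction, the limit interchange, and approximation of $f_{\tilde X}$), any one of which could hide a subtle gap; the paper sidesteps these by pushing the work into already-established properties of the leakage measure.
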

The proof of Theorem~\ref{thm:continuous-alpha-beta-leakage} and expressions similar to \eqref{eqn:continousalphabet} for the other cases, i.e., for discrete $X$ and continuous $Y$, and continuous $X$ and discrete $Y$ can be found in Appendix~\ref{proof:thm-continuous-alpha-beta-leakage}.
\begin{theorem}\label{thm:continuous-conditional-alpha-beta-leakage}
Let $X$, $Y$, and $Z$ be continuous random variables having a continuous joint pdf $f_{XYZ}$. The conditional M$\alpha$beL defined in \eqref{def:conditional_alpha-beta_leakage} simplifies to
\begin{align}\label{eq:continuous-conditional-simplified-expression}
   &\nonumber\mathcal{L}_{\alpha,\beta}(X \to Y|Z)\\\nonumber&=\max_{z} \ \max_{x'} \ \sup_{f_{\Tilde{X}|Z=z}} 
  \frac{\alpha}{(\alpha-1)\beta}  \log\Bigg[  \displaystyle \int_{\mathcal{Y}} f_{Y|X,Z}(y|x',z)^{1-\beta}\\&\ \times\left( \displaystyle\int_{\mathcal{X}} f_{Y|X,Z}(y|x,z)^\alpha f_{\Tilde{X}|Z=z}(x)\ dx \right)^{\frac{\beta}{\alpha}}dy\Bigg]
    \end{align}
where $f_{\tilde{X}|Z=z}$ is a pdf on the support of $f_{X|Z=z}$ for any $z$ that $f_Z(z)>0$.
\end{theorem}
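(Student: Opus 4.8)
\emph{Approach.} I would follow the proof of Theorem~\ref{theorem:conditional-alpha-beta-leakage} essentially line by line, replacing summations over $\mathcal{X},\mathcal{Y}$ by integrals and invoking the continuous unconditional simplification of Theorem~\ref{thm:continuous-alpha-beta-leakage} wherever the discrete proof uses Theorem~\ref{theorem:alpha-beta-leakage}; all the measure-theoretic care in the variables $x,y$ is already absorbed into Theorem~\ref{thm:continuous-alpha-beta-leakage}, which I will treat as a black box, so the only genuinely new work concerns the continuous conditioning variable $z$. Write $s=\tfrac{\alpha-1}{\alpha}$, $c=\tfrac{(\alpha-1)\beta}{\alpha}>0$, and, for each $z$ with $f_Z(z)>0$, let $\mathcal{L}^{(z)}$ be the unconditional maximal $(\alpha,\beta)$-leakage from $X$ to $Y$ for the channel $f_{Y|X,Z=z}$ with the input restricted to $\operatorname{supp}(f_{X|Z=z})$; by Theorem~\ref{thm:continuous-alpha-beta-leakage}, $\mathcal{L}^{(z)}$ equals exactly the bracketed expression (before the outer $\max_z$) on the right-hand side of \eqref{eq:continuous-conditional-simplified-expression}. \emph{Step 1 (collapse the estimators).} Since $U$ and $\hat{U}$ share a finite alphabet, both inner maxima in \eqref{def:conditional_alpha-beta_leakage} factor pointwise --- over $(z,y)$ in the numerator and over $z$ in the denominator --- and for a probability vector $p$ over that alphabet a Lagrange/Hölder computation gives $\max_{q}\big(\sum_u p_u q_u^{s}\big)^{\beta}=\big(\sum_u p_u^{\alpha}\big)^{\beta/\alpha}$ (the monotonicity of $t\mapsto t^{\beta}$ for $\beta\ge 1$ lets us maximize the inner sum). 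Substituting $p=P_{U|Z,Y}(\cdot|z,y)$ and $p=P_{U|Z}(\cdot|z)$ and using Fubini on $(Z,Y)$,
\begin{align*}
\mathcal{L}_{\alpha,\beta}(X\to Y|Z)=\sup_{P_{X|Z}}\ \sup_{U-X-Y|Z}\ \frac{1}{c}\log\frac{\mathbb{E}_{Z}\big[A(Z)\big]}{\mathbb{E}_{Z}\big[B(Z)\big]},
\end{align*}
where $A(z)=\mathbb{E}_{Y|Z=z}\big[\big(\sum_u P_{U|Z,Y}(u|z,Y)^{\alpha}\big)^{\beta/\alpha}\big]$ and $B(z)=\big(\sum_u P_{U|Z}(u|z)^{\alpha}\big)^{\beta/\alpha}>0$; because $U-X-Y|Z$ is the Markov chain $U-(X,Z)-Y$, for each fixed $z$ the pair $(A(z),B(z))$ is a functional of $f_{Y|X,Z=z},P_{X|Z=z},P_{U|X,Z=z}$ alone, and these may be varied freely and independently across $z$.

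\emph{Step 2 (converse).} Apply the mediant inequality $\mathbb{E}_Z[A(Z)]/\mathbb{E}_Z[B(Z)]\le\operatorname{ess\,sup}_{z:\,f_Z(z)>0}A(z)/B(z)$, pull the optimization over $z$ outermost, and observe that for fixed $z$ the remaining supremum of $\tfrac1c\log\frac{A(z)}{B(z)}$ over $(P_{X|Z=z},P_{U|X,Z=z})$ is, after the same estimator collapse, precisely the definition of $\mathcal{L}^{(z)}$ applied to $f_{Y|X,Z=z}$. Hence $\mathcal{L}_{\alpha,\beta}(X\to Y|Z)\le\operatorname{ess\,sup}_{z}\mathcal{L}^{(z)}$, which by Theorem~\ref{thm:continuous-alpha-beta-leakage} is exactly the right-hand side of \eqref{eq:continuous-conditional-simplified-expression}.

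\emph{Step 3 (achievability).} Fix any $z^{\star}$ with $f_Z(z^\star)>0$ and $\varepsilon>0$. By continuity of $f_{XYZ}$ there is $\rho>0$ such that the ball $\mathcal{B}_\rho(z^\star)$ has positive $P_Z$-measure and a strategy $(P_{X|Z=z^\star},P_{U|X,Z=z^\star})$ that is $\varepsilon$-optimal for $\mathcal{L}^{(z^\star)}$ also satisfies $A(z)/B(z)\ge e^{c(\mathcal{L}^{(z^\star)}-2\varepsilon)}$ for all $z\in\mathcal{B}_\rho(z^\star)$. Use that strategy for $z\in\mathcal{B}_\rho(z^\star)$, and for $z\notin\mathcal{B}_\rho(z^\star)$ let $U$ be independent of $(X,Y)$ given $Z$ and uniform on a fresh alphabet of size $N$, so that $A(z)=B(z)=N^{(1-\alpha)\beta/\alpha}\to 0$ as $N\to\infty$. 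Letting $N\to\infty$ kills the off-ball contributions to both $\mathbb{E}_Z[A(Z)]$ and $\mathbb{E}_Z[B(Z)]$, so the ratio tends to $\big(\int_{\mathcal{B}_\rho(z^\star)}f_Z A\big)\big/\big(\int_{\mathcal{B}_\rho(z^\star)}f_Z B\big)\ge e^{c(\mathcal{L}^{(z^\star)}-2\varepsilon)}$ by the mediant inequality used as a lower bound. Thus $\mathcal{L}_{\alpha,\beta}(X\to Y|Z)\ge\mathcal{L}^{(z^\star)}-2\varepsilon$; sending $\varepsilon\downarrow 0$ and taking the supremum over such $z^\star$ meets the converse, and replacing $\mathcal{L}^{(z^\star)}$ by the formula of Theorem~\ref{thm:continuous-alpha-beta-leakage} yields \eqref{eq:continuous-conditional-simplified-expression}.

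\emph{Main obstacle.} The estimator collapse and the Fubini bookkeeping are routine, and the $x,y$-side regularity is delegated to Theorem~\ref{thm:continuous-alpha-beta-leakage}; the one point needing genuine care is the continuous-$z$ achievability. Since no single worst-case value $z^\star$ carries positive mass, it must be approximated by a shrinking positive-measure neighborhood on which --- by continuity of $f_{XYZ}$ --- the channel, the near-optimal strategy, and the value $\mathcal{L}^{(z^\star)}$ all stay uniformly close to those at $z^\star$, while the ``spread $U$ uniformly over a size-$N$ alphabet'' device simultaneously drives $A(z)$ and $B(z)$ to $0$ off the neighborhood while pinning their ratio at $1$; this is precisely what forces the ratio of the two expectations to converge to the per-$z^\star$ value rather than to an average diluted by the other values of $z$.
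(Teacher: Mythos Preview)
Your proposal is essentially correct, but it follows a route genuinely different from the one the paper indicates. The paper gives no detailed argument; it simply states that the proof ``follows a similar approach to the proof of Theorem~\ref{thm:continuous-alpha-beta-leakage}.'' That approach is a quantization/Riemann-sum argument: one discretizes $X$, $Y$, and here also $Z$, invokes the discrete conditional result (Theorem~\ref{theorem:conditional-alpha-beta-leakage}) on the quantized triple, uses data-processing inequalities along the Markov chain $(\bar X,\bar Z)-(X,Z)-Y-\bar Y$ to compare, and then passes to the limit as the mesh shrinks and the truncation window grows. In particular, the continuous variable $Z$ is handled exactly the same way as $X$ and $Y$---by partitioning an interval into finitely many cells---so no separate continuity or neighborhood argument is needed; the per-$z$ maximum in \eqref{eqn:thm-conditional-alpha-beta-leakage} survives the Riemann limit directly.

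Your route is more modular: you first show $\mathcal{L}_{\alpha,\beta}(X\to Y|Z)=\sup_{z}\mathcal{L}^{(z)}$ by a mediant upper bound and a ``concentrate on a small ball in $z$, blow up $|\mathcal{U}_0|$ elsewhere'' lower bound, and only then invoke Theorem~\ref{thm:continuous-alpha-beta-leakage} as a black box for each fixed $z$. This cleanly separates the role of the conditioning variable from the role of the channel variables, and it reuses the already-proved unconditional continuous result rather than redoing the quantization for three variables. The price is exactly the point you flag as the main obstacle: because no single $z^\star$ has positive mass, you must argue uniform closeness of $A(z)/B(z)$ to $A(z^\star)/B(z^\star)$ on a shrinking ball, which in turn requires that $f_{Y|X,Z=z}$ and the support of $f_{X|Z=z}$ vary continuously in $z$ (so that the fixed $\varepsilon$-optimal strategy at $z^\star$ remains admissible and nearly as good nearby). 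Under the stated hypothesis of a continuous joint density this is defensible, but it is extra analytic work that the paper's uniform quantization of $(X,Y,Z)$ sidesteps entirely.
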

  The proof of Theorem~\ref{thm:continuous-conditional-alpha-beta-leakage} follows a similar approach to the proof of Theorem~\ref{thm:continuous-alpha-beta-leakage}, and expressions similar to \eqref{eq:continuous-conditional-simplified-expression} for the other cases, i.e., for discrete $X$ and continuous $Y$, and continuous $X$ and discrete $Y$, can be derived similarly. Moreover, by applying similar steps to the proof of Proposition \ref{prop1}, we can recover RDP, DP, and vector maximal R\'{e}nyi leakage as special cases of vector M$\alpha$beL
  for continuous alphabets.
\subsection{Results for known mechanisms}\label{sec:Illustration of Results}
\begin{figure*}[t!]
    \centering
    \begin{subfigure}[t]{0.45\textwidth}
        \centering
        \includegraphics[width=1\linewidth]{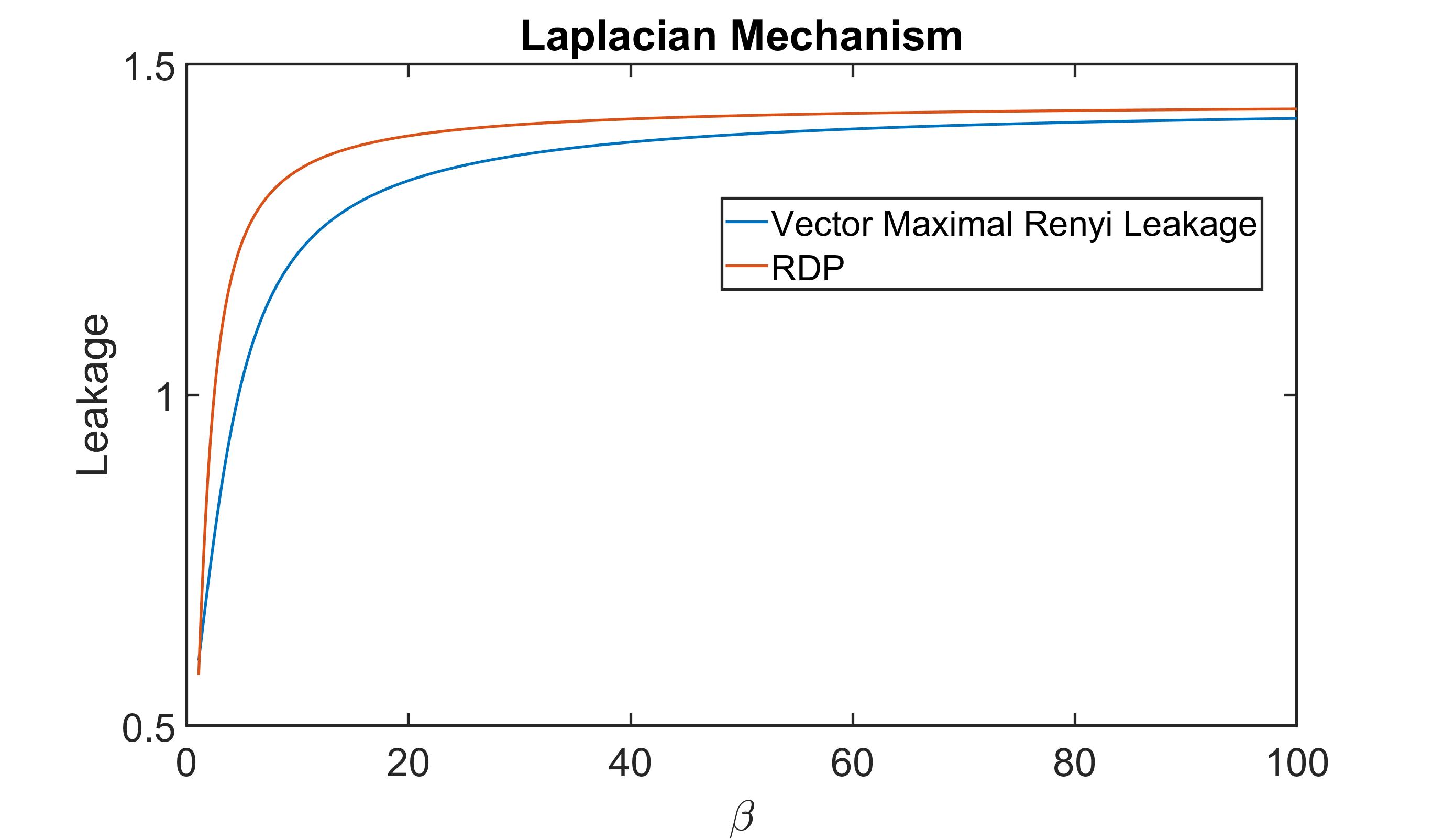}
        \caption{}
        \label{fig:lap-b=1}
    \end{subfigure}%
    ~ 
    \begin{subfigure}[t]{0.45\textwidth}
        \centering
        \includegraphics[width=1\linewidth]{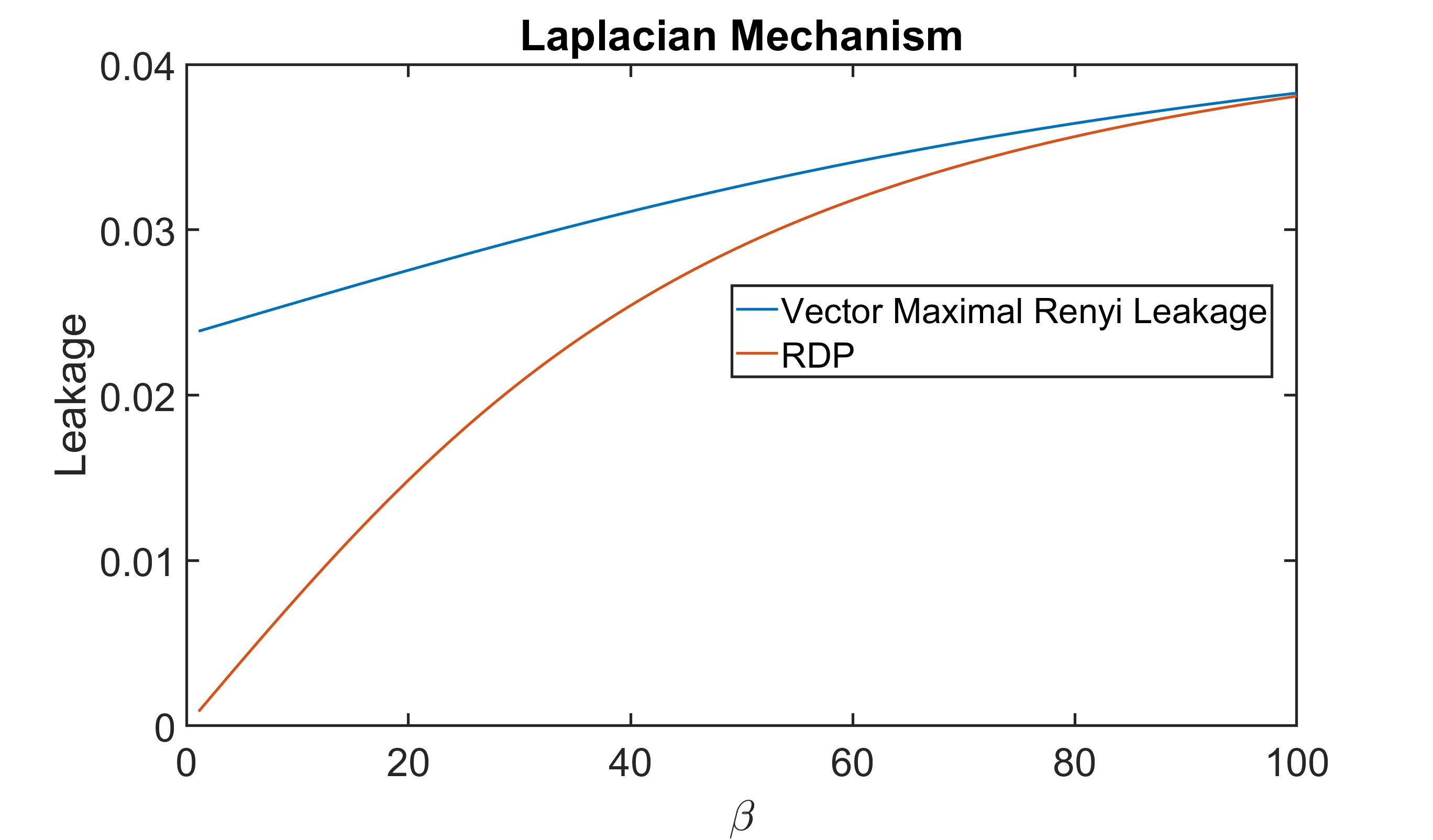}
        \caption{}
        \label{fig:lap-b=30}
    \end{subfigure}%
    \\
    \begin{subfigure}[t]{0.45\textwidth}
        \centering
        \includegraphics[width=1\linewidth]{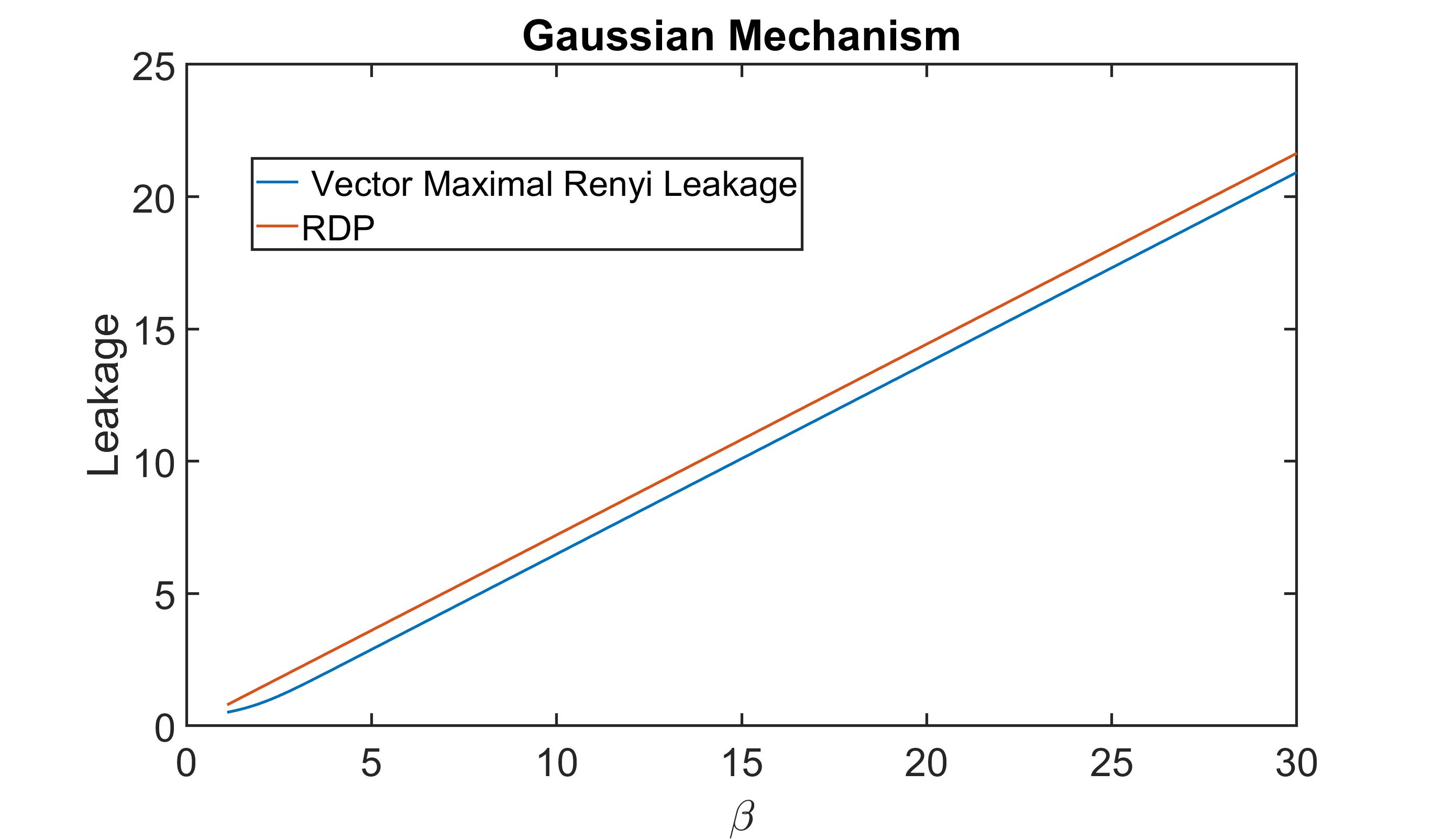}
         \caption{}
         \label{fig:gau-sig=1}
    \end{subfigure}%
    ~ 
    \begin{subfigure}[t]{0.45\textwidth}
        \centering
        \includegraphics[width=1\linewidth]{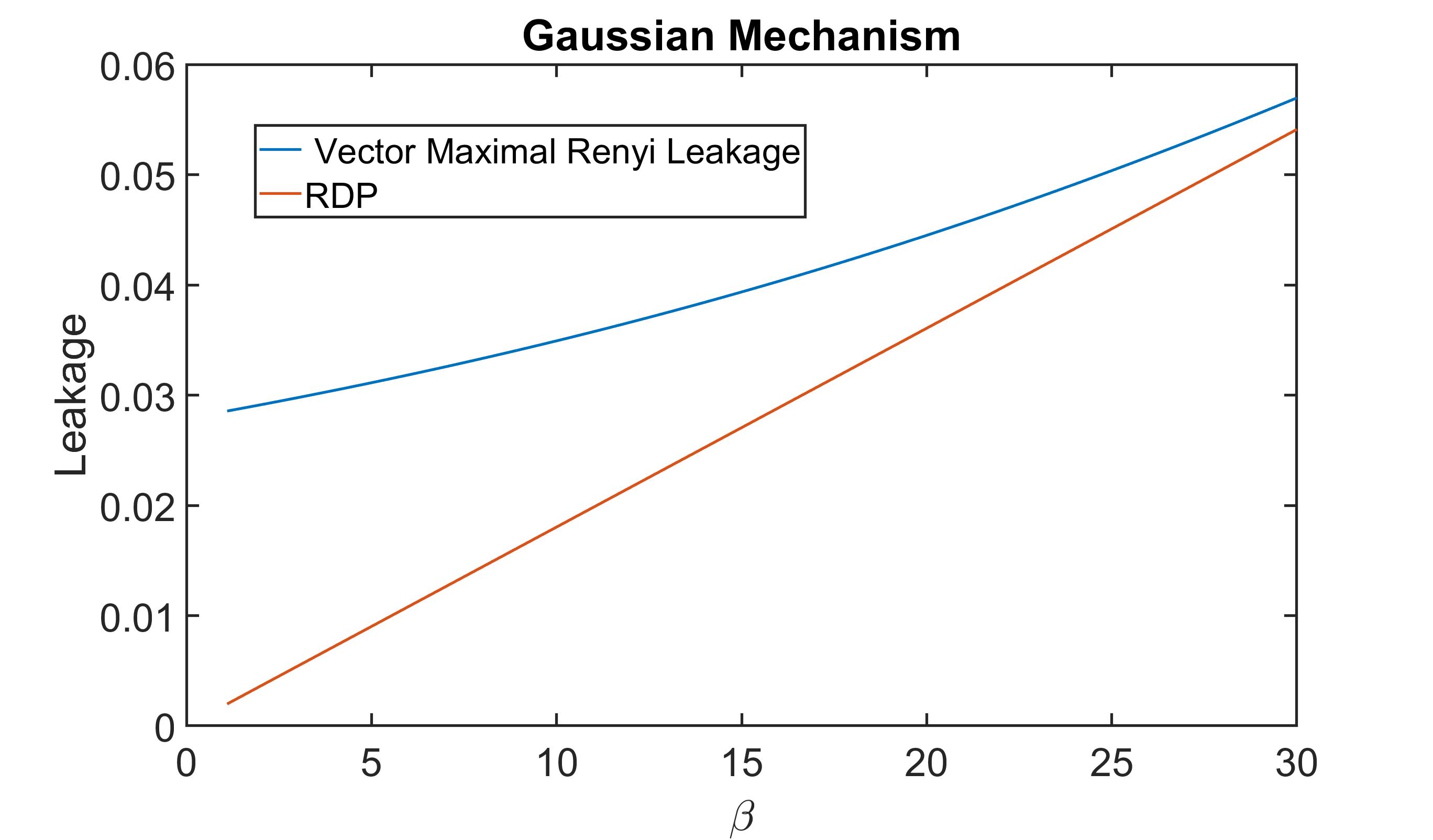}
         \caption{}
         \label{fig:gau-sig=20}
    \end{subfigure}
    \caption{Subplots \ref{fig:lap-b=1} and \ref{fig:lap-b=30} compare vector maximal Renyi leakage and Renyi differential privacy under Laplacian mechanisms with $b=1$ and $b=30$, respectively. Subplots \ref{fig:gau-sig=1} and \ref{fig:gau-sig=20} compare vector maximal Renyi leakage and Renyi differential privacy under Gaussian mechanisms with $\sigma=1$ and $\sigma=20$, respectively. In all subplots, we consider a situation in which the upper bounds \eqref{eq:lemma-Laplacian} and \eqref{eq:lemma-Gaussian} are achieved with equality, and the sensitivity parameter is 1.}
    \label{fig:compare-RDP}
\end{figure*}
In this section, we show how vector M$\alpha$beL relaxes differential privacy %(resp. local differential privacy) 
through vector maximal R\'{e}nyi leakage 
%(resp. maximal R\'{enyi leaakge}) 
under Gaussian and Laplacian mechanisms (see Fig.~\ref{fig:compare-RDP}). The proofs for this section can be found in Appendix~\ref{proof:lemma-Laplacian-mechanism}. 
\begin{proposition}\label{lemma:Laplacian-mechanism-differential}
 Let $h :\mathcal{X}^n \to \mathbb{R}$ be a real-valued function such that for all $i,x_{-i},x_{i},\tilde{x}_i$ we have $|h(x_{-i},x_i)-h(x_{-i},\tilde{x}_i)|\leq \delta$, where $x^n=\{x_{-i},x_i\}$ and $\tilde{x}^n=\{x_{-i},\tilde{x}_i\}$ are neighboring datasets, and $\delta$ is a sensitivity parameter. For $\beta \in (1,\infty)$,
 \begin{itemize}
     \item if $\mathcal{M}(x^n)=h(x^n)+N$ where $N \sim Lap\;(0,b)$, then 
     \end{itemize}
     \begin{align}\label{eq:lemma-Laplacian}
   &\nonumber\mathcal{L}_{\infty,\beta}^{\text{vec}}(X^n\to {M}(X^n))\leq \frac{1}{\beta}\\&  \ \log \left[  \frac{1}{2}-\frac{1}{2(\beta-1)}+\left(\frac{1}{2}+\frac{1}{2(\beta-1)}\right) \exp \left(\frac{(\beta-1)\delta}{b}\right)\right];
\end{align} 
%%%
\begin{itemize}
\item if $\mathcal{M}(x^n)=h(x^n)+N$ where $N \sim \mathcal{N}(0,\sigma^2)$, then
\begin{align}\label{eq:lemma-Gaussian}
   &\nonumber\mathcal{L}_{\infty,\beta}^{\text{vec}}(X^n\to {M}(X^n))\\\nonumber&\leq \frac{1}{\beta} \ \log \Bigg[  \frac{1}{2}+\frac{1}{2\sqrt{\beta-1}} \erfi\left(\sqrt{\frac{\beta-1}{2\sigma^2}} \ \delta\right)\\&\ +\frac{1}{2} \exp{\left(\frac{\beta (\beta-1) \delta^2}{2 \sigma^2}\right)}\left(1+\erf\left(\frac{(\beta-1)\delta}{\sqrt{2}\sigma}\right)\right)\Bigg],
\end{align} 
where $\erf$ indicates the error function, that is, 
    $\erf(x)=\frac{2}{\sqrt{\pi}}\int_0^x e^{-t^2}\ dt$, 
and $\erfi$ indicates the imaginary error function, that is, $\erfi(x)=\frac{2}{\sqrt{\pi}}\int_0^x e^{t^2}\ dt.$
 \end{itemize}
Both upper bounds are achieved with equality if there exist $i$ and $x_{-i}$ such that the function $h(x_{-i},x_i)$ is surjective in $x_i$ and $\displaystyle\max_{x_{i},\tilde{x}_i}|h(x_{-i},x_i)-h(x_{-i},\tilde{x}_i)|= \delta$.
\end{proposition}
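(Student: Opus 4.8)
The plan is to begin from the continuous-alphabet form of the simplified expression \eqref{eqn:centralizedRendileakage} for vector maximal R\'enyi leakage, available for this mechanism by the continuous analogue of Theorem~\ref{thm:continuous-conditional-alpha-beta-leakage}: writing $Y=\mathcal{M}(X^n)$,
\begin{align*}
\mathcal{L}_{\infty,\beta}^{\text{vec}}(X^n\to Y)=\max_{i,x_i',x_{-i}}\frac1\beta\log\int_{\mathbb{R}} f_{Y|X_i,X_{-i}}(y|x_i',x_{-i})^{1-\beta}\Bigl(\max_{x_i} f_{Y|X_i,X_{-i}}(y|x_i,x_{-i})\Bigr)^{\beta}\,dy.
\end{align*}
For the additive-noise mechanism the conditional density is $f_{Y|X_i,X_{-i}}(y|x_i,x_{-i})=f_N\bigl(y-h(x_{-i},x_i)\bigr)$, so, writing $\mu':=h(x_{-i},x_i')$, the quantity inside $\max_{i,x_i',x_{-i}}$ becomes $\tfrac1\beta\log\int f_N(y-\mu')^{1-\beta}\bigl(\max_{x_i} f_N(y-h(x_{-i},x_i))\bigr)^{\beta}\,dy$. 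Rescaling $y$ so that $b=1$ (resp.\ $\sigma=1$) turns the sensitivity bound into $\delta/b$ (resp.\ $\delta/\sigma$); this is the only source of the $b$ and $\sigma$ dependence in \eqref{eq:lemma-Laplacian}--\eqref{eq:lemma-Gaussian}, so I proceed with the scale normalized to $1$.

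Next I would use the sensitivity hypothesis. Taking $\tilde x_i=x_i'$ shows every value $h(x_{-i},x_i)$ lies within $\delta$ of $\mu'$; in fact the range $R:=\{h(x_{-i},x_i):x_i\in\mathcal{X}\}$ has diameter $\delta'\le\delta$ and contains $\mu'$, so $R\subseteq[a,a+\delta']$ with $\mu'=a+c$ for some $c\in[0,\delta']$. Since $f_N$ is symmetric and unimodal, $\max_{x_i} f_N(y-h(x_{-i},x_i))\le\sup_{t\in[a,a+\delta']} f_N(y-t)$, and this supremum equals $f_N(0)$ on $[a,a+\delta']$ and decays off the interval as a function of the distance to it. Substituting this bound, the integral over $\mathbb{R}$ splits into the interval $[a,a+\delta']$ and its two tails; on each of the three pieces the integrand is an elementary exponential (Laplacian) or a Gaussian (Gaussian case), so it can be integrated in closed form, yielding a function $T(c,\delta')$. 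For the Laplacian one obtains $T(c,\delta')=\bigl(\tfrac12+\tfrac1{2(\beta-1)}\bigr)\bigl(e^{(\beta-1)c}+e^{(\beta-1)(\delta'-c)}\bigr)-\tfrac1{\beta-1}$; for the Gaussian the interior piece is $\int e^{(\beta-1)s^2/2}\,ds$, recognized as an $\erfi$-term, while each tail, after completing the square, contributes a factor $e^{\beta(\beta-1)c^2/2}$ (resp.\ $e^{\beta(\beta-1)(\delta'-c)^2/2}$) times a standard-normal CDF, i.e.\ an $\erf$-factor.

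It then remains to maximize $T(c,\delta')$ over $c\in[0,\delta']$ and $\delta'\in[0,\delta]$. Writing $T(c,\delta')=T_1(c)+T_1(\delta'-c)$, each summand is convex --- an exponential plus a constant for the Laplacian, and for the Gaussian one checks by differentiating twice (using $\beta>1$) that $T_1$ is convex on $[0,\infty)$: the $\erfi$-term has non-negative second derivative there, and for the product of $e^{\beta(\beta-1)c^2/2}$ and $1+\erf$ the second derivative is non-negative after bounding $1+\erf\ge1$ and comparing the leading quadratic coefficients. Convexity together with the $c\leftrightarrow\delta'-c$ symmetry forces the maximum over $c$ to occur at $c\in\{0,\delta'\}$, while $T(0,\delta')$ is plainly non-decreasing in $\delta'$; hence the overall maximum is $T(0,\delta)$, which is exactly the bracketed quantity in \eqref{eq:lemma-Laplacian} (resp.\ \eqref{eq:lemma-Gaussian}). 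For the equality claim, every inequality used is tight precisely when for some $i,x_{-i}$ the range $R$ is an interval of length exactly $\delta$ --- so that $h(x_{-i},\cdot)$ is surjective onto it (making the $\max_{x_i}$-to-$\sup_t$ step an equality) and $\delta'=\delta$ --- and $x_i'$ is chosen with $\mu'$ at an endpoint of $R$ (so $c\in\{0,\delta\}$); since the outer $\max_{x_i'}$ automatically picks such an $x_i'$, this matches the stated condition.

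The step I expect to be the main obstacle is the Gaussian case: the tail integrals have to be manipulated into standard-normal-CDF form and the interior integral into $\erfi$, and the convexity of $T_1$ --- needed to reduce the optimization over $c$ to its endpoints --- requires a short but slightly delicate second-derivative estimate comparing $\beta(\beta-1)$ with $(\beta-1)^2/2$. By contrast the Laplacian case is routine once the interval bound $\max_{x_i}f_N(y-h(x_{-i},x_i))\le\sup_{t\in[a,a+\delta']}f_N(y-t)$ and the elementary integration are carried out.
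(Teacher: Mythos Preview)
Your proposal is correct and follows essentially the same approach as the paper: start from the continuous form of \eqref{eqn:centralizedRendileakage}, enclose the range of $h(x_{-i},\cdot)$ in an interval of length at most $\delta$, bound $\max_{x_i}f_N(y-h(x_{-i},x_i))$ by the supremum over that interval, split the integral into the interval and its two tails, evaluate in closed form, and then use convexity in the location parameter of $\mu'$ plus monotonicity in the interval length to reduce to the endpoint case $c=0$, $\delta'=\delta$. The only notable difference is that you sketch the Gaussian convexity argument explicitly, whereas the paper simply states that the Gaussian case ``follows similarly'' and carries out the details only for the Laplacian mechanism.
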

For $n=1$, the upper bounds on vector maximal R\'{e}nyi leakage in \eqref{eq:lemma-Laplacian} and \eqref{eq:lemma-Gaussian} collapse to  upper bounds on maximal R\'{e}nyi leakage under Laplacian and Gaussian mechanisms, respectively.
%%%%%%%%%%%%%%%%%%%%%%%%%%%%%%%%
\section{Conclusion}
In this paper, we have introduced a new measure of information leakage called maximal $(\alpha,\beta)$-leakage that bridges several existing leakage measures, including maximal $\alpha$-leakage $(\beta=1)$, maximal leakage $(\alpha=\infty,\beta=1)$, (local) differential privacy $(\alpha=\infty,\beta=\infty)$, (local) R\'{enyi} differential privacy $(\alpha=\beta)$, and a variant of RDP, which we call vector maximal R\'{e}nyi leakage ($\alpha=\infty$). This provides a much-needed operational interpretation to (local) RDP. We believe that our work has taken a step towards identifying the common characteristics of various information leakage measures despite their diversity. For example, our formulation allows us to smoothly transition from average-case leakage measures to worst-case leakage measures by exploiting the interplay between the parameters $\alpha$ and $\beta$. Finally, we posit that the unification provided by our guessing framework allows us to tailor the proposed leakage measure to study privacy-utility tradeoffs under different settings depending on the context.

\appendices
\section{Proof of Theorem~\ref{theorem:alpha-beta-leakage}}\label{proof:thm-alpha-beta-leakage}
For $\alpha\in (1,\infty)$ and $\beta \in [1,\infty)$, we first bound $\mathcal{L}_{\alpha,\beta}(X \to Y)$ from above and then, give an achievable scheme.\\
\textbf{Upper Bound:}
Consider the optimization in the denominator of \eqref{eqn:alpha,beta-leakage-original-def}:
\begin{align}
\max_{P_{\hat{U}}} \sum_u P_U(u)P_{\hat{U}}(u)^{\frac{\alpha-1}{\alpha}}.
\end{align}
This is solved by
\begin{equation}
P_U(u) P_{\hat{U}}(u)^{-1/\alpha}=\nu
\end{equation}
for some constant $\nu$. So we have
\begin{align}
P_{\hat{U}}(u)=\frac{P_U(u)^\alpha}{\sum_{u'} P_U(u')^\alpha}.
\end{align}
Thus the denominator becomes
\begin{align}\label{eqn:optimization-result}
\sum_u P_U(u) \left(\frac{P_U(u)^\alpha}{\sum_{u'} P_U(u')^\alpha}\right)^{\frac{\alpha-1}{\alpha}}=\left(\sum_u P_U(u)^\alpha\right)^{\frac{1}{\alpha}}.
\end{align}
Similarly, the numerator becomes
\begin{align}
\left[\sum_y P_Y(y) \left(\sum_u P_{U|Y}(u|y)^\alpha\right)^{\beta/\alpha}\right]^{1/\beta}.
\end{align}
Thus, the logarithmic term in \eqref{eqn:alpha,beta-leakage-original-def} reduces to
\begin{align}
& \log\frac{\left[\sum_y P_Y(y) \left(\sum_u P_{U|Y}(u|y)^\alpha\right)^{\beta/\alpha}\right]^{1/\beta}}
{\left(\sum_u P_U(u)^\alpha\right)^{1/\alpha}}
\\&= \log
\frac{\left[
\sum_y P_Y(y)^{1-\beta} \left(\sum_u P_{U,Y}(u,y)^\alpha\right)^{\beta/\alpha}
\right]^{1/\beta}}
{\left(\sum_u P_U(u)^\alpha\right)^{1/\alpha}}
\\\label{eq:simplified-leakage}
&=\frac{1}{\beta}\log
\sum_y P_Y(y)^{1-\beta} \left(\frac{\sum_u P_U(u)^\alpha P_{Y|U}(y|u)^\alpha}{\sum_u P_U(u)^\alpha}\right)^{\frac{\beta}{\alpha}}.
\end{align}
Using Jensen's inequality and the Markov chain $U-X-Y$, we have
\begin{align}
P_{Y|U}(y|u)^\alpha &= \left(\sum_{x} P_{X|U}(x|u)P_{Y|X}(y|x)\right)^\alpha\\
&\le \sum_{x} P_{X|U}(x|u) P_{Y|X}(y|x)^\alpha.
\end{align}
So M$\alpha$beL may be bounded from above by
\begin{align}
&\nonumber\mathcal{L}_{\alpha,\beta}(X \to Y)\\&\nonumber\leq
\sup_{P_{X}}\sup_{U\to X\to Y} \frac{\alpha}{(\alpha-1)\beta}
\log
\sum_y P_Y(y)^{1-\beta} \\&\ \times \left(\frac{\displaystyle\sum_{u,x} P_{U}(u)^\alpha P_{X|U}(x|u) P_{Y|X}(y|x)^\alpha}{\sum_u P_U(u)^\alpha} \right)^{\frac{\beta}{\alpha}}
\\\nonumber
&\leq \sup_{P_{X}}\ \sup_{P_{\Tilde{X}}} \frac{\alpha}{(\alpha-1)\beta}\log
\sum_y P_Y(y)^{1-\beta}\\& \ \times  \left(\sum_{x} P_{\Tilde{X}}(x) P_{Y|X}(y|x)^\alpha \right)^{\frac{\beta}{\alpha}}\label{eq:upper-bound}
\end{align}
where
\begin{equation}
P_{\Tilde{X}}(x)=\frac{\sum_{u} P_{U}(u)^\alpha P_{X|U}(x|u)}{\sum_u P_U(u)^\alpha}.
\end{equation}
%%%%
\textbf{Lower Bound:} The proof is based on the expression  in \eqref{eq:simplified-leakage} as well as ``shattering'' method. Consider  a random variable $U$ such that $U\to X \to Y$ form a Markov chain and $H(X|U)=0$. For each $x$, let $\mathcal{U}_{x}$ be a finite set such that  $U=u \in \mathcal{U}_{x}$ if and only if $X=x$ and $\mathcal{U}=\bigcup_{x \in \mathcal{X}} \mathcal{U}_{x}$. Moreover, given $X=x$ let $U$ be uniformly distributed on $\mathcal{U}_{x}$.  That is, 
\begin{align}
 P_{U|X}(u|x)=\begin{cases} \displaystyle \frac{1}{|\mathcal{U}_{x}|} & \text{for all}~ u \in \mathcal{U}_{x} \\ 0 & \text{otherwise,}\end{cases}   
\end{align}
and so
\begin{align}
P_{Y|U}(y|u)=\begin{cases} P_{Y|X}(y|x) & \text{for all}~ u \in \mathcal{U}_{x} \\ 0 & \text{otherwise.}\end{cases}
\end{align}
Therefore, we have
\begin{align}
   &\frac{\sum_u P_U(u)^\alpha P_{Y|U}(y|u)^\alpha}{\sum_u P_U(u)^\alpha}\\
   &=\frac{\sum_{x \in \mathcal{X}}\sum_{u\in \mathcal{U}_{x}} \left( \displaystyle\frac{P_X(x) P_{U|X}(u|x)}{P_{X|U}(x|u)}\right)^\alpha P_{Y|U}(y|u)^\alpha}{\sum_{x \in \mathcal{X}}\sum_{u\in \mathcal{U}_{x}}\left( \displaystyle\frac{P_X(x) P_{U|X}(u|x)}{P_{X|U}(x|u)}\right)^\alpha}\\
   &=\frac{\sum_{x} |\mathcal{U}_{x}|^{1-\alpha} P_{X}(x)^\alpha P_{Y|X}(y|x)^\alpha}{\sum_{x} |\mathcal{U}_x|^{1-\alpha} P_{X}(x)^\alpha}.
\end{align}
So we may bound M$\alpha$beL from below by
\begin{align}
\nonumber&\mathcal{L}_{\alpha,\beta}(X\to Y)\\\nonumber&\geq
\sup_{P_{X}} \ \sup_{\mathcal{U}_x}\frac{\alpha}{(\alpha-1)\beta}
\ \log \sum_y P_Y(y)^{1-\beta}\\& \ \times  \left(\frac{\sum_{x} |\mathcal{U}_{x}|^{1-\alpha} P_{X}(x)^\alpha P_{Y|X}(y|x)^\alpha}{\sum_{x} |\mathcal{U}_x|^{1-\alpha} P_{X}(x)^\alpha}\right)^{\frac{\beta}{\alpha}}\\\nonumber
&=\sup_{P_X}\ \sup_{P_{\Tilde{X}}} \frac{\alpha}{(\alpha-1)\beta}
\log
\sum_y P_Y(y)^{1-\beta} \\&\ \times \left(\sum_{x} P_{\Tilde{X}}(x) P_{Y|X}(y|x)^\alpha \right)^{\frac{\beta}{\alpha}}\label{eq:result-lower-finite}
\end{align}
where
\begin{equation}
P_{\Tilde{X}}(x)=\frac{|\mathcal{U}_{x}|^{1-\alpha}P_{X}(x)^\alpha}{\sum_{x} |\mathcal{U}_x|^{1-\alpha} P_{X}(x)^\alpha},
\end{equation}
and we have used the fact that any distribution $P_{\Tilde{X}}(x)$ can be reached with appropriate choice of $|\mathcal{U}_{x}|$, assuming $P_{X}(x)>0$ for all $x$; this condition can be assumed because any $P_{X}$ is arbitrarily close to a distribution with full support. Thus, combining \eqref{eq:upper-bound} and \eqref{eq:result-lower-finite}, we have
\begin{align}\label{result thm1}
\nonumber&\mathcal{L}_{\alpha,\beta}(X\to Y)=\sup_{P_X}\ \sup_{P_{\Tilde{X}}} \frac{\alpha}{(\alpha-1)\beta}\\& \ 
\log
\sum_y P_Y(y)^{1-\beta} \left(\sum_{x} P_{\Tilde{X}}(x) P_{Y|X}(y|x)^\alpha \right)^{\frac{\beta}{\alpha}}.
\end{align}
Note that the choice of $P_X$ only impacts $P_Y$, and the quantity inside the log is convex in $P_Y$. Since the supremum of a convex function is attained at an extreme point, we may simplify \eqref{result thm1} as follows.
\begin{align}
&\nonumber \mathcal{L}_{\alpha,\beta}(X\to Y)
=\max_{x'} \  \sup_{P_{\Tilde{X}}}\frac{\alpha}{(\alpha-1)\beta}\\& \ \log \sum_y P_{Y|X}(y|x')^{1-\beta} \left(\sum_{x} P_{\Tilde{X}}(x) P_{Y|X}(y|x)^\alpha \right)^{\beta/\alpha}.
\end{align}
We now obtain the expression of maximal R\'{e}nyi leakage. We first bound maximal R\'{e}nyi leakage from above as follows. 
\begin{align}
\nonumber &\mathcal{L}_{\alpha,\beta}(X\to Y)
\\\nonumber&=\max_{x'} \  \sup_{P_{\Tilde{X}}} \ \frac{\alpha}{(\alpha-1)\beta}  \  \log 
\sum_y P_{Y|X}(y|x')^{1-\beta} \\& \ \times  \left(\sum_{x} P_{\Tilde{X}}(x)\ P_{Y|X}(y|x)^\alpha \right)^{\beta/\alpha}\label{proof:max-Renyi-start}\\
&\nonumber\le 
\max_{x'} \  \sup_{P_{\Tilde{X}}} \ \frac{\alpha}{(\alpha-1)\beta}  \ \log 
\sum_y P_{Y|X}(y|x')^{1-\beta}\\& \ \times \left(\sum_{x} P_{\Tilde{X}}(x)\ \max_x P_{Y|X}(y|x)^\alpha \right)^{\beta/\alpha}\\\nonumber
&= 
\max_{x'} \  \sup_{P_{\Tilde{X}}} \ \frac{\alpha}{(\alpha-1)\beta} \ \log 
\sum_y P_{Y|X}(y|x')^{1-\beta}\\& \ \times \left( \max_x P_{Y|X}(y|x)^\alpha \sum_{x} P_{\Tilde{X}}(x) \right)^{\beta/\alpha}\\\nonumber
&= 
\frac{\alpha}{(\alpha-1)\beta} \ \max_{x'} \ \log 
\sum_y P_{Y|X}(y|x')^{1-\beta} \\& \ \times \max_x P_{Y|X}(y|x)^\beta.
\end{align}
So
\begin{align}
    &\nonumber\displaystyle\lim_{\alpha \to \infty}\mathcal{L}_{\alpha,\beta}(X\to Y)\\&\leq \frac{1}{\beta} \ \max_{x'} \ \log 
\sum_y P_{Y|X}(y|x')^{1-\beta} \ \max_x P_{Y|X}(y|x)^\beta.\label{eq:UB-Renyi-leakage}
\end{align}
We now provide an achievable scheme. We have
\begin{align}
    &\nonumber\mathcal{L}_{\alpha,\beta}(X\to Y)\\\nonumber
&=\max_{x'} \  \sup_{P_{\Tilde{X}}} \ \frac{\alpha}{(\alpha-1)\beta} \ \log 
\sum_y P_{Y|X}(y|x')^{1-\beta} \\& \ \times\left(\sum_{x} P_{\Tilde{X}}(x) \ P_{Y|X}(y|x)^\alpha \right)^{\beta/\alpha}\\
&\nonumber\geq \max_{x'} \  \sup_{P_{\Tilde{X}}} \ \frac{\alpha}{(\alpha-1)\beta} \ \log 
\sum_y P_{Y|X}(y|x')^{1-\beta} \\& \ \times\left( P_{\Tilde{X}}(x^*_y) \ P_{Y|X}(y|x^*_y)^\alpha \right)^{\beta/\alpha}\\
&\nonumber\geq \frac{\alpha}{(\alpha-1)\beta}\ \max_{x'} \ \log 
\sum_y P_{Y|X}(y|x')^{1-\beta} \\& \ \times\ |\mathcal{X}|^{-\frac{\beta}{\alpha}} \ P_{Y|X}(y|x^*_y)^\beta\\\nonumber
&\nonumber= \frac{\alpha}{(\alpha-1)\beta}\ \max_{x'} \ \Bigg[\log |\mathcal{X}|^{-\frac{\beta}{\alpha}}  \\&\ +
\log \sum_y P_{Y|X}(y|x')^{1-\beta} \ P_{Y|X}(y|x^*_y)^\beta\bigg]\\
&\nonumber=  -\frac{1}{\alpha-1} \log |\mathcal{X}|+ \frac{\alpha}{(\alpha-1)\beta} \\& \ \max_{x'}\
\log \sum_y P_{Y|X}(y|x')^{1-\beta} \ P_{Y|X}(y|x^*_y)^\beta\\
&\nonumber=  -\frac{1}{\alpha-1} \log |\mathcal{X}| + \frac{\alpha}{(\alpha-1)\beta}\\& \ \max_{x'}\
\log \sum_y P_{Y|X}(y|x')^{1-\beta} \ \max_x P_{Y|X}(y|x)^\beta
\end{align}
where $x^*_y= \displaystyle\argmax_x P_{Y|X}(y|x)$ for $y\in\mathcal{Y}$. So
\begin{align}
    &\nonumber\displaystyle\lim_{\alpha\to \infty}\mathcal{L}_{\alpha,\beta}(X\to Y)\\&\geq \frac{1}{\beta}\ \max_{x'}\
\log \sum_y P_{Y|X}(y|x')^{1-\beta} \max_x P_{Y|X}(y|x)^\beta.\label{eq:LB-Renyi-leakage}
\end{align}
Combining \eqref{eq:UB-Renyi-leakage} and \eqref{eq:LB-Renyi-leakage}, we get
\begin{align}
   \nonumber& \displaystyle\lim_{\alpha\to \infty}\mathcal{L}_{\alpha,\beta}(X\to Y)\\&= \frac{1}{\beta}\ \max_{x'}\
\log \sum_y P_{Y|X}(y|x')^{1-\beta} \max_x P_{Y|X}(y|x)^\beta.\label{proof:max-Renyi-end}
\end{align}
\section{Proof of Theorem~\ref{theorem:properties}}\label{proof:thm-properties}
\textbf{Monotonicity in $\beta$:} For $\alpha\in (1,\infty)$, $\beta_1,\beta_2 \in [1,\infty)$ and $\beta_2 > \beta_1$, consider the argument of the logarithm in \eqref{eqn:thm-alpha-beta-leakage}:
\begin{align}
& 
\sum_y P_{Y|X}(y|x')^{1-\beta_1}\left(\sum_{x} P_{\Tilde{X}}(x) P_{Y|X}(y|x)^\alpha\right)^{\frac{\beta_1}{\alpha}}\\\nonumber
 &=
 \sum_{y}P_{Y|X}(y|x')\Bigg( P_{Y|X}(y|x')^{-\alpha}\\&\ \times \sum_{x} P_{\Tilde{X}}(x
    ) P_{Y|X}(y|x)^{\alpha}\Bigg)^{\frac{\beta_2\beta_1}{\alpha\beta_2}}\\\nonumber
 &\leq \Bigg(
\sum_{y} P_{Y|X}(y|x')\bigg(P_{Y|X}(y|x')^{-\alpha}\\&\ \times \sum_{x} P_{\Tilde{X}}(x
    ) P_{Y|X}(y|x)^{\alpha}\bigg)^{\frac{\beta_2}{\alpha}}\Bigg)^{ \frac{\beta_1}{\beta_2}}\\
 &= \left(
\sum_{y} P_{Y|X}(y|x')^{1-\beta_2}\bigg(\sum_{x} P_{\Tilde{X}}(x
    ) P_{Y|X}(y|x)^{\alpha}\bigg)^{\frac{\beta_2}{\alpha}}\right)^{ \frac{\beta_1}{\beta_2}}
    \end{align}
    where the inequality results from applying Jensen's inequality to the concave function $f:\;x\to x^{p}\;(x \geq 0,\; p< 1)$. For $\alpha \in (1,\infty)$ and $\beta \in [1,\infty)$, the function $f: t \to \frac{\alpha}{(\alpha-1)\beta} \log t$ is increasing in $t>0$. Therefore, we have
    \begin{align}
&\nonumber\frac{\alpha}{(\alpha-1)\beta_1}  \log \sum_y P_{Y|X}(y|x')^{1-\beta_1} \\& \  \times\left(\sum_{x} P_{\Tilde{X}}(x)  P_{Y|X}(y|x)^\alpha \right)^{\frac{\beta_1}{\alpha}}
\\\nonumber\leq & \frac{\alpha}{(\alpha-1)\beta_2}\log\sum_{y} P_{Y|X}(y|x')^{1-\beta_2}\\&\ \times\left( \sum_{x} P_{\Tilde{X}}(x
    ) P_{Y|X}(y|x)^{\alpha}\right)^{\frac{\beta_2}{\alpha}}.
\end{align}
Taking the maximum over $x'$ and supremum over $P_{\Tilde{X}}$ completes the proof. Another way to prove this property is to consider the numerator in \eqref{eqn:alpha,beta-leakage-original-def} as the $\beta$-norm of a random variable. Since the $\beta$-norm of a random variable is non-decreasing in $\beta$, maximal $(\alpha,\beta)$-leakage is non-decreasing in $\beta$.\\
%%%
\textbf{Data processing inequalities:} Let random variables $X,Y,Z$ form a Markov chain, i.e., $X-Y-Z$.
Based on the expression of maximal $(\alpha,\beta)$-leakage in \eqref{result thm1} we first prove the post-processing inequality, that is
\begin{align}
        \mathcal{L}_{\alpha,\beta}(X\to Z) \leq \mathcal{L}_{\alpha,\beta}(X \to Y).
\end{align}
 For any $y \in \mathcal{Y}$, let
\begin{align}\label{def:g}
    g(y)=\left( \sum_{x} P_{\Tilde{X}}(x) P_{Y|X}(y|x)^{\alpha}\right)^{\frac{1}{\alpha}}
\end{align}
and
\begin{align}\label{def:c}
    c_z(y)=\displaystyle\frac{P_{Y}(y)\; P_{Z|Y}(z|y)}{P_{Z}(z)}
\end{align}
such that $\sum_y c_z(y)=1$. We have
\begin{align}
    &\sum_y P_Y(y)^{1-\beta} \left( \sum_{x} P_{\Tilde{X}}(x) P_{Y|X}(y|x)^{\alpha}\right)^{\frac{\beta}{\alpha}}
    \\&=\sum_y P_{Y}(y)^{1-\beta}g(y)^\beta
    \\&=\sum_{y,z} P_Y(y) P_{Z|Y}(z|y)\left(\frac{g(y)}{P_Y(y)}\right)^\beta
    \\&=\sum_{z} P_Z(z) \sum_y c_z(y)\left(\frac{g(y)}{P_Y(y)}\right)^\beta 
    \\&\ge \sum_z P_Z(z) \left(\sum_y c_z(y)\frac{g(y)}{P_Y(y)}\right)^\beta \label{jensens_application1}
    \\&=\sum_z P_{Z}(z)^{1-\beta} \left(\sum_y P_{Z|Y}(z|y)g(y)\right)^\beta\label{g_form}
\end{align}
where \eqref{jensens_application1} follows from applying Jensen's inequality to the convex function $f:\;x\to x^{p}\;(x \geq 0,\; p\geq 1)$.
Recalling the definition of $g(y)$ from \eqref{def:g}, we have
\begin{align}
    &\sum_y P_{Z|Y}(z|y) g(y)
    \\&=\sum_y P_{Z|Y}(z|y) \bigg( \sum_{x} P_{\Tilde{X}}(x)  P_{Y|X}(y|x)^{\alpha}\bigg)^{\frac{1}{\alpha}}
    \\&=\sum_y \bigg(\sum_{x}  \Big( P_{\Tilde{X}}(x)^{\frac{1}{\alpha}}P_{Z|Y}(z|y)  P_{Y|X}(y|x)\Big)^{\alpha}\bigg)^{\frac{1}{\alpha}}
    \\&\ge \bigg(\sum_{x}  \Big( \sum_y  P_{\Tilde{X}}(x)^{\frac{1}{\alpha}}P_{Z|Y}(z|y)  P_{Y|X}(y|x)\Big)^{\alpha}\bigg)^{\frac{1}{\alpha}}\label{ineq:norm}
    \\&=\left(\sum_{x} P_{\Tilde{X}}(x)  P_{Z|X}(z|x)^{\alpha}\right)^{\frac{1}{\alpha}}\label{applying Markov chain}
\end{align}
where 
\begin{itemize}
    \item  \eqref{ineq:norm} follows because $p$-norm satisfies the triangle inequality for $p\in (1,\infty)$,
    \item \eqref{applying Markov chain} follows because the Markov chain $X - Y - Z$ holds.
\end{itemize}
Applying \eqref{applying Markov chain} to \eqref{g_form}, and using the fact that for $\alpha \in (1,\infty)$ and $\beta \in [1,\infty)$, the function $f: t \to \frac{\alpha}{(\alpha-1)\beta} \log t$ is increasing in $t > 0$, gives
\begin{align}
   & \frac{\alpha}{(\alpha-1)\beta} \log \sum_y P_{Y}(y)^{1-\beta}\left( \sum_{x} P_{\Tilde{X}}(x) P_{Y|X}(y|x)^{\alpha}\right)^{\frac{\beta}{\alpha}}\nonumber\\ &\geq  \frac{\alpha}{(\alpha-1)\beta} \log  \sum_z P_{Z}(z)^{1-\beta} \left(\sum_{x} P_{\Tilde{X}}(x)  P_{Z|X}(z|x)^{\alpha}\right)^{\frac{\beta}{\alpha}}.
\end{align}
Taking suprema over $P_X$ and $P_{\Tilde{X}}$ completes the proof.

We now prove the linkage inequality, that is
\begin{align}
      \mathcal{L}_{\alpha,\beta}(X\to Z) \leq \mathcal{L}_{\alpha,\beta}(Y \to Z),
\end{align}
using the definition of maximal $(\alpha,\beta)$-leakage in \eqref{eqn:alpha,beta-leakage-original-def}. 
Let
\begin{align}
    \nonumber &f(P_{UZ})=\frac{\alpha}{\alpha-1}\\& \log  \frac{\displaystyle \max_{P_{\hat{U}|Z}} \left[\sum_z P_Z(z) \left(\sum_u P_{U|Z}(u|z) P_{\hat{U}|Z}(u|z)^{\frac{\alpha-1}{\alpha}}\right)^{\beta}\right]^{1/\beta}}{\displaystyle \max_{P_{\hat{U}}} \sum_u P_U(u)P_{\hat{U}}(u)^{\frac{\alpha-1}{\alpha}}}.
\end{align}
For the Markov chain $X-Y-Z$, we have
\begin{align}
  \mathcal{L}_{\alpha,\beta}(X\to Z)&=\sup_{P_{X}} \ \sup_{U\to X\to Z}  \ f(P_{UZ})
\\\label{expand_Markov}
 &= \sup_{P_{X}} \ \sup_{U\to X \to Y\to Z } f(P_{UZ})
\\
 &\le  \sup_{P_{X}} \ \sup_{U\to Y \to Z } f(P_{UZ})\\\label{P_x to P_y}
 &\le  \sup_{P_{Y}} \ \sup_{U\to Y\to Z} f(P_{UZ})\\\nonumber
 &=\mathcal{L}_{\alpha,\beta}(Y\to Z)
\end{align}
where \eqref{expand_Markov} follows because $P_{UZ}$ are the same under the Markov chains $U-X-Z$ and $U-X-Y-Z$, and \eqref{P_x to P_y} follows from the fact that a subset of all distributions $P_{Y}$ is reachable from the distribution $P_{X}$.
%%%
\\
\textbf{Non-negativity:} Consider the logarithmic term in \eqref{eqn:thm-alpha-beta-leakage}:
\begin{align}
 & \log 
\sum_y P_{Y|X}(y|x')^{1-\beta} \left(\sum_{x} P_{\Tilde{X}}(x) P_{Y|X}(y|x)^\alpha \right)^{\frac{\beta}{\alpha}}\\
\label{jensen1}
&\geq  \log
\sum_{y} P_{Y|X}(y|x')^{1-\beta}\bigg(\sum_{x}  P_{\Tilde{X}}(x) P_{Y|X}(y|x)\bigg)^{\beta}\\
&=\log
\sum_{y} P_{Y|X}(y|x')\left(\frac{\sum_{x} P_{\Tilde{X}}(x) P_{Y|X}(y|x)}{P_{Y|X}(y|x')}\right)^{\beta}\\
\label{jensen2}
&\geq \log
 \left(\sum_{y} P_{Y|X}(y|x') \ \frac{\sum_{x} P_{\Tilde{X}}(x) P_{Y|X}(y|x)}{P_{Y|X}(y|x')}\right)^{\beta}\\
&=\log
\bigg(\sum_{x,y} P_{\Tilde{X}}(x) P_{Y|X}(y|x)\bigg)^{\beta}=\log 1 =0
\end{align}
where both inequalities follow from applying Jensen's inequality to the convex function $f:\;x\to x^{p}\;(x \geq 0,\; p\geq1)$ and the fact that logarithmic functions are increasing. Equality holds in the first inequality if and only if for any $y \in \mathcal{Y}$, $P_{Y|X}(y|x)$ are the same for all $x\in \mathcal{X}$. Thus, we have
\begin{align}
    P_{Y|X}(y|x)=P_{Y}(y) \quad x\in \mathcal{X}, y \in \mathcal{Y}
\end{align}
which means $X$ and $Y$ are independent. This condition is also sufficient for equality in the second inequality. 
%%%%%%%
\\
\textbf{Additivity:}
We first prove additivity for $n=2$. We have $P_{X_1Y_1X_2Y_2}=P_{X_1Y_1}\cdot P_{X_2Y_2}$. To prove the additivity in \eqref{eqn:additivity}, using Theorem~\ref{theorem:alpha-beta-leakage} it suffices to show that
\begin{align}\label{eqn:additivityproof1}
   &\nonumber\sup_{P_{\tilde{X}_1,\tilde{X}_2}} \sum_{y_1,y_2} P_{Y_1Y_2|X_1X_2}(y_1,y_2|x_1',x_2')^{1-\beta}
   \\&\ \times\left(\sum_{x_1,x_2} P_{\Tilde{X}_1,\Tilde{X}_2}(x_1,x_2) P_{Y_1Y_2|X_1X_2}(y_1,y_2|x_1,x_2)^\alpha \right)^{\beta/\alpha}\\\nonumber
   &=\sup_{\substack{P_{\tilde{X}_i}\\ i\in{1,2}}}\prod_{i=1}^2\Bigg(\sum_{y_i} P_{Y|X}(y_i|x_i')^{1-\beta} \\& \ \times \left(\sum_{x_i} P_{\Tilde{X_i}}(x_i)
   P_{Y_i|X_i}(y_i|x_i)^\alpha \right)^{\beta/\alpha}\Bigg),
\end{align}
for every $x_1',x_2'$.
We simplify LHS in \eqref{eqn:additivityproof1} as 
\begin{align}
    &\nonumber \sup_{P_{\tilde{X}_1,\tilde{X}_2}} \sum_{y_1,y_2} P_{Y_1Y_2|X_1X_2}(y_1,y_2|x_1',x_2')^{1-\beta}
   \\&\ \times\Big(\sum_{x_1,x_2} P_{\Tilde{X}_1,\Tilde{X}_2}(x_1,x_2) P_{Y_1Y_2|X_1X_2}(y_1,y_2|x_1,x_2)^\alpha \Big)^{\beta/\alpha}\nonumber\\\nonumber
   =&\sup_{P_{\tilde{X}_1,\tilde{X}_2}} \sum_{y_1,y_2} P_{Y_1|X_1}(y_1|x_1')^{1-\beta}P_{Y_2|X_2}(y_2|x_2')^{1-\beta}\\&\Big(\sum_{x_1,x_2} P_{\Tilde{X}_1,\Tilde{X}_2}(x_1,x_2) P_{Y_1|X_1}(y_1|x_1)^\alpha P_{Y_2|X_2}(y_2|x_2)^\alpha \Big)^{\beta/\alpha}\label{eqn:additivityproof2}.
\end{align}
Let $k(y_1)=\sum_{x_1}P_{\tilde{X}_1}(x_1)P_{Y_1|X_1}(y_1|x_1)^\alpha$, for all $y_1$, so that we can define a set of probability distributions over $\mathcal{X}_1$ as
\begin{align}
    P_{\hat{X}_1}(x_1|y_1)=\frac{P_{\tilde{X}_1}(x_1)P_{Y_1|X_1}(y_1|x_1)^\alpha}{k(y_1)}. 
\end{align}
Thus, \eqref{eqn:additivityproof2} is equal to
\begin{align}
    &\nonumber\sup_{P_{\tilde{X}_1,\tilde{X}_2}} \sum_{y_1,y_2} P_{Y_1|X_1}(y_1|x_1')^{1-\beta}P_{Y_2|X_2}(y_2|x_2')^{1-\beta}\\\nonumber& \ \Big[\sum_{x_1,x_2} k(y_1)P_{\hat{X}_1|Y_1}(x_1|y_1)P_{\tilde{X}_2|\tilde{X}_1}(x_2|x_1)
   \\& \ P_{Y_2|X_2}(y_2|x_2)^\alpha \Big]^{\beta/\alpha}\\\nonumber
   &\leq\sup_{P_{\tilde{X}_1},P_{\tilde{X}_2|X_1}}\sum_{y_1}P_{Y_1|X_1}(y_1|x_1)^{1-\beta}\\\nonumber&\big(\sum_{x_1}P_{\tilde{X}_1}(x_1)
   P_{Y_1|X_1}(y_1|x_1)^\alpha\big)^{\frac{\beta}{\alpha}}\max_{\tilde{y}_1}\sum_{y_2}P_{Y_2|X_2}(y_2|x_2')^{1-\beta} \\&\Big(\sum_{x_1,x_2}P_{\hat{X}_1|Y_1}(x_1|\tilde{y}_1)P_{\tilde{X}_2|\tilde{X}_1}(x_2|x_1)P_{Y_2|X_2}(y_2|x_2)^\alpha \Big)^{\beta/\alpha}\\\nonumber
   &=\sup_{P_{\tilde{X}_1},P_{\tilde{X}_2|X_1}}\sum_{y_1}P_{Y_1|X_1}(y_1|x_1)^{1-\beta}\\&\ \left(\sum_{x_1}P_{\tilde{X}_1}(x_1) P_{Y_1|X_1}(y_1|x_1)^\alpha\right)^{\frac{\beta}{\alpha}}\sum_{y_2}P_{Y_2|X_2}(y_2|x_2')^{1-\beta}\nonumber\\
   & \ \Big(\sum_{x_1,x_2}P_{\hat{X}_1|Y_1}(x_1|{y}_1^*)P_{\tilde{X}_2|\tilde{X}_1}(x_2|x_1)P_{Y_2|X_2}(y_2|x_2)^\alpha \Big)^{\beta/\alpha}\label{eqn:additivityproof3}
\end{align}
We now define $$P_{\hat{X}_2}(x_2)=\sum_{x_1}P_{\hat{X}_1|Y_1}(x_1|y_1^*)P_{X_2|X_1}(x_2|x_1),$$ which is a probability distribution over $\mathcal{X}_2$. Then, \eqref{eqn:additivityproof3} is equal to
\begin{align}
    &\nonumber\sup_{\substack{P_{\tilde{X}_1},\\ P_{\hat{X}_2}}}\sum_{y_1}P_{Y_1|X_1}(y_1|x_1')^{1-\beta}\big(\sum_{x_1}P_{\tilde{X}_1}(x_1)P_{Y_1|X_1}(y_1|x_1)^\alpha\big)^{\frac{\beta}{\alpha}}\\& \ 
    \times \sum_{y_2}P_{Y_2|X_2}(y_2|x_2')^{1-\beta}\big(\sum_{x_2}P_{\hat{X}_2}(x_2)P_{Y_2|X_2}(y_2|x_2)^\alpha\big)^{\frac{\beta}{\alpha}}\\\nonumber
    &=\sup_{\substack{P_{\tilde{X}_i}\\ i\in{1,2}}}\prod_{i=1}^2\Big(\sum_{y_i} P_{Y|X}(y_i|x_i')^{1-\beta}\\& \hspace{5mm}\times  \big(\sum_{x_i} P_{\Tilde{X_i}}(x_i) P_{Y_i|X_i}(y_i|x_i)^\alpha \big)^{\beta/\alpha}\Big).
\end{align}
This proves \eqref{eqn:additivityproof1} as the lower bound part of \eqref{eqn:additivityproof1} is trivial. Thus we have 
\begin{align}\label{n=2case}
    \mathcal{L}_{\alpha,\beta}(X_1,X_2\!\rightarrow\! Y_1,Y_2)\!=\!\mathcal{L}_{\alpha,\beta}(X_1\rightarrow Y_1)\!+\!\mathcal{L}_{\alpha,\beta}(X_2\rightarrow Y_2).
\end{align}
Using \eqref{n=2case} twice, we have 
\begin{align}
    &\mathcal{L}_{\alpha,\beta}(X^3\rightarrow Y^3)\nonumber\\
    &=\mathcal{L}_{\alpha,\beta}(X^2\rightarrow Y^2)+\mathcal{L}_{\alpha,\beta}(X_3\rightarrow Y_3)\\
    &=\mathcal{L}_{\alpha,\beta}(X_1\rightarrow Y_1)+\mathcal{L}_{\alpha,\beta}(X_2\rightarrow Y_2)+\mathcal{L}_{\alpha,\beta}(X_3\rightarrow Y_3).
\end{align}
Similarly, by repeated application of \eqref{n=2case} $(n-1)$ times, we get
\eqref{eqn:additivity}.
\section{Proof of Theorem~\ref{theorem:conditional-alpha-beta-leakage}}\label{proof:thm-conditional-alpha-beta-leakage}
For $\alpha\in (1,\infty)$ and $\beta \in [1,\infty)$, we first bound $\mathcal{L}_{\alpha,\beta}(X \to Y|Z)$ from above and then, present an achievable scheme.\\
\textbf{Upper Bound:} Similarly to \eqref{eqn:optimization-result}, the numerator and denominator of \eqref{def:conditional_alpha-beta_leakage} become 
\begin{equation}
    \sum_{z,y} P_{Z,Y}(z,y) \left(\sum_u P_{U|Z,Y}(u|z,y)^\alpha\right)^{\beta/\alpha}
\end{equation}
and
\begin{equation}
    \sum_{z} P_{Z}(z) \left(\sum_u P_{U|Z}(u|z)^\alpha\right)^{\beta/\alpha},
\end{equation}
respectively. Thus, the logarithmic term in \eqref{def:conditional_alpha-beta_leakage} reduces to
\begin{align}
  \label{eqn:reduced-log}   &
 \log\frac{\displaystyle\sum_{z,y} P_{Z,Y}(z,y) \left(\sum_u P_{U|Z,Y}(u|z,y)^\alpha\right)^{\beta/\alpha}}
{\displaystyle\sum_{z} P_{Z}(z) \left(\sum_u P_{U|Z}(u|z)^\alpha\right)^{\beta/\alpha}}
\\&= 
 \log\frac{\displaystyle\sum_{z} P_{Z}(z) \sum_y P_{Y|Z}(y|z)  \left(\sum_u P_{U|Z,Y}(u|z,y)^\alpha\right)^{\beta/\alpha}}
{\displaystyle\sum_{z} P_{Z}(z) \left(\sum_u P_{U|Z}(u|z)^\alpha\right)^{\beta/\alpha}}
\\ \label{eqn:mediant-inequality}
&\le\log \ \max_{z} \ \frac{\displaystyle \sum_y P_{Y|Z}(y|z)  \left(\sum_u P_{U|Z,Y}(u
  |z,y)^\alpha\right)^{\beta/\alpha}}
{\displaystyle \left(\sum_u P_{U|Z}(u|z)^\alpha\right)^{\beta/\alpha}}
\\ \label{eq:reduced-log-term-upper-bound}
&=\max_{z} \
  \log\frac{\displaystyle \sum_y P_{Y|Z}(y|z) ^{1-\beta} \left(\sum_u P_{Y,U|Z}(y,u|z)^\alpha\right)^{\beta/\alpha}}
{\displaystyle  \left(\sum_u P_{U|Z}(u|z)^\alpha\right)^{\beta/\alpha}}
\end{align}
where \eqref{eqn:mediant-inequality} follows from the fact that for any non-negative $a_i$ and $b_i$, we have $\frac{\sum_i a_i}{\sum_i b_i}\leq \max_i \frac{a_i}{b_i}$. Moreover, we have
\begin{align}
 P_{Y,U|Z}(y,u|z)^\alpha  =P_{U|Z}(u|z)^\alpha P_{Y|U,Z}(y|u,z)^\alpha
\end{align}
and 
\begin{align}
    P_{Y|U,Z}(y|u,z)^\alpha
    &=\left(\sum_{x} P_{X|U,Z}(x|u,z) P_{Y|X,Z}(y|x,z) \right)^\alpha\\
    \label{eqn:conditional-jenson}
     &\le\sum_{x} P_{X|U,Z}(x|u,z) P_{Y|X,Z}(y|x,z)^\alpha
\end{align}
where the equality follows because $U-X-Y|Z$ holds, and the inequality follows from applying Jensen's inequality to the convex function $f:\;x\to x^{p}\;(x \geq 0,\; p> 1)$. Applying \eqref{eqn:conditional-jenson} to \eqref{eq:reduced-log-term-upper-bound}, we may bound $\mathcal{L}_{\alpha,\beta}(X \to Y|Z)$ from above by
  \begin{align}
\nonumber &\mathcal{L}_{\alpha,\beta}(X\to Y|Z) \\ 
&\nonumber\le\sup_{P_{X|Z}} \ \sup_{U\to X\to Y|Z} 
   \max_{z} \ \frac{\alpha}{(\alpha-1)\beta}\ \log \displaystyle \sum_y P_{Y|Z}(y|z) ^{1-\beta}\\& \
 \left(\frac{ \displaystyle\sum_{x} P_{Y|X,Z}(y|x,z)^\alpha  \sum_{u} P_{U|Z}(u|z)^\alpha P_{X|U,Z}(x|u,z) }
{\displaystyle \sum_u P_{U|Z}(u|z)^\alpha}\right)^{\frac{\beta}{\alpha}}\\\nonumber 
& \le\max_{z}  \sup_{P_{X|Z=z}} \sup_{\substack{P_{\Tilde{X}|Z=z} \\\ll P_{X|Z=z}}} \ \frac{\alpha}{(\alpha-1)\beta}
 \log \displaystyle \sum_y P_{Y|Z}(y|z) ^{1-\beta}\\& \ \left( \displaystyle\sum_{x} P_{Y|X,Z}(y|x,z)^\alpha P_{\Tilde{X}|Z=z}(x)\right)^{\frac{\beta}{\alpha}}\label{eqn:upper_bound-result}
\end{align}
where
\begin{equation}
      P_{\Tilde{X}|Z=z}(x)=\frac{ \displaystyle \sum_{u} P_{U|Z}(u|z)^\alpha P_{X|U,Z}(x|u,z) }
{\displaystyle \sum_u P_{U|Z}(u|z)^\alpha}.
\end{equation}
%\textcolor{red}{Here we have $P_{\Tilde{X}|Z=z}\ll P_{X|U,Z}\ll P_{{X|Z=z}}$.}\\
\textbf{Lower Bound:} For this proof, we use the expression in \eqref{eqn:reduced-log} as well as shattering method.  For a given conditional distribution $P_{Y|X,Z}$, let 
\begin{align}\label{def:z^*}
  \nonumber z^*&=\argmax_{z}  \sup_{P_{X|Z=z}}  \sup_{\substack{P_{\Tilde{X}|Z=z}\\\ll P_{X|Z=z}}}
\displaystyle \sum_y P_{Y|Z}(y|z) ^{1-\beta}\\& \ \left( \displaystyle\sum_{x} P_{\Tilde{X}|Z=z}(x) P_{Y|X,Z}(y|x,z)^\alpha \right)^{\frac{\beta}{\alpha}}
\end{align}
and 
\begin{align}
    \mathcal{X}_{z^*}=\{x \in \mathcal{X}: P_{X,Z}(x,z^*)>0\}.
\end{align}
Consider  a random variable $U$ whose alphabet consists of several disjoint subsets. 
For all $x \in \mathcal{X}_{z^*}$, let $\mathcal{U}_{x,z^*}$ be disjoint, finite sets. Moreover, let $\mathcal{U}_0$ be a finite set (disjoint from those above) such that $\mathcal{U}=\mathcal{U}_0 \cup \bigcup_{x\in \mathcal{X}_{z^*}}\mathcal{U}_{x,z^*}$.
We now define the conditional distribution $P_{U|X,Z}$ as 
\begin{align}
 P_{U|X,Z}(u|x,z)=\begin{cases} \displaystyle \frac{1}{|\mathcal{U}_{x,z^*}|}, & z=z^*, u \in \mathcal{U}_{x,z^*} \\
 \frac{1}{|\mathcal{U}_0|}, & z \ne z^*, u \in \mathcal{U}_0
 \\0, & \text{otherwise.}\end{cases}   
\end{align}
So the numerator of \eqref{eqn:reduced-log} reduces to
\begin{align}
 & \displaystyle\sum_{z,y} P_{Z,Y}(z,y) \left(\sum_u P_{U|Z,Y}(u|z,y)^\alpha\right)^{\beta/\alpha}\\\nonumber
  \label{eqn:total-term}
   & =\displaystyle\sum_{z\ne z^*,y} P_{Z,Y}(z,y) \left(\sum_u P_{U|Z,Y}(u|z,y)^\alpha\right)^{\beta/\alpha}\\& \ + P_Z(z^*)\displaystyle\sum_{y}  P_{Y|Z}(y|z^*)^{1-\beta} \left(\sum_u P_{U,Y|Z}(u,y|z^*)^\alpha\right)^{\beta/\alpha} .
\end{align}
where
\begin{align}
&\left(\sum_u P_{U|Z,Y}(u|z,y)^\alpha\right)^{\beta/\alpha}\\
\label{eqn:Markov-lower-bound1}
&= \left(\sum_u\left(\sum_x P_{X|Z,Y}(x|z,y)P_{U|Z,X}(u|z,x)\right)^\alpha\right)^{\beta/\alpha}\\
&= \left(|\mathcal{U}_0|\left(\sum_x P_{X|Z,Y}(x|z,y) \ \frac{1}{|\mathcal{U}_0|}\right)^\alpha\right)^{\beta/\alpha}\\ 
\label{eqn:reduced-term1}
&= \left(|\mathcal{U}_0|^{1-\alpha}\right)^{\beta/\alpha}
\end{align}
and
\begin{align}
& \left(\sum_u P_{U,Y|Z}(u,y|z^*)^\alpha\right)^{\beta/\alpha} \\\nonumber
\label{eqn:Markov-lower-bound2}
&= \bigg(\sum_u\bigg[\sum_{x'} P_{X|Z}(x'|z^*)\ \\& \hspace{3mm}\times P_{U|X,Z}(u|x',z^*)  P_{Y|X,Z}(y|x',z^*) \bigg]^\alpha\bigg)^{\beta/\alpha} 
\\\nonumber 
%
%\label{eqn:Markov-lower-bound2}
&= \bigg(\sum_x \displaystyle \sum_{u \in \mathcal{U}_{x,z^*}}\bigg[\sum_{x'} P_{X|Z}(x'|z^*)\ \\&\hspace{3mm}\times P_{U|X,Z}(u|x',z^*)  P_{Y|X,Z}(y|x',z^*) \bigg]^\alpha\bigg)^{\beta/\alpha} 
\\\nonumber 
&= \Bigg(\sum_x |\mathcal{U}_{x,z^*}|\Bigg[ P_{X|Z}(x|z^*)\\&\hspace{3mm}\times \frac{1}{|\mathcal{U}_{x,z^*}|} \ P_{Y|X,Z}(y|x,z^*) \Bigg]^\alpha\Bigg)^{\frac{\beta}{\alpha}} 
\\
\label{eqn:reduced-term2}
&= \left(\sum_x |\mathcal{U}_{x,z^*}|^{1-\alpha} P_{X|Z}(x|z^*)^\alpha\  P_{Y|X,Z}(y|x,z^*)^\alpha\right)^{\beta/\alpha}. 
\end{align}
Here \eqref{eqn:Markov-lower-bound1} and \eqref{eqn:Markov-lower-bound2} follow because $U-X-Y|Z$ holds. Applying \eqref{eqn:reduced-term1} and \eqref{eqn:reduced-term2} to \eqref{eqn:total-term}, the numerator of \eqref{eqn:reduced-log} becomes
\begin{align}
  &\nonumber\frac{1-P_Z(z^*)}{|\mathcal{U}_0|^{(1-\frac{1}{\alpha})\beta}}+P_Z(z^*)\displaystyle\sum_{y} P_{Y|Z}(y|z^*)^{1-\beta}\\& \ \left(\sum_x |\mathcal{U}_{x,z^*}|^{1-\alpha} P_{X|Z}(x|z^*)^\alpha\  P_{Y|X,Z}(y|x,z^*)^\alpha\right)^{\beta/\alpha}.
\end{align}
Similarly, the denominator of \eqref{eqn:reduced-log} becomes
\begin{align}
    \frac{1-P_Z(z^*)}{|\mathcal{U}_0|^{(1-\frac{1}{\alpha})\beta}}+P_Z(z^*)\left( \displaystyle \sum_x |\mathcal{U}_{x,z^*}|^{1-\alpha} P_{X|Z}(x|z^*)^\alpha\right)^{\beta/\alpha}.
\end{align}
Note that for $\alpha\in (1,\infty)$ and $\beta\in[1,\infty)$, $\displaystyle\frac{1-P_Z(z^*)}{|\mathcal{U}_0|^{(1-\frac{1}{\alpha})\beta}} \to 0 $  as $|\mathcal{U}_0| \to \infty$. So we may bound the conditional maximal $(\alpha,\beta)$-leakage from below by
\begin{align}
  & \nonumber \mathcal{L}_{\alpha,\beta}(X\to Y|Z) 
   \\\nonumber  & \geq\sup_{P_{X|Z=z^*}} \ \sup_{\mathcal{U}_{x,z^*}}\frac{\alpha}{(\alpha-1)\beta} \ \log \displaystyle\sum_{y} P_{Y|Z}(y|z^*)^{1-\beta} \\& \ \left( \displaystyle \sum_x P_{Y|X,Z}(y|x,z^*)^\alpha \ \frac{ |\mathcal{U}_{x,z^*}|^{1-\alpha} P_{X|Z}(x|z^*)^\alpha\  }{ \displaystyle \sum_x |\mathcal{U}_{x,z^*}|^{1-\alpha} P_{X|Z}(x|z^*)^\alpha}\right)^{\frac{\beta}{\alpha}}
   \\\nonumber   & =\sup_{P_{X|Z=z^*}} \ \sup_{P_{\Tilde{X}|Z=z^*}\ll P_{X|Z=z^*}}\frac{\alpha}{(\alpha-1)\beta} \\\nonumber & \ \log \displaystyle\sum_{y} P_{Y|Z}(y|z^*)^{1-\beta}  \Bigg( \displaystyle \sum_x P_{Y|X,Z}(y|x,z^*)^\alpha \\ & \ \times  P_{\Tilde{X}|Z=z^*}(x)\Bigg)^{\frac{\beta}{\alpha}},\label{eqn:lower-bound-p-tilde-x}
\end{align}
where here
\begin{align}
    P_{\Tilde{X}|Z=z^*}(x)= \frac{ |\mathcal{U}_{x,z^*}|^{1-\alpha} P_{X|Z}(x|z^*)^\alpha\  }{ \displaystyle \sum_x |\mathcal{U}_{x,z^*}|^{1-\alpha} P_{X|Z}(x|z^*)^\alpha}
\end{align}
and we have used the fact that any distribution $P_{\Tilde{X}|Z=z^*}(x)$ can be reached with appropriate choice of $|\mathcal{U}_{x,z^*}|$. Recalling the definition of $z^*$ from \eqref{def:z^*}, we may re-write \eqref{eqn:lower-bound-p-tilde-x} as
\begin{align}\label{eqn:lower-bound-result}
    \nonumber&\max_z  \sup_{P_{X|Z=z}}  \sup_{\substack{P_{\Tilde{X}|Z=z}\\\ll P_{X|Z=z}}}\frac{\alpha}{(\alpha-1)\beta} \ \log  \displaystyle\sum_{y} P_{Y|Z}(y|z)^{1-\beta} \\& \ \left( \displaystyle \sum_x P_{Y|X,Z}(y|x,z)^\alpha \ P_{\Tilde{X}|Z=z}(x)\right)^{\frac{\beta}{\alpha}}
\end{align}
Therefore, combining \eqref{eqn:upper_bound-result} and \eqref{eqn:lower-bound-result}, we have 
\begin{align}\label{eqn:combining-bounds}
  &\nonumber\mathcal{L}_{\alpha,\beta}(X \to Y|Z)\\\nonumber&=\max_z \ \sup_{P_{X|Z=z}} \ \sup_{\substack{P_{\Tilde{X}|Z=z}\ll \\P_{X|Z=z}}}\
  \frac{\alpha}{(\alpha-1)\beta} \ \log  \displaystyle \sum_y P_{Y|Z}(y|z)^{1-\beta}\\& \ \left( \displaystyle\sum_x P_{Y|X,Z}(y|x,z)^\alpha P_{\Tilde{X}|Z=z}(x)\right)^{\frac{\beta}{\alpha}}.
  \end{align}  
Since the choice of $P_{X|Z=z}$ only impacts $P_{Y|Z}$, and the supremum of a convex function is attained at an extreme point, we may simplify \eqref{eqn:combining-bounds} as follows.
\begin{align}
  &\nonumber\max_{z} \ \max_{x'} \ \sup_{\substack{P_{\Tilde{X}|Z=z}\ll \\P_{X|Z=z}}}  \
  \frac{\alpha}{(\alpha-1)\beta} \ \log \displaystyle \sum_y P_{Y|X,Z}(y|x',z)^{1-\beta}\\&\ \left( \displaystyle\sum_x P_{Y|X,Z}(y|x,z)^\alpha P_{\Tilde{X}|Z=z}(x)\right)^{\frac{\beta}{\alpha}}.
\end{align}
\section{Proof of Theorem~\ref{theorem:conditioning-reduces-leakage}}\label{proof:thm-conditioning-reduces-leakage}
From \eqref{eqn:thm-conditional-alpha-beta-leakage}, we have 
\begin{align}
   \nonumber &\mathcal{L}_{\alpha,\beta}(X \to Y|Z)\\\nonumber&=\max_{z} \ \max_{x'} \ \sup_{\substack{P_{\bar{X}}\ll\\ P_{X|Z=z}}} \ \frac{\alpha}{(\alpha-1)\beta} \ \log \displaystyle \sum_y P_{Y|X,Z}(y|x',z)^{1-\beta}\\& \ \left( \displaystyle\sum_x P_{Y|X,Z}(y|x,z)^\alpha P_{\bar{X}}(x)\right)^{\frac{\beta}{\alpha}}\\\nonumber
  &= \max_{z} \ \max_{x'} \ \sup_{\substack{P_{\bar{X}}\ll\\ P_{X|Z=z}}}  \
  \frac{\alpha}{(\alpha-1)\beta} \ \log \displaystyle \sum_y P_{Y|X}(y|x')^{1-\beta}\\& \ \left( \displaystyle\sum_x P_{Y|X}(y|x)^\alpha P_{\bar{X}}(x)\right)^{\frac{\beta}{\alpha}}\label{eqn:applying-Markov-conditioning}\\\nonumber
  &\le \max_{x'} \ \sup_{P_{\bar{X}}\ll P_X}  \
  \frac{\alpha}{(\alpha-1)\beta} \ \log \displaystyle \sum_y P_{Y|X}(y|x')^{1-\beta}\\& \ \left( \displaystyle\sum_x P_{Y|X}(y|x)^\alpha P_{\bar{X}}(x)\right)^{\frac{\beta}{\alpha}}\label{eqn:conditional-support}\\\nonumber
  &=\mathcal{L}_{\alpha,\beta}(X \to Y),
 \end{align}
 where \eqref{eqn:applying-Markov-conditioning} follows because the Markov chain $Z-X-Y$ holds, and \eqref{eqn:conditional-support} follows from the fact that for any $z$, the support of $P_{X|Z=z}$ is a subset of the support of $P_{X}$. The equality is achieved if for some $z \in \text{supp}(Z)$, $\text{supp}(X)=\text{supp}(X|Z=z)$.
 %%%%%%%%%
 \section{Proof of Theorem~\ref{theorem:sub-additivity}}\label{proof:thm-sub-additivity}
We have
\begin{align}\label{eqn:x-to-y1y2}
\nonumber&\mathcal{L}_{\alpha,\beta}(X\to Y_1,Y_2|Z)\\\nonumber&=\max_{z,x'} \  \sup_{P_{\Tilde{X}|Z}}\ \frac{\alpha}{(\alpha-1)\beta} \ \log \sum_{y_1,y_2} 
 P_{Y_1,Y_2|X,Z}(y_1,y_2|x',z)^{1-\beta}\\& \  \left(\sum_{x} P_{\Tilde{X}|Z}(x|z) P_{Y_1,Y_2|X,Z}(y_1,y_2|x,z)^\alpha \right)^{\beta/\alpha}.
\end{align}
We reduce the argument of the logarithm in \eqref{eqn:x-to-y1y2} as follows.
\begin{align}
&\nonumber\sum_{y_1,y_2} P_{Y_1,Y_2|X,Z}(y_1,y_2|x',z)^{1-\beta}\\& \  \left(\sum_{x} P_{\Tilde{X}|Z}(x|z)\ P_{Y_1,Y_2|X,Z}(y_1,y_2|x,z)^\alpha \right)^{\beta/\alpha}\\\nonumber
&=\sum_{y_1,y_2} P_{Y_1|X,Z}(y_1|x',z)^{1-\beta} P_{Y_2|Y_1,X,Z}(y_2|y_1,x',z)^{1-\beta}\\\nonumber & \ \Bigg(\sum_{x} P_{\Tilde{X}|Z}(x|z) P_{Y_1|X,Z}(y_1|x,z)^\alpha \\ & \hspace{3mm} \times P_{Y_2|Y_1,X,Z}(y_2|y_1,x,z)^\alpha \Bigg)^{\beta/\alpha}.\label{eqn:composition-chain-rule}
\end{align}
Let $K(y_1,z)=\displaystyle \sum_x P_{\Tilde{X}|Z}(x|z) P_{Y_1|X,Z}(y_1|x,z)^\alpha$, for all $y_1 \in \mathcal{Y}_1$ and $z\in\mathcal{Z}$. So, we can construct a set of distributions over $\mathcal{X}$ as
\begin{equation}\label{def:x-bar}
    P_{\bar{X}|Y_1,Z}(x|y_1,z)=\frac{P_{\Tilde{X}|Z}(x|z) P_{Y_1|X,Z}(y_1|x,z)^\alpha}{K(y_1,z)}.
\end{equation}
Thus, we may rewrite the expression in \eqref{eqn:composition-chain-rule} as
\begin{align}
&\nonumber\sum_{y_1,y_2} P_{Y_1|X,Z}(y_1|x',z)^{1-\beta} P_{Y_2|Y_1,X,Z}(y_2|y_1,x',z)^{1-\beta}\\\nonumber& \ \bigg(\sum_{x} P_{\bar{X}|Y_1,Z}(x|y_1,z) \ K(y_1,z)\ \\&\  \times P_{Y_2|Y_1,X,Z}(y_2|y_1,x,z)^\alpha \Bigg)^{\frac{\beta}{\alpha}}\\\nonumber
&=\sum_{y_1,y_2}\Bigg[ P_{Y_1|X,Z}(y_1|x',z)^{1-\beta} P_{Y_2|Y_1,X,Z}(y_2|y_1,x',z)^{1-\beta}\\\nonumber& \ \times  \bigg(\displaystyle\sum_x P_{\Tilde{X}|Z}(x|z)P_{Y_1|X,Z}(y_1|x,z)^\alpha\bigg)^{\frac{\beta}{\alpha}} \\& \ \times \bigg(\sum_{x} P_{\bar{X}|Y_1,Z}(x|y_1,z) \  P_{Y_2|Y_1,X,Z}(y_2|y_1,x,z)^\alpha \bigg)^{\frac{\beta}{\alpha}}\Bigg]\\\nonumber
&=\sum_{y_1}\Bigg[ P_{Y_1|X,Z}(y_1|x',z)^{1-\beta}  \Bigg(\displaystyle \sum_x P_{\Tilde{X}|Z}(x|z) \\\nonumber& \ \times  P_{Y_1|X,Z}(y_1|x,z)^\alpha\bigg)^{\frac{\beta}{\alpha}}  \sum_{y_2} P_{Y_2|Y_1,X,Z}(y_2|y_1,x',z)^{1-\beta} \\& \ \bigg(\sum_{x} P_{\bar{X}|Y_1,Z}(x|y_1,z)\ P_{Y_2|Y_1,X,Z}(y_2|y_1,x,z)^\alpha
\bigg)^{\frac{\beta}{\alpha}}\Bigg]\\\nonumber
&\le \bigg[\sum_{y_1} P_{Y_1|X,Z}(y_1|x',z)^{1-\beta}  \ \Big(\displaystyle \sum_x P_{\Tilde{X}|Z}(x|z) \\\nonumber&\ P_{Y_1|X,Z}(y_1|x,z)^\alpha\Big)^{\frac{\beta}{\alpha}} \bigg] \bigg[\max_{y_1'\in\mathcal{Y}}\sum_{y_2} P_{Y_2|Y_1,X,Z}(y_2|y_1',x',z)^{1-\beta} \\&\  \bigg(\sum_{x} P_{\bar{X}|Y_1,Z}(x|y_1',z)  P_{Y_2|Y_1,X,Z}(y_2|y_1',x,z)^\alpha
\bigg)^{\frac{\beta}{\alpha}}\bigg].\label{eqn:reduced-argument-composition}
\end{align}
Applying \eqref{eqn:reduced-argument-composition} to \eqref{eqn:x-to-y1y2}, we have
\begin{align}
   \nonumber & \mathcal{L}_{\alpha,\beta}(X\to Y_1,Y_2|Z)\\\nonumber 
    &\nonumber\le \max_{z,x'} \  \sup_{P_{\Tilde{X}|Z}}\frac{\alpha}{(\alpha-1)\beta} \bigg[\log \sum_{y_1} P_{Y_1|X,Z}(y_1|x',z)^{1-\beta}\\\nonumber& \ \times \Big(\displaystyle \sum_x P_{\Tilde{X}|Z}(x|z) P_{Y_1|X,Z}(y_1|x,z)^\alpha\Big)^{\frac{\beta}{\alpha}}\\\nonumber&\ +\max_{y_1'} \ \log \sum_{y_2}P_{Y_2|Y_1,X,Z}(y_2|y_1',x',z)^{1-\beta}\\&\ \Big(\sum_{x} P_{\bar{X}|Y_1,Z}(x|y_1',z) P_{Y_2|Y_1,X,Z}(y_2|y_1',x,z)^\alpha
\Big)^{\frac{\beta}{\alpha}}\bigg]\\\nonumber
 &\le\max_{z,x'} \  \sup_{P_{\Tilde{X}|Z}}\frac{\alpha}{(\alpha-1)\beta}\log \sum_{y_1} P_{Y_1|X,Z}(y_1|x',z)^{1-\beta} \\\nonumber& \ \bigg( \sum_x P_{\Tilde{X}|Z}(x|z) P_{Y_1|X,Z}(y_1|x,z)^\alpha\bigg)^{\frac{\beta}{\alpha}}+ \max_{y_1',z,x'}   \sup_{P_{\bar{X}|Y_1,Z}}\\\nonumber&\ \frac{\alpha}{(\alpha-1)\beta} \log  \sum_{y_2} P_{Y_2|Y_1,X,Z}(y_2|y_1',x',z)^{1-\beta} \\& \bigg(\sum_{x} P_{\bar{X}|Y_1,Z}(x|y_1',z) P_{Y_2|Y_1,X,Z}(y_2|y_1',x,z)^\alpha
\bigg)^{\frac{\beta}{\alpha}}\\
  &=\mathcal{L}_{\alpha,\beta}(X\to Y_1|Z)+\mathcal{L}_{\alpha,\beta}(X\to Y_2|Y_1,Z).
\end{align}
\section{Proof of Proposition~\ref{prop1}}\label{App:prop1-proof}
For $\alpha\leq \beta$, maximal $(\alpha,\beta)$-leakage simplifies to
\begin{align}
&\nonumber\mathcal{L}_{\alpha,\beta}(X\to Y)\\\nonumber&=\max_{x'} \  \sup_{P_{\Tilde{X}}}\frac{\alpha}{(\alpha-1)\beta}\log 
\sum_y P_{Y|X}(y|x')^{1-\beta} \\& \ \left(\sum_{x} P_{\Tilde{X}}(x) P_{Y|X}(y|x)^\alpha \right)^{\beta/\alpha}\label{eq:convexity}
\\
&=\max_{x'} \ \max_x \frac{\alpha}{(\alpha-1)\beta}
 \log \sum_y P_{Y|X}(y|x')^{1-\beta} P_{Y|X}(y|x)^\beta, \label{eq:simplified-2}
\end{align}
where \eqref{eq:simplified-2} follows because the argument of the logarithm in \eqref{eq:convexity} is convex in $P_{\Tilde{X}}$ and so the supremum is attained at an extreme point. This quantity represents a \emph{scaled version of LRDP of order $\beta$} which is exactly equal to LRDP for $\alpha=\beta$. We now take the limits of maximal $\alpha,\beta$-leakage in \eqref{eqn:thm-alpha-beta-leakage} as $\alpha \to \infty$ and $\beta\to\infty$. We have
\begin{align}
 \lim_{\beta\to \infty}\lim_{\alpha \to \infty}  \mathcal{L}_{\alpha,\beta}(X\to Y)=
 \lim_{\beta\to \infty} \mathcal{L}_\beta(X \to Y),
\end{align}
We first bound $\displaystyle\lim_{\beta\to \infty} \mathcal{L}_\beta(X \to Y)$ from above as follows.
\begin{align}
  &\nonumber\mathcal{L}_\beta(X \to Y)\\&= \max_{x'} \ \frac{1}{\beta}  \log \sum_y P_{Y|X}(y|x')^{1-\beta} \max_x P_{Y|X}(y|x)^\beta \\
  &= \max_{x'} \ \frac{1}{\beta}  \log \sum_y P_{Y|X}(y|x') \left(\frac{\displaystyle\max_x P_{Y|X}(y|x)}{P_{Y|X}(y|x')}\right) ^\beta \\
    &\leq \max_{x'} \ \frac{1}{\beta}  \log \sum_y P_{Y|X}(y|x') \,\max_y\left(\frac{\displaystyle\max_x P_{Y|X}(y|x)}{P_{Y|X}(y|x')}\right) ^\beta \\
    &= \max_{x'} \ \frac{1}{\beta}\log \,\max_y\left(\frac{\displaystyle\max_x P_{Y|X}(y|x)}{P_{Y|X}(y|x')}\right) ^\beta   \sum_y P_{Y|X}(y|x') \\
    &= \max_{x',y} \,\log \,\left(\frac{\displaystyle\max_x P_{Y|X}(y|x)}{P_{Y|X}(y|x')}\right)\\
    &= \max_{x',y,x} \log \,\left(\frac{ P_{Y|X}(y|x)}{P_{Y|X}(y|x')}\right) \label{eq:UP-LDP}
\end{align}
%\begin{align}
 % \mathcal{L}_\beta(X \to Y)&= \max_{x'} \ \frac{1}{\beta}  \log \sum_y P_{Y|X}(y|x')^{1-\beta} \max_x P_{Y|X}(y|x)^\beta \\
   %
  % &\leq   \frac{1}{\beta}  \log \sum_y  \max_{x'} P_{Y|X}(y|x')^{1-\beta} \max_x P_{Y|X}(y|x)^\beta \\
   %
   %&\leq   \frac{1}{\beta}  \log \sum_y  \left(\frac{\displaystyle\max_{x} P_{Y|X}(y|x)}{\displaystyle\max_{x'} P_{Y|X}(y|x')}\right)^\beta \\
   %
   %&= \log  \left[\sum_y \left(\frac{\displaystyle\max_{x} P_{Y|X}(y|x)}{\displaystyle\max_{x'} P_{Y|X}(y|x')}\right)^\beta\right]^{\frac{1}{\beta}}.
%\end{align}
%So
%\begin{align}
   % \lim_{\beta\to \infty} \mathcal{L}_\beta(X \to Y) \leq \lim_{\beta\to \infty}\log  \left[\sum_y \left(\frac{\displaystyle\max_{x} P_{Y|X}(y|x)}{\displaystyle\max_{x'} P_{Y|X}(y|x')}\right)^\beta\right]^{\frac{1}{\beta}}= \max_{x,y,x'} \log  \ \frac{P_{Y|X}(y|x)}{P_{Y|X}(y|x')},\label{eq:UP-LDP}
%\end{align}
%where the equality follows from the definition of the infinity norm of a vector, that is, the largest magnitude among elements of a vector. 
As the next step, we provide an achievable scheme. Let $x^*,y^*,x'^*=\displaystyle\argmax_{x,y,x'} \frac{P_{Y|X}(y|x)}{P_{Y|X}(y|x')}$, we have
\begin{align}
   &\nonumber\mathcal{L}_\beta(X \to Y)\\&= \max_{x'} \ \frac{1}{\beta}  \log \sum_y P_{Y|X}(y|x')^{1-\beta} \max_x P_{Y|X}(y|x)^\beta \\
   &\geq \max_{x'} \ \frac{1}{\beta}  \log \ \max_y\left[ P_{Y|X}(y|x')^{1-\beta} \max_x P_{Y|X}(y|x)^\beta\right] \\
   &=  \ \frac{1}{\beta}  \log \ \max_{y,x,x'} \ \left(\frac{P_{Y|X}(y|x)}{P_{Y|X}(y|x')} \right)^\beta
   P_{Y|X}(y|x') \\
   &\geq  \ \frac{1}{\beta}  \log  \ \left(\frac{P_{Y|X}(y^*|x^*)}{P_{Y|X}(y^*|x'^*)} \right)^\beta
   P_{Y|X}(y^*|x'^*) \\
    &=   \log  \ \frac{P_{Y|X}(y^*|x^*)}{P_{Y|X}(y^*|x'^*)}
   + \frac{1}{\beta} \log P_{Y|X}(y^*|x'^*),
\end{align}
So
\begin{align}
   \nonumber \lim_{\beta \to \infty} \mathcal{L}_{\beta}(X\to Y)&\geq \log  \ \frac{P_{Y|X}(y^*|x^*)}{P_{Y|X}(y^*|x'^*)}\\&=\max_{x,y,x'} \log  \ \frac{P_{Y|X}(y|x)}{P_{Y|X}(y|x')}.\label{eq:LP-LDP}
\end{align}
Combining \eqref{eq:UP-LDP} and \eqref{eq:LP-LDP} gives
\begin{align}
    \lim_{\beta\to \infty} \mathcal{L}_\beta(X \to Y) = \max_{x,y,x'} \log  \ \frac{P_{Y|X}(y|x)}{P_{Y|X}(y|x')},
\end{align}
which is LDP.
%%%%%%%%%%%%%%%%%%%
\section{Proof of Proposition~\ref{vector-leakage-cases}}\label{proof:vector-cases}
Applying Theorem~\ref{theorem:conditional-alpha-beta-leakage}, for finite alphabets, vector maximal $(\alpha,\beta)$-leakage defined in \eqref{def:diff-alpha-beta-leakage} simplifies to
       \begin{align}
 \nonumber&\displaystyle \mathcal{L}_{\alpha,\beta}^{\text{vec}}(X^n\to Y)= \max_{i,x_{-i},x_i'} \sup_{P_{\Tilde{X}_i|X_{-i}}}
  \frac{\alpha}{(\alpha-1)\beta} \\\nonumber& \ \log  \displaystyle \sum_y P_{Y|X_i,X_{-i}}(y|x'_i,x_{-i})^{1-\beta}\\& \ \left( \displaystyle\sum_{x_i} P_{Y|X_i,X_{-i}}(y|x_i,x_{-i})^\alpha P_{\Tilde{X}_i|X_{-i}}(x_i|x_{-i})\right)^{\frac{\beta}{\alpha}},\label{eq:vec-simplified}
    \end{align}
    where  $P_{\Tilde{X}_i|X_{-i}}$ is a distribution on the support of $P_{X_i|X_{-i}}$. The expression of vector maximal $\alpha$-leakage in \eqref{eq:vec-max-alpha-leak} can be readily obtained by setting $\beta$ equal to 1. For $\alpha\leq \beta$, applying Proposition~\ref{prop1-conditional}, we can simplify the above expression to  \begin{align}
 \displaystyle \nonumber&\mathcal{L}_{\alpha\leq\beta}^{\text{vec}}(X^n\to Y)= \max_{i,x_{-i},x_i',x_i}\  \
  \frac{\alpha}{(\alpha-1)\beta}\\& \ \log \displaystyle \sum_y P_{Y|X_i,X_{-i}}(y|x_i',x_{-i})^{1-\beta} P_{Y|X_i,X_{-i}}(y|x_i,x_{-i})^\beta.
 \end{align}
 Note that $(x'_i,x_{-i})$ and $(x_i,x_{-i})$ can be considered as two datasets which differ only in the $i$th entries. So, the maximums over $x_i, x'_i$, and $x_{-i}$ explore neighboring datasets differing in $i$th entries, and the maximum across all $i$ ensures the consideration of all possible neighboring datasets. Thus, the above expression may be rewritten to the following form:
  \begin{align}      
  \max_{x^n\sim x'^n}\frac{\alpha}{(\alpha-1)\beta} \log \displaystyle \sum_y P_{Y|X^n}(y|x'^n)^{1-\beta} P_{Y|X^n}(y|x^n)^\beta,
\end{align}
which is a scaled RDP of order $\beta$ with the scaling factor of $\displaystyle\frac{\alpha (\beta-1)}{(\alpha-1)\beta}$. Moreover, when $\alpha=\beta$, this quantity is exactly equal to RDP of order $\alpha=\beta$ which in turn recovers DP as $\alpha=\beta\to \infty$. For $\alpha\to\infty$ and an arbitrary $\beta$, applying Proposition~\ref{prop1-conditional}, we get the expression of vector maximal R\'enyi leakage.
%%%%%%%%%%%%%%%%%%%%%%%
\section{Proof of Lemma~\ref{lemma:alpha_tau_variational}}\label{proof:remark-reparameterization}
We first prove the expression \eqref{alpha_tau_variational} which provides a still other representation of the leakage measure. 

Consider any $\gamma\in(-\infty,0]\cup [1,\infty)$, and any constants $C(y)$ for $y\in\mathcal{Y}$. Furthermore, consider the optimization problem
\begin{equation}\label{general_QY_opt}
\inf_{Q_Y} \sum_y C(y) Q_Y(y)^\gamma.
\end{equation}
$\gamma$ is in the range where \eqref{general_QY_opt} is convex in $Q_Y$, so it is solved by setting the derivative of $Q_Y(y)$ to a constant:
\begin{align}
\nu=\frac{\partial}{\partial Q_Y(y)} \sum_y C(y) Q_Y(y)^\gamma=C(y) \ \gamma \ Q_Y(y)^{\gamma-1}.
\end{align}
We can see that the optimal choice is therefore
\begin{align}
Q_Y(y)=\frac{C(y)^{1/(1-\gamma)}}{\sum_{y'} C(y')^{1/(1-\gamma)}}.
\end{align}
Thus \eqref{general_QY_opt} becomes
\begin{align}\label{optimized_expression}
\frac{\sum_y C(y) C(y)^{\gamma/(1-\gamma)}}{\left(\sum_{y'} C(y')^{1/(1-\gamma)}\right)^\gamma}
=\left(\sum_y C(y)^{1/(1-\gamma)}\right)^{1-\gamma}.
\end{align}
In our case, we have $\gamma=\frac{1-\alpha}{\tau}<0$, and
\begin{align}
C(y)=\sum_x P_{\Tilde{X}}(x) P_{Y|X}(y|x)^\alpha P_{Y|X}(y|x')^{(1-\frac{1}{\tau})(1-\alpha)}.
\end{align}
Applying the result in \eqref{optimized_expression} to our case, we find that \eqref{alpha_tau_variational} is equal to
\begin{align}
&\nonumber\max_{x'}\,\sup_{P_{\Tilde{X}}}\frac{1}{\alpha-1}\log \bigg[\sum_y \bigg(\sum_x P_{\Tilde{X}}(x) P_{Y|X}(y|x)^\alpha \\& \ \times P_{Y|X}(y|x')^{(1-\frac{1}{\tau})(1-\alpha)}\bigg)^{\frac{\tau}{\tau+\alpha-1}}\bigg]^{\frac{\tau+\alpha-1}{\tau}}
\\\nonumber&=\max_{x'}\,\sup_{P_{\Tilde{X}}}\left(\frac{1}{\alpha-1}+\frac{1}{\tau}\right)
\log \sum_y P_{Y|X}(y|x')^{\frac{(\tau-1)(1-\alpha)}{\tau+\alpha-1}}\\& \ \times \left(\sum_x  P_{\Tilde{X}}(x)P_{Y|X}(y|x)^\alpha\right)^{\frac{\tau}{\tau+\alpha-1}}
\end{align}
which is precisely \eqref{alpha_tau_version}.

Moreover, we claim that maximal $(\alpha,\tau)$-leakage is non-decreasing in $\tau$  and $\alpha$ for a fixed $\alpha$   and $\tau$, respectively. Since $\beta$ is increasing in $\tau$, the first claim, that the measure is non-decreasing in $\tau$, is equivalent to it being non-decreasing in $\beta$, which we have already proved in Appendix~\ref{proof:thm-properties}. Given the expression \eqref{alpha_tau_variational}, we prove that $\mathcal{L}_{\alpha,\tau}(X\to Y)$ is non-decreasing in $\alpha$ for a fixed $\tau$ as follows. We may write the objective function in \eqref{alpha_tau_variational} as
\begin{align}
&\nonumber\frac{1}{\alpha-1} \log \sum_{x,y} P_{\Tilde{X}}(x) P_{Y|X}(y|x) \\& \ \times \left(\frac{P_{Y|X}(y|x)}{Q_Y(y)^{\frac{1}{\tau}} P_{Y|X}(y|x')^{1-\frac{1}{\tau}}}\right)^{\alpha-1}.
\end{align}
This expression is non-decreasing in $\alpha$ due to the fact that, for any distribution $P_Z$, and any constants $C(z)$,
\begin{align}
\frac{1}{\alpha-1} \log \sum_z P_Z(z) C(z)^{\alpha-1}
\end{align}
is non-decreasing in $\alpha$ for $\alpha>1$.
%%%%%%%%%%%%%%%%%%%%%%%%%%%
\begin{figure*}[b]
% ensure that we have normalsize text
\normalsize
% Store the current equation number.
\setcounter{MYtempeqncnt}{\value{equation}}
% Set the equation number to one less than the one
% desired for the first equation here.
% The value here will have to changed if equations
% are added or removed prior to the place these
% equations are referenced in the main text.
\hrulefill
\setcounter{equation}{226}
\begin{equation}\label{eq:cont-ex-Q}
   Q_{x',P_{\tilde{X}}}(\alpha,\tau)=
    \begin{cases}
       \displaystyle\frac{1}{\tau}I(\tilde{X};Y)+\left(1-\frac{1}{\tau}\right)D_{KL}\left(P_{Y|X}(y|x)\|P_{Y|X}(y|x')|P_{\tilde{X}}(x)\right) & \text{if } \alpha=1,\tau\in[1,\infty]\\
       \displaystyle \frac{1}{\tau}  \log
\sum_y P_{Y|X}(y|x')^{1-\tau} \max_x P_{Y|X}(y|x)^\tau & \text{if } \alpha=\infty, \tau\in[1,\infty)\\
\displaystyle\frac{1}{\alpha-1} \log \sum_{x,y} P_{\tilde{X}}(x) P_{Y|X}(y|x')^{1-\alpha} P_{Y|X}(y|x)^\alpha  & \text{if}\; \alpha\in(1,\infty),\tau=\infty\\
\displaystyle\max_{x,y}\; P_{Y|X}(y|x) P_{Y|X}(y|x')^{-1}&\text{if} \; \alpha=\infty,\tau=\infty.
    \end{cases}
\end{equation} 
% Restore the current equation number.
\setcounter{equation}{\value{MYtempeqncnt}}
% The IEEE uses as a separator
% The spacer can be tweaked to stop underfull vboxes.
\vspace*{4pt}
\end{figure*}

\section{Proof of Proposition~\ref{prop4}}\label{proof:alpha-equals-one-tau-leakage}
For a fixed $\tau$, we have
\begin{align}
&\nonumber\lim_{\alpha \to 1}\mathcal{L}_{\alpha,\tau}(X\to Y)\\\nonumber
&= \lim_{\alpha \to 1}\,\max_{x'}\,\sup_{P_{\Tilde{X}}}\,\inf_{Q_Y}\ \frac{1}{\alpha-1} \log \sum_{x,y} P_{\Tilde{X}}(x) P_{Y|X}(y|x)^\alpha \\& \ \times \left(Q_Y(y)^{\frac{1}{\tau}} P_{Y|X}(y|x')^{1-\frac{1}{\tau}}\right)^{1-\alpha}\\\nonumber
&=\max_{x'}\, \lim_{\alpha \to 1}\,\sup_{P_{\Tilde{X}}}\,\inf_{Q_Y}\ \frac{1}{\alpha-1} \log \sum_{x,y} P_{\Tilde{X}}(x) P_{Y|X}(y|x)^\alpha \\& \ \times \left(Q_Y(y)^{\frac{1}{\tau}} P_{Y|X}(y|x')^{1-\frac{1}{\tau}}\right)^{1-\alpha}\label{eq:lim-max}\\\nonumber
&=\max_{x'}\, \inf_{\alpha}\,\sup_{P_{\Tilde{X}}}\,\inf_{Q_Y}\ \frac{1}{\alpha-1} \log \sum_{x,y} P_{\Tilde{X}}(x) P_{Y|X}(y|x)^\alpha \\& \ \times \left(Q_Y(y)^{\frac{1}{\tau}} P_{Y|X}(y|x')^{1-\frac{1}{\tau}}\right)^{1-\alpha}\label{eq:increase1}\\\nonumber
&=\max_{x'}\, \inf_{\alpha}\,\inf_{Q_Y}\,\sup_{P_{\Tilde{X}}}\ \frac{1}{\alpha-1} \log\, \sum_{x,y} P_{\Tilde{X}}(x) P_{Y|X}(y|x)^\alpha \\& \ \times \left(Q_Y(y)^{\frac{1}{\tau}} P_{Y|X}(y|x')^{1-\frac{1}{\tau}}\right)^{1-\alpha}\label{eq:concave-convex}\\\nonumber 
&=\max_{x'}\,\inf_{Q_Y}\,\inf_{\alpha}\,\sup_{P_{\Tilde{X}}}\ \frac{1}{\alpha-1} \log\, \sum_{x,y} P_{\Tilde{X}}(x) P_{Y|X}(y|x)^\alpha \\& \ \times\left(Q_Y(y)^{\frac{1}{\tau}} P_{Y|X}(y|x')^{1-\frac{1}{\tau}}\right)^{1-\alpha}\\\nonumber 
&=\max_{x'}\,\inf_{Q_Y}\,\lim_{\alpha\to 1}\,\sup_{P_{\Tilde{X}}}\ \frac{1}{\alpha-1} \log\, \sum_{x,y} P_{\Tilde{X}}(x) P_{Y|X}(y|x)^\alpha \\& \ \times \left(Q_Y(y)^{\frac{1}{\tau}} P_{Y|X}(y|x')^{1-\frac{1}{\tau}}\right)^{1-\alpha}\label{increase2}\\\nonumber
&=\max_{x'}\,\inf_{Q_Y}\,\lim_{\alpha\to 1}\,\max_{x}\ \frac{1}{\alpha-1} \log\, \sum_{y} P_{Y|X}(y|x)^\alpha \\& \ \times \left(Q_Y(y)^{\frac{1}{\tau}} P_{Y|X}(y|x')^{1-\frac{1}{\tau}}\right)^{1-\alpha}\label{eq:concave-P_tilde_X}\\\nonumber
&=\max_{x'}\,\inf_{Q_Y}\,\max_{x}\,\lim_{\alpha \to 1}\, \frac{1}{\alpha-1} \log \sum_{y} P_{Y|X}(y|x)^\alpha \\& \ \times  \left(Q_Y(y)^{\frac{1}{\tau}} P_{Y|X}(y|x')^{1-\frac{1}{\tau}}\right)^{1-\alpha}\label{eq:lim-max1}\\ 
&=\max_{x'}\,\inf_{Q_Y}\,\max_{x}\, \sum_{y} P_{Y|X}(y|x)\log\frac{P_{Y|X}(y|x)}{Q_Y(y)^{\frac{1}{\tau}} P_{Y|X}(y|x')^{1-\frac{1}{\tau}}}\label{eqn:L'Hopital's rule}\\\nonumber
&=\max_{x'}\,\inf_{Q_Y}\,\sup_{P_{\tilde{X}}}\, \sum_{x,y} P_{\tilde{X}}(x)P_{Y|X}(y|x)\\ & \ \times \log\frac{P_{Y|X}(y|x)}{Q_Y(y)^{\frac{1}{\tau}} P_{Y|X}(y|x')^{1-\frac{1}{\tau}}}\label{eq:lin-in-P}\\\nonumber
&=\max_{x'}\,\sup_{P_{\tilde{X}}}\,\inf_{Q_Y}\, \sum_{x,y} P_{\tilde{X}}(x)P_{Y|X}(y|x)\\ & \ \times\log\frac{P_{Y|X}(y|x)}{Q_Y(y)^{\frac{1}{\tau}} P_{Y|X}(y|x')^{1-\frac{1}{\tau}}}\label{eq:convex-concave}\\\nonumber
&=\max_{x'}\,\sup_{P_{\tilde{X}}}\,\inf_{Q_Y}\, \sum_{x,y} P_{\tilde{X}}(x)P_{Y|X}(y|x)\\ & \ \times \log\left(\frac{P_{Y|X}(y|x) }{P_{Y|X}(y|x')}\right)^{1-\frac{1}{\tau}}\left(\frac{P_{Y|X}(y|x)}{Q_Y(y)}\right)^\frac{1}{\tau}\\\nonumber
&=\max_{x'}\,\sup_{P_{\tilde{X}}}\,\inf_{Q_Y}\, ({1-\frac{1}{\tau}})\sum_{x,y} P_{\tilde{X}}(x)P_{Y|X}(y|x)\\\nonumber & \ \times \log\left(\frac{P_{Y|X}(y|x) }{P_{Y|X}(y|x')}\right)\\ & \ +\frac{1}{\tau}\sum_{x,y} P_{\tilde{X}}(x)P_{Y|X}(y|x)\log\left(\frac{P_{Y|X}(y|x)}{Q_Y(y)}\right)\\\nonumber
&=\max_{x'}\,\sup_{P_{\tilde{X}}}\,\Bigg[ (1-\frac{1}{\tau})D_{KL}\left(P_{Y|X}(y|x)\|P_{Y|X}(y|x')|P_{\tilde{X}}(x)\right)\\& \ +\frac{1}{\tau}\,\inf_{Q_Y}\, D_{KL}(P_{\tilde{X}}(x)P_{Y|X}(y|x)\|P_{\tilde{X}}(x) Q_Y(y))\Bigg]\\\nonumber
&=\max_{x'}\,\sup_{P_{\tilde{X}}}\, (1-\frac{1}{\tau})D_{KL}\left(P_{Y|X}(y|x)\|P_{Y|X}(y|x')|P_{\tilde{X}}(x)\right)\\& \ +\frac{1}{\tau}\,I(\tilde{X};Y)\label{last-shannon-leakage},
\end{align}
where
\begin{itemize}
    %\item \eqref{eq:lim-max} and \eqref{eq:lim-max1} follow because the max is over a finite set,
    \item\eqref{eq:increase1} and \eqref{increase2} follow because the objective function is non-decreasing in $\alpha$,
    \item \eqref{eq:concave-convex} follows because the quantity inside the log is linear (and thus concave) in $P_{\tilde{X}}$ and convex in $Q_Y$,
    \item \eqref{eq:concave-P_tilde_X} follows because the quantity inside the log is linear in $P_{\tilde{X}}$ and so supremum is attained at a corner point,
    \item \eqref{eqn:L'Hopital's rule} follows from L'Hopital's rule,
    \item \eqref{eq:lin-in-P} follows because the objective function in \eqref{eq:lin-in-P} is linear in $P_{\tilde{X}}$ and so the sup is attained at a corner point,
    \item \eqref{eq:convex-concave} follows because the objective is linear (thus concave) in $P_{\tilde{X}}$ and convex in $Q_Y$.
\end{itemize}
%%%%%%%%%%%%%%%%%%%%%
\section{Proof of Theorem~\ref{thm:continuity}}\label{proof:continuity}
We first prove the continuity of maximal $(\alpha,\tau)$-leakage, and following that, we demonstrate the continuity of M$\alpha$beL. In our analysis, we employ the extended real number line. 
\subsection{Continuity of maximal $(\alpha,\tau)$-leakage}
Let
\begin{equation}
\tau^-_\epsilon(\tau_0)=
    \begin{cases}
        \tau_0-\epsilon & \text{if } \tau_0 \in (1,\infty)\\
        \displaystyle\frac{1}{\epsilon} & \text{if } \tau_0=\infty\\
        1 & \text{if } \tau_0=1,
    \end{cases} 
    \end{equation}
    \vspace{3mm}
    \begin{equation} 
    \tau^+_\epsilon(\tau_0)=
        \begin{cases}
        \tau_0+\epsilon & \text{if } \tau_0 \in [1,\infty)\\
        \infty & \text{if } \tau_0=\infty,
    \end{cases}
\end{equation}
\vspace{3mm}
\begin{equation}
\alpha^-_\epsilon(\alpha_0)=
    \begin{cases}
        \alpha_0-\epsilon & \text{if } \alpha_0 \in (1,\infty)\\
        \displaystyle\frac{1}{\epsilon} & \text{if } \alpha_0=\infty\\
        1 & \text{if } \alpha_0=1,
    \end{cases} 
    \end{equation}
    and
    \begin{equation}
    \alpha^+_\epsilon(\alpha_0)=
        \begin{cases}
        \alpha_0+\epsilon & \text{if } \alpha_0 \in [1,\infty)\\
        \infty & \text{if } \alpha_0=\infty.
    \end{cases}
\end{equation}
To prove the continuity of maximal $(\alpha,\tau)$-leakage at $(\alpha_0,\tau_0)\in [1,\infty]\times [1,\infty]$, we define a rectangular region characterized by its corners at $\left(\alpha^-_\epsilon(\alpha_0), \tau^-_\epsilon(\tau_0)\right)$, $\left(\alpha^-_\epsilon(\alpha_0), \tau^+_\epsilon(\tau_0)\right)$, $\left(\alpha^+_\epsilon(\alpha_0), \tau^-_\epsilon(\tau_0)\right)$, and $\left(\alpha^+_\epsilon(\alpha_0), \tau^+_\epsilon(\tau_0)\right)$, with the point $(\alpha_0, \tau_0)$ lying inside or on the borders of this region, and we show that
\begin{align}
   \nonumber&\displaystyle \lim_{\epsilon\to 0} \displaystyle\inf_{\substack{\alpha\in[\alpha^-_\epsilon(\alpha_0),\alpha^+_\epsilon(\alpha_0)],\\\tau\in[\tau^-_\epsilon(\tau_0),\tau^+_\epsilon(\tau_0)]}} \mathcal{L}_{\alpha,\tau}(X \to Y)\\& =
   \displaystyle \lim_{\epsilon\to 0} \displaystyle\sup_{\substack{\alpha\in[\alpha^-_\epsilon(\alpha_0),\alpha^+_\epsilon(\alpha_0)],\\\tau\in[\tau^-_\epsilon(\tau_0),\tau^+_\epsilon(\tau_0)]}} \mathcal{L}_{\alpha,\tau}(X \to Y)\\&=\mathcal{L}_{\alpha_0,\tau_0}(X \to Y).
\end{align}
Note that the rectangular region converges to the point $(\alpha_0,\tau_0)$ as $\epsilon\to 0$. We recall that
\begin{equation}\label{eq:cont-ex-matle}
    \begin{cases}
    \mathcal{L}_{\alpha=1,\tau}(X\to Y))=\text{$\tau$-Shannon leakage}\\
       \mathcal{L}_{\alpha=\infty,\tau}(X\to Y)=\text{Maximal R\'enyi leakage of order $\tau$}\\
       \mathcal{L}_{\alpha,\tau=\infty}(X\to Y)=\mathcal{L}^{\text{LRDP}}_{\alpha}(X\to Y)\\
\mathcal{L}_{\alpha=\infty,\tau=\infty}(X\to Y)=\mathcal{L}^{\text{LDP}}(X\to Y).
    \end{cases}
\end{equation}
 \begin{figure*}[b]
% ensure that we have normalsize text
\normalsize
% Store the current equation number.
\setcounter{MYtempeqncnt}{\value{equation}}
% Set the equation number to one less than the one
% desired for the first equation here.
% The value here will have to changed if equations
% are added or removed prior to the place these
% equations are referenced in the main text.
\hrulefill
\setcounter{equation}{246}
\begin{equation}\label{eq:cont-ex-f}
   f_{x',Q_Y,x}(\alpha,\tau)=
    \begin{cases}
     \displaystyle \sum_y P_{Y|X}(y|x)\;\log \displaystyle\frac{P_{Y|X}(y|x)}{Q_Y(y)^{\frac{1}{\tau}} P_{Y|X}(y|x')^{1-\frac{1}{\tau}}} & \text{if } \alpha=1,\tau\in[1,\infty]\vspace{3mm}\\\vspace{2mm}
      \displaystyle \max_y\; \log P_{Y|X}(y|x)\;Q_Y(y)^{-\frac{1}{\tau}}P_{Y|X}(y|x')^{-1+\frac{1}{\tau}} & \text{if } \alpha=\infty, \tau\in[1,\infty)\\
      D_{\alpha}\left(P_{Y|X}(y|x)\|P_{Y|X}(y|x')\right)   & \text{if } \alpha\in(1,\infty), \tau=\infty\vspace{3mm}\\ 
      \displaystyle\max_y P_{Y|X}(y|x) P_{Y|X}(y|x')^{-1} & \text{if}\;\alpha=\infty, \tau=\infty.
    \end{cases}
\end{equation} 
% Restore the current equation number.
\setcounter{equation}{\value{MYtempeqncnt}}
% The IEEE uses as a separator
% The spacer can be tweaked to stop underfull vboxes.
\vspace*{4pt}
\end{figure*}
\textbf{Lower bound:}
For $(\alpha,\tau)\in(1,\infty)\times [1,\infty)$, we have 
\begin{align}\label{LintermsofQ}
    \mathcal{L}_{\alpha,\tau}(X\to Y)
=\max_{x'}\ \sup_{P_{\Tilde{X}}} \; Q_{x',P_{\tilde{X}}}(\alpha,\tau),
\end{align}
where
\begin{align}\label{def:Q}
 \nonumber&Q_{x',P_{\tilde{X}}}(\alpha,\tau)=  \left(\frac{1}{\alpha-1}+\frac{1}{\tau}\right)  \log \sum_y  P_{Y|X}(y|x')^{\frac{(\tau-1)(1-\alpha)}{\tau+\alpha-1}}
\\& \ \times \left(\sum_x P_{\Tilde{X}}(x)P_{Y|X}(y|x)^\alpha\right)^{\frac{\tau}{\tau+\alpha-1}}.   
\end{align}
$Q_{x',P_{\tilde{X}}}(\alpha,\tau)$ is continuous in $(\alpha,\tau)$, for all $(\alpha,\tau)\in (1,\infty)\times [1,\infty)$, and it may be defined by its continuous extension at $\alpha=\infty$, $\tau=\infty$, or $\alpha=1$, shown in \eqref{eq:cont-ex-Q}. \addtocounter{equation}{1}
We know that \eqref{LintermsofQ} holds for $(\alpha,\tau)\in(1,\infty)\times[1,\infty)$. Now we show that it also holds for $(\alpha,\tau)\in [1,\infty]^2\setminus (1,\infty)\times [1,\infty)$. Employing \eqref{eq:cont-ex-matle} and \eqref{eq:cont-ex-Q}, for $(\alpha,\tau)\in\{1,\infty\}\times[1,\infty]$, we can see that
\begin{align}
    \displaystyle\max_{x'}\  \sup_{P_{\Tilde{X}}}\ Q_{x',P_{\tilde{X}}}(\alpha,\tau)=\mathcal{L}_{\alpha,\tau}(X\to Y).
\end{align}
For $\alpha\in(1,\infty)$ and $\tau=\infty$, we also have
\begin{align}
    &\nonumber\displaystyle\max_{x'}\  \sup_{P_{\Tilde{X}}}\ Q_{x',P_{\tilde{X}}}(\alpha,\tau=\infty)\\&=\displaystyle\max_{x'}  \sup_{P_{\Tilde{X}}} \frac{1}{\alpha-1} \log \sum_{x,y} P_{\tilde{X}}(x) P_{Y|X}(y|x')^{1-\alpha} P_{Y|X}(y|x)^\alpha\\
    &=\displaystyle\max_{x'}\  \max_x \frac{1}{\alpha-1} \log \sum_{y}  P_{Y|X}(y|x')^{1-\alpha} P_{Y|X}(y|x)^\alpha\label{eq:lininpx}\\
    &=\mathcal{L}_{\alpha,\tau=\infty}(X\to Y),
\end{align}
where \eqref{eq:lininpx} follows because the quantity inside the logarithm is linear in $P_{\tilde{X}}$ and so the supremum is achieved at an endpoint.

 For $(\alpha_0,\tau_0)\in [1,\infty]\times [1,\infty]$, we have
\begin{align}
&\displaystyle \lim_{\epsilon\to 0} \displaystyle\inf_{\substack{\alpha\in[\alpha^-_\epsilon(\alpha_0),\alpha^+_\epsilon(\alpha_0)],\\\tau\in[\tau^-_\epsilon(\tau_0),\tau^+_\epsilon(\tau_0)]}} \mathcal{L}_{\alpha,\tau}(X \to Y)\\
 &=
\displaystyle\lim_{\epsilon \to 0}\ \mathcal{L}_{\alpha^-_\epsilon(\alpha_0),\tau^-_\epsilon(\tau_0)}(X \to Y)\label{eq:mon}\\
&=\displaystyle\lim_{\epsilon \to 0}\ \max_{x'}\ \sup_{P_{\Tilde{X}}} \; Q_{x',P_{\tilde{X}}}(\alpha^-_{\epsilon}(\alpha_0),\tau^-_{\epsilon}(\tau_0))\\
&= \max_{x'}\ \displaystyle\lim_{\epsilon \to 0}\ \sup_{P_{\Tilde{X}}} \; Q_{x',P_{\tilde{X}}}(\alpha^-_{\epsilon}(\alpha_0),\tau^-_{\epsilon}(\tau_0))\\
&= \max_{x'}\ \displaystyle\sup_{\epsilon > 0}\ \sup_{P_{\Tilde{X}}} \; Q_{x',P_{\tilde{X}}}(\alpha^-_{\epsilon}(\alpha_0),\tau^-_{\epsilon}(\tau_0))\label{eq:limtosup}\\
&= \max_{x'}\  \sup_{P_{\Tilde{X}}} \ \sup_{\epsilon > 0} \; Q_{x',P_{\tilde{X}}}(\alpha^-_{\epsilon}(\alpha_0),\tau^-_{\epsilon}(\tau_0))\\
&= \max_{x'}\  \sup_{P_{\Tilde{X}}} \ \lim_{\epsilon \to 0} \; Q_{x',P_{\tilde{X}}}(\alpha^-_{\epsilon}(\alpha_0),\tau^-_{\epsilon}(\tau_0))\label{eq:limtosup1}\\
&= \max_{x'}\  \sup_{P_{\Tilde{X}}} \  Q_{x',P_{\tilde{X}}}(\alpha^-_{0}(\alpha_0),\tau^-_{0}(\tau_0))\label{eq:continuity}\\
&= \max_{x'}\  \sup_{P_{\Tilde{X}}} \  Q_{x',P_{\tilde{X}}}(\alpha_0,\tau_0)\\
&=\mathcal{L}_{\alpha_0,\tau_0}(X \to Y)\label{result-lower}
\end{align}
where \eqref{eq:mon} follows because $\mathcal{L}_{\alpha,\tau}(X\to Y)$ is non-decreasing in $\alpha$ and $\tau$. \eqref{eq:limtosup} and \eqref{eq:limtosup1} follow because $Q_{x',P_{\tilde{X}}}(\alpha,\tau)$ is non-decreasing in $\alpha$ and $\tau$ for a fixed $\tau$ and $\alpha$, respectively. Moreover, $\alpha^-_{\epsilon}(\alpha_0)$ and $\tau^-_{\epsilon}(\tau_0)$ are non-increasing in $\epsilon$ for a fixed $\alpha_0$ and $\tau_0$, respectively. The equality \eqref{eq:continuity} follows because $Q_{x',P_{\tilde{X}}}(\alpha^-_{\epsilon}(\alpha_0),\tau^-_{\epsilon}(\tau_0))$ is continuous in $\epsilon$.

\textbf{Upper bound:} For $(\alpha,\tau)\in(1,\infty)\times [1,\infty)$, consider the expression of $\mathcal{L}_{\alpha,\tau}(X \to Y)$ in \eqref{alpha_tau_variational}:
\begin{align}
&\nonumber\mathcal{L}_{\alpha,\tau}(X\to Y)
\\\nonumber&=\max_{x'}\,\sup_{P_{\Tilde{X}}}\,\inf_{Q_Y}\ \frac{1}{\alpha-1} \log \sum_{x,y} P_{\Tilde{X}}(x) P_{Y|X}(y|x)^\alpha\\&\ \times \left(Q_Y(y)^{\frac{1}{\tau}} P_{Y|X}(y|x')^{1-\frac{1}{\tau}}\right)^{1-\alpha}\label{eq:object1}\\\nonumber
&=\max_{x'}\,\inf_{Q_Y}\,\sup_{P_{\Tilde{X}}}\ \frac{1}{\alpha-1} \log \sum_{x,y} P_{\Tilde{X}}(x) P_{Y|X}(y|x)^\alpha \\&\ \times \left(Q_Y(y)^{\frac{1}{\tau}} P_{Y|X}(y|x')^{1-\frac{1}{\tau}}\right)^{1-\alpha}\label{eq:convex-concave1}\\\nonumber
&=\max_{x'}\,\inf_{Q_Y}\,\max_x\ \frac{1}{\alpha-1} \log \sum_{y}  P_{Y|X}(y|x)^\alpha \\&\ \times\left(Q_Y(y)^{\frac{1}{\tau}} P_{Y|X}(y|x')^{1-\frac{1}{\tau}}\right)^{1-\alpha}\label{eq:convex-sup}
\end{align}
where \eqref{eq:convex-concave1} follows because the quantity inside the logarithm is convex in $Q_Y$ and linear (so concave) in $P_{\tilde{X}}$, and \eqref{eq:convex-sup} follows because the quantity inside the logarithm is linear in $P_{\tilde{X}}$ and so the supremum is achieved at an endpoint. Consider the objective in  \eqref{eq:convex-sup}:
\begin{align}
 &\nonumber f_{x',Q_Y,x}(\alpha,\tau)\\&= \frac{1}{\alpha-1} \log \sum_{y}  P_{Y|X}(y|x)^\alpha \left(Q_Y(y)^{\frac{1}{\tau}} P_{Y|X}(y|x')^{1-\frac{1}{\tau}}\right)^{1-\alpha}\\\nonumber
 &=\log \Bigg[\sum_{y} P_{Y|X}(y|x) \Bigg(P_{Y|X}(y|x) \;Q_Y(y)^{-\frac{1}{\tau}}\\&\ \times  P_{Y|X}(y|x')^{-1+\frac{1}{\tau}}\Bigg)^{\alpha-1}\Bigg]^{1/{\alpha-1}}\label{eq:norm-mon}. 
\end{align}
 $f_{x',Q_Y,x}(\alpha,\tau)$ may be defined by its continuous extension at $\alpha=\infty$, $\tau=\infty$, or $\alpha=1$, shown in \eqref{eq:cont-ex-f}. \addtocounter{equation}{1}

Looking at \eqref{eq:convex-sup}, for $(\alpha,\tau)\in(1,\infty)\times [1,\infty)$, it is clear that 
\begin{align}
    \mathcal{L}_{\alpha,\tau}(X\to Y)=\max_{x'} \;\inf_{Q_Y} \;\max_{x}\; f_{x',Q_Y,x}(\alpha,\tau).\label{Lequalsf}
\end{align}
We now show that the above equality also holds for $(\alpha,\tau)\in[1,\infty]^2\setminus(1,\infty)\times[1,\infty)$. Employing \eqref{eq:cont-ex-f}, for $\alpha\in(1,\infty)$ and $\tau=\infty$, we have 
   \begin{align}\label{eq:cont-ex1}
      \nonumber&\max_{x'} \ \inf_{Q_Y} \ \max_x  \ f_{x',Q_Y,x}(\alpha,\tau=\infty)\\&=\mathcal{L}_{\alpha}^{\text{LRDP}}(X \to Y)\\&=\mathcal{L}_{\alpha,\tau=\infty}(X\to Y).
   \end{align}
Similarly, for $\alpha=\tau=\infty$, we have 
      \begin{align}
      \nonumber&\max_{x'} \ \inf_{Q_Y} \ \max_x  \ f_{x',Q_Y,x}(\alpha=\infty,\tau=\infty)\\&=\mathcal{L}^{\text{LDP}}(X \to Y)\\&=\mathcal{L}_{\alpha=\infty,\tau=\infty}(X\to Y).
   \end{align}
    Applying similar steps to equations \eqref{eqn:L'Hopital's rule}-\eqref{last-shannon-leakage}, for $\alpha=1$ and $\tau\in[1,\infty]$, we have 
    \begin{align}\label{eq:cont-ex2}
       \nonumber&\max_{x'} \ \inf_{Q_Y} \ \max_x  \ f_{x',Q_Y,x}(\alpha=1,\tau)\\&=\mathcal{L}_{\alpha=1,\tau}(X\to Y)\\&=\text{$\tau$-Shannon leakage}. 
    \end{align}
    For $\alpha=\infty$ and $\tau\in[1,\infty)$, we have
  \begin{align}
&\nonumber\max_{x'}\,\inf_{Q_Y}\,\max_x\ f_{x',Q_Y,x}(\alpha=\infty,\tau)\\
&=\max_{x'}\,\inf_{Q_Y}\,\max_{x,y}\  \log  P_{Y|X}(y|x)\;Q_Y(y)^{-\frac{1}{\tau}}P_{Y|X}(y|x')^{-1+\frac{1}{\tau}} \\\nonumber
&=\frac{1}{\tau}\max_{x'}\ \inf_{Q_Y}\ \max_y \ \log  Q_Y(y)^{-1} \ P_{Y|X}(y|x')^{1-\tau} \\& \ \times \max_x P_{Y|X}(y|x)^{\tau}\\\nonumber
&=\frac{1}{\tau}\max_{x'}\,\inf_{Q_Y}\ \max_y \ \log   Q_Y(y)^{-1} \\\nonumber& \ \times \frac{ P_{Y|X}(y|x')^{1-\tau} \ \displaystyle\max_x P_{Y|X}(y|x)^\tau}{\displaystyle\sum_{y'} P_{Y|X}(y'|x')^{1-\tau} \ \displaystyle\max_x P_{Y|X}(y'|x)^{\tau}}\\& \ \times \displaystyle\sum_{y'} P_{Y|X}(y'|x')^{1-\tau} \ \displaystyle\max_x P_{Y|X}(y'|x)^{\tau}\\\nonumber
&=\frac{1}{\tau}\max_{x'}\Bigg[\log \displaystyle\sum_{y'} P_{Y|X}(y'|x')^{1-\tau} \ \displaystyle\max_x P_{Y|X}(y'|x)^{\tau} \\\nonumber&\ + \inf_{Q_Y}\; \log \;\max_y \; Q_Y(y)^{-1} \\& \ \times \frac{ P_{Y|X}(y|x')^{1-\tau} \ \displaystyle\max_x P_{Y|X}(y|x)^{\tau}}{\displaystyle\sum_{y'} P_{Y|X}(y'|x')^{1-\tau} \ \max_x P_{Y|X}(y'|x)^{\tau}}\Bigg].\label{collapse-mrl}
\end{align}
Now we show that
\begin{align}
   \nonumber& \inf_{Q_Y}\; \log \;\max_y \; Q_Y(y)^{-1} \\& \ \times \frac{ P_{Y|X}(y|x')^{1-\tau} \ \displaystyle\max_x P_{Y|X}(y|x)^{\tau}}{\displaystyle\sum_{y'} P_{Y|X}(y'|x')^{1-\tau} \ \max_x P_{Y|X}(y'|x)^{\tau}}=0.
\end{align}
Let $g_Y(y)=\displaystyle\frac{ P_{Y|X}(y|x')^{1-\tau} \ \displaystyle\max_x P_{Y|X}(y|x)^{\tau}}{\displaystyle\sum_{y'} P_{Y|X}(y'|x')^{1-\tau} \ \max_x P_{Y|X}(y'|x)^{\tau}}$ be a distribution on $Y$. If $Q_Y(y)=g_Y(y)$ for all $y\in\mathcal{Y}$, we have
$\log  \displaystyle\max_y Q_Y(y)^{-1} g_Y(y)=0$. So, 
\begin{align}\label{upper-div}
  \displaystyle\inf_{Q_Y}\;\log \; \max_y\;  Q_Y(y)^{-1} g_Y(y)\leq 0. 
\end{align}
Moreover, we have 
\begin{align}
  \nonumber&\inf_{Q_Y}\;  \log\;\max_y \;\frac{g_Y(y)}{Q_Y(y)}\\&=\inf_{Q_Y}\;\max_y\;\log\frac{g_Y(y)}{Q_Y(y)}\\
    &=\inf_{Q_Y}\;\sum_{y'} g_Y(y')\left(\max_y\;\log\frac{g_Y(y)}{Q_Y(y)}\right) \\
    &\geq\inf_{Q_Y}\;\sum_{y'} g_Y(y')\;\log\frac{g_Y(y')}{Q_Y(y')}\\
    &=\inf_{Q_Y} D_{KL}\left(g_Y\|Q_Y\right)=0.\label{lower-div}
\end{align}
Combining \eqref{upper-div} and \eqref{lower-div}, we get $\displaystyle \inf_{Q_Y}\log\;\max_y \;\frac{g_Y(y)}{Q_Y(y)}=0$. Therefore, \eqref{collapse-mrl} collapses to
\begin{align}
&\nonumber\frac{1}{\tau}\max_{x'} \ \log \displaystyle\sum_{y'} P_{Y|X}(y'|x')^{1-\tau} \ \displaystyle\max_x P_{Y|X}(y'|x)^{\tau}\label{new-divergence}\\&=\mathcal{L}_{\alpha=\infty,\tau}(X\to Y).
\end{align}
This completes the proof of the validity of \eqref{Lequalsf}  for all $(\alpha,\tau)\in[1,\infty]^2$.

We now investigate the monotonicity of $f_{x',Q_Y,x}(\alpha,\tau)$ in $\alpha$ and $\tau$. $f_{x',Q_Y,x}(\alpha,\tau)$ is non-decreasing in $\alpha$ for a fixed $\tau$ because the quantity inside the logarithm in \eqref{eq:norm-mon} represents $(\alpha-1)$-norm of a random variable, and $(\alpha-1)$-norm of a random variable is non-decreasing in $\alpha$. In the proof of Lemma~\ref{lemma:alpha_tau_variational}, we show that $Q_{x',P_{\tilde{X}}}(\alpha,\tau)$ defined in \eqref{def:Q} is non-decreasing in $\tau$ and so $\displaystyle\sup_{P_{\tilde{X}}}Q_{x',P_{\tilde{X}}}(\alpha,\tau)$ is non-decreasing in $\tau$.  Moreover, equations \eqref{eq:object1}-\eqref{eq:convex-sup} show that $\displaystyle\inf_{Q_Y} \max_x f_{x',Q_Y,x}(\alpha,\tau)=\displaystyle\sup_{P_{\tilde{X}}}Q_{x',P_{\tilde{X}}}(\alpha,\tau)$ (note that the infimum of the objective in \eqref{eq:object1} over $Q_Y$ is equal to $Q_{x',P_{\tilde{X}}}(\alpha,\tau)$). So, $\displaystyle\inf_{Q_Y} \max_x f_{x',Q_Y,x}(\alpha,\tau)$ is non-decreasing in $\tau$.
%$Q_{x',P_{\tilde{X}}}(\alpha,\tau)$ defined in \eqref{def:Q} is non-decreasing in $\tau$ and $\displaystyle Q_{x',P_{\tilde{X}}}(\alpha,\tau)=\inf_{Q_Y}R_{x',P_{\tilde{X}},Q_Y}(\alpha,\tau)$, so $\displaystyle\sup_{P_{\tilde{X}}}\inf_{Q_Y}R_{x',P_{\tilde{X}},Q_Y}(\alpha,\tau)$ is non-decreasing in $\tau$. Moreover, equations \eqref{eq:object1}-\eqref{eq:convex-sup} show that $\displaystyle\inf_{Q_Y} \max_x f_{x',Q_Y,x}(\alpha,\tau)=\displaystyle\sup_{P_{\tilde{X}}}\inf_{Q_Y}R_{x',P_{\tilde{X}},Q_Y}(\alpha,\tau)$, and so $\displaystyle\inf_{Q_Y} \max_x f_{x',Q_Y,x}(\alpha,\tau)$ is non-decreasing in $\tau$.

For $(\alpha_0,\tau_0)\in [1,\infty]^2$, we have
\begin{align}
   & \displaystyle \lim_{\epsilon\to 0} \displaystyle\sup_{\substack{\alpha\in[\alpha^-_\epsilon(\alpha_0),\alpha^+_\epsilon(\alpha_0)],\\\tau\in[\tau^-_\epsilon(\tau_0),\tau^+_\epsilon(\tau_0)]}} \mathcal{L}_{\alpha,\tau}(X \to Y)\\
   &=\displaystyle \lim_{\epsilon\to 0} \ \mathcal{L}_{\alpha^+_\epsilon(\alpha_0),\tau^+_\epsilon(\tau_0)}(X \to Y)\label{eq:mon3}\\
  &=\displaystyle \lim_{\epsilon\to 0} \ \max_{x'}\,\inf_{Q_Y}\,\max_x\ f_{x',Q_Y,x}(\alpha^+_\epsilon(\alpha_0),\tau^+_\epsilon(\tau_0))\\
  &= \max_{x'}\ \displaystyle \lim_{\epsilon\to 0} \ \inf_{Q_Y}\,\max_x\ f_{x',Q_Y,x}(\alpha^+_\epsilon(\alpha_0),\tau^+_\epsilon(\tau_0))\\
  &= \max_{x'}\ \displaystyle \inf_{\epsilon>0} \ \inf_{Q_Y}\,\max_x\ f_{x',Q_Y,x}(\alpha^+_\epsilon(\alpha_0),\tau^+_\epsilon(\tau_0))\label{eq:increas-inf}\\
  &= \max_{x'} \ \inf_{Q_Y} \ \displaystyle \inf_{\epsilon>0} \,\max_x\ f_{x',Q_Y,x}(\alpha^+_\epsilon(\alpha_0),\tau^+_\epsilon(\tau_0))\\
  &\leq \max_{x'} \ \inf_{Q_Y} \ \displaystyle \lim_{\epsilon\to 0} \,\max_x\ f_{x',Q_Y,x}(\alpha^+_\epsilon(\alpha_0),\tau^+_\epsilon(\tau_0))\label{upperbound-not=}\\
  &= \max_{x'} \ \inf_{Q_Y} \ \max_x \ \displaystyle \lim_{\epsilon\to 0} \ f_{x',Q_Y,x}(\alpha^+_\epsilon(\alpha_0),\tau^+_\epsilon(\tau_0))\\
  &= \max_{x'} \ \inf_{Q_Y} \ \max_x  \ f_{x',Q_Y,x}(\alpha^+_0(\alpha_0),\tau^+_0(\tau_0))\label{eq:continuous-in-epsilon}\\
    &= \max_{x'} \ \inf_{Q_Y} \ \max_x  \ f_{x',Q_Y,x}(\alpha_0,\tau_0)\label{ineq:lim-sup}\\
 & =\mathcal{L}_{\alpha_0,\tau_0}(X \to Y),\label{result-upper}
\end{align}
where 
\begin{itemize}
\item \eqref{eq:mon3} follows because $\mathcal{L}_{\alpha,\tau}(X \to Y)$ is non-decreasing in $\alpha$ and $\tau$.
    \item \eqref{eq:increas-inf} follows because $f_{x',Q_Y,x}(\alpha,\tau)$ is non-decreasing in $\alpha$ for a fixed $\tau$ and 
    $\displaystyle\inf_{Q_Y} \max_x f_{x',Q_Y,x}(\alpha,\tau)$ is non-decreasing in $\tau$ for a fixed $\alpha$. Moreover,  $\alpha^+_\epsilon(\alpha_0)$ and $\tau^+_\epsilon(\tau_0)$ are non-decreasing in $\epsilon$ for a fixed $\alpha_0$ and $\tau_0$, respectively. 
    \item \eqref{eq:continuous-in-epsilon} follows because $f_{x',Q_Y,x}(\alpha^+_\epsilon(\alpha_0),\tau^+_\epsilon(\tau_0))$ is continuous in $\epsilon$.
    \end{itemize}    
For $(\alpha_0,\tau_0)\in[1,\infty]^2$, combining \eqref{result-lower} and \eqref{result-upper}, we have 
\begin{align}
\nonumber\mathcal{L}_{\alpha_0,\tau_0}(X \to Y)&= \displaystyle \lim_{\epsilon\to 0} \displaystyle\inf_{\substack{\alpha\in[\alpha^-_\epsilon(\alpha_0),\alpha^+_\epsilon(\alpha_0)],\\\tau\in[\tau^-_\epsilon(\tau_0),\tau^+_\epsilon(\tau_0)]}} \mathcal{L}_{\alpha,\tau}(X \to Y)\\\nonumber&\leq
   \displaystyle \lim_{\epsilon\to 0} \displaystyle\sup_{\substack{\alpha\in[\alpha^-_\epsilon(\alpha_0),\alpha^+_\epsilon(\alpha_0)],\\\tau\in[\tau^-_\epsilon(\tau_0),\tau^+_\epsilon(\tau_0)]}} \mathcal{L}_{\alpha,\tau}(X \to Y)\\&\leq\mathcal{L}_{\alpha_0,\tau_0}(X \to Y).
\end{align}
which implies
\begin{align}
   &\nonumber \displaystyle \lim_{\epsilon\to 0} \displaystyle\inf_{\substack{\alpha\in[\alpha^-_\epsilon(\alpha_0),\alpha^+_\epsilon(\alpha_0)],\\\tau\in[\tau^-_\epsilon(\tau_0),\tau^+_\epsilon(\tau_0)]}} \mathcal{L}_{\alpha,\tau}(X \to Y)\\&=
   \displaystyle \lim_{\epsilon\to 0} \displaystyle\sup_{\substack{\alpha\in[\alpha^-_\epsilon(\alpha_0),\alpha^+_\epsilon(\alpha_0)],\\\tau\in[\tau^-_\epsilon(\tau_0),\tau^+_\epsilon(\tau_0)]}} \mathcal{L}_{\alpha,\tau}(X \to Y)\\&=\mathcal{L}_{\alpha_0,\tau_0}(X \to Y).
\end{align}
This completes the proof of the continuity of maximal $(\alpha,\tau)$-leakage.
\subsection{Continuity of $\mathcal{L}_{\alpha,\beta}(X\to Y)$ at $(\alpha,\beta)\in[1,\infty]^2\setminus\{(1,1)\}$}
Let $$B_{XY}(\alpha,\beta)=\mathcal{L}_{\alpha,\beta}(X \to Y)$$ and $$T_{XY}(\alpha,\tau)=\mathcal{L}_{\alpha,\tau}(X \to Y).$$
For $\alpha>\beta$, we have
\begin{align}
    T_{XY}(\alpha,\tau)=B_{XY}\left(\alpha,\displaystyle\frac{\alpha \tau}{\tau+\alpha-1}\right)  
\end{align}
and
\begin{align}
  B_{XY}(\alpha,\beta)=T_{XY}\left(\alpha,\displaystyle\frac{\beta (\alpha-1)}{\alpha-\beta}\right).  
\end{align}
To demonstrate the continuity of M$\alpha$beL, we employ the sequential continuity theorem, that is, a function $f$ is continuous at $a$ if and only if $f(x_n)\to f(a)$ for all sequences $x_n \to a$.

\textbf{Continuity at $(\alpha_0,\beta_0)$ with $\alpha_0>\beta_0$:} If $\displaystyle\lim_{n \to \infty} (\alpha_n,\beta_n)=(\alpha_0,\beta_0)$, then for sufficiently large $n$, we have $\alpha_n>\beta_n$ and so
\begin{align}
 &\nonumber \lim_{n\to \infty}B_{XY}(\alpha_n,\beta_n)\\&= \lim_{n\to \infty} T_{XY}\left(\alpha_n,\displaystyle\frac{\beta_n (\alpha_n-1)}{\alpha_n-\beta_n}\right)\\
&=T_{XY}\left(\lim_{n\to \infty}\alpha_n,\lim_{n\to \infty}\displaystyle\frac{\beta_n (\alpha_n-1)}{\alpha_n-\beta_n}\right)\label{eq:cont}\\
&=T_{XY}\left(\alpha_0,\displaystyle\frac{\beta_0(\alpha_0-1)}{\alpha_0-\beta_0}\right)\label{eq:cont-mapping}\\
&=B_{XY}(\alpha_0,\beta_0),
\end{align}
where \eqref{eq:cont} and \eqref{eq:cont-mapping} follow from the continuity of maximal $(\alpha,\tau)$-leakage and the continuity of $\displaystyle\frac{\beta (\alpha-1)}{\alpha-\beta}$ for $\alpha>\beta$, respectively. 

\textbf{Continuity at $(\alpha_0,\beta_0)$ with $1\neq\alpha_0<\beta_0$:} We first prove the continuity of $\mathcal{L}_{\beta}^{\text{LRDP}}(X \to Y)$ at $\beta> 1$ as follows. If $\displaystyle\lim_{n\to \infty}\beta_n=\beta\neq 1$, then we have \begin{align}
    &\nonumber\lim_{n\to \infty}\mathcal{L}^{\text{LRDP}}_{\beta_n}(X\rightarrow Y)
\\\nonumber&=\lim_{n\to \infty}\max_{x,x^\prime\in\mathcal{X}}\frac{1}{\beta_n-1}
 \log \sum_y P_{Y|X}(y|x')^{1-\beta_n} \\& \ \times P_{Y|X}(y|x)^{\beta_n}\\\nonumber
 &=\max_{x,x^\prime\in\mathcal{X}} \lim_{n\to \infty} \frac{1}{\beta_n-1}
 \log \sum_y P_{Y|X}(y|x')^{1-\beta_n} \\ & \ \times P_{Y|X}(y|x)^{\beta_n}\\
 &=\max_{x,x^\prime\in\mathcal{X}}  \frac{1}{\beta-1}
 \log \sum_y P_{Y|X}(y|x')^{1-\beta} P_{Y|X}(y|x)^{\beta}\label{eq:con-obj}\\
 &=\mathcal{L}^{\text{LRDP}}_{\beta}(X \to Y),
\end{align}
where \eqref{eq:con-obj} follows because the objective is continuous at $\beta>1$. Note that the objective is defined by its continuous extension, i.e., KL divergence, at $\beta=1$.
If $\displaystyle\lim_{n \to \infty} (\alpha_n,\beta_n)=(\alpha_0,\beta_0)$ with $1\neq \alpha_0<\beta_0$, then for sufficiently large $n$, we have $1\neq\alpha_n<\beta_n$. So,
\begin{align}
 &\nonumber\lim_{n\to \infty}\mathcal{L}_{\alpha_n,\beta_n}(X\to Y)\\&=\lim_{n\to \infty} \displaystyle\frac{\alpha_n (\beta_n-1)}{(\alpha_n-1)\beta_n}\     \mathcal{L}^{\text{LRDP}}_{\beta_n}(X\rightarrow Y)\\
&=\displaystyle\frac{\alpha_0 (\beta_0-1)}{(\alpha_0-1)\beta_0} \   \mathcal{L}^{\text{LRDP}}_{\beta_0}(X\rightarrow Y)\label{eq:cont-reyni}\\
&=\mathcal{L}_{\alpha_0,\beta_0}(X \to Y), 
\end{align}
where \eqref{eq:cont-reyni} follows from the continuity of $\mathcal{L}_{\beta}^{\text{LRDP}}(X\to Y)$ and $\displaystyle\frac{\alpha (\beta-1)}{(\alpha-1)\beta}$ for $\alpha\neq 1$.

\textbf{Continuity at $(\alpha_0,\beta_0)$ with $1=\alpha_0<\beta_0$:}

Let $\displaystyle\lim_{n\to \infty} (\alpha_n,\beta_n)=(1,\beta_0\neq1)$. If $\mathcal{L}_{\beta_0}^{\text{LRDP}}(X\to Y)\neq 0$, for sufficiently large $n$, we have 
\begin{align}
    \nonumber & \lim_{n\to \infty}\mathcal{L}_{\alpha_n,\beta_n}(X\to Y)\\&=\lim_{n\to \infty} \displaystyle\frac{\alpha_n (\beta_n-1)}{(\alpha_n-1)\beta_n}\     \mathcal{L}^{\text{LRDP}}_{\beta_n}(X\rightarrow Y)\\&=\infty.
\end{align}
Here, maximal $(\alpha,\beta)$-leakage is continuous in the sense that the limit points of all the sequences $\{\mathcal{L}_{\alpha_n,\beta_n}(X\to Y)\}_{n\in\mathbb{N}}$ are  equal to $\infty$. Moreover, 
if $\mathcal{L}_{\beta_0}^{\text{LRDP}}(X\to Y)= 0$, then $X$ and $Y$ are independent. So,  $\mathcal{L}_{\alpha,\beta}(X \to Y)=0$ everywhere and is continuous. 

\textbf{Continuity at $(\alpha_0,\beta_0)$ with $\alpha_0=\beta_0\neq 1$:}

Let $$\displaystyle\lim_{n\to \infty}(\alpha_n,\beta_n)=(\beta_0,\beta_0)\neq (1,1),$$
and let  partition $\{(\alpha_n,\beta_n)\}_{n\in\mathbb{N}}$ into two distinct subsequences
$\{(\alpha_{n_k},\beta_{n_k})\}_{k\in S_1}$ with $\alpha_{n_k}\leq \beta_{n_k}$ and $\{(\alpha_{\bar{n}_k},\beta_{\bar{n}_k})\}_{k\in S_2}$ with $\alpha_{\bar{n}_k}> \beta_{\bar{n}_k},$
where $\{n_k\}_{k\in S_1}\cup \{\bar{n}_k\}_{k\in S_2}=\mathbb{N}$ and $S_1,S_2\subseteq \mathbb{N}$. If either of the subsequences is finite, then the sequence $ \{(\alpha_n,\beta_n)\}_{n>N}$
with $N>\min\{\displaystyle\max_{k\in S_1} n_k,\max_{k\in S_2}\bar{n}_k\}$ consists entirely of elements from the other subsequence. As a result, this finite subsequence does not impact the convergence of the original sequence $\{(\alpha_n,\beta_n)\}_{n\in\mathbb{N}}$. Here, we consider the scenario where $S_1=S_2=\mathbb{N}$. Since every subsequence of a convergent sequence converges to the same limit as the original sequence, we have
$\displaystyle\lim_{k\to \infty} (\alpha_{n_k},\beta_{n_k})=(\beta_0,\beta_0)$ and  $\displaystyle\lim_{k\to \infty} (\alpha_{\bar{n_k}},\beta_{\bar{n_k}})=(\beta_0,\beta_0).$ For $\{(\alpha_{n_k},\beta_{n_k})\}_{k\in \mathbb{N}}$ with $\alpha_{n_k}\leq \beta_{n_k}$, we have
\begin{align}
   \nonumber& \displaystyle\lim_{k \to \infty} \mathcal{L}_{\alpha_{n_k},\beta_{n_k}}(X\to Y)\\&=\displaystyle\lim_{k\to \infty} \frac{\alpha_{n_k}(\beta_{n_k}-1)}{(\alpha_{n_k}-1)\beta_{n_k}} \ \mathcal{L}^{\text{LRDP}}_{\beta_{n_k}}(X \to Y)\\
&=\mathcal{L}^{\text{LRDP}}_{\beta_0}(X \to Y)\label{cont-pr}.
\end{align}
The last equality follows from the continuity of $\mathcal{L}_{\beta}^{\text{LRDP}}(X\to Y)$ and $\displaystyle\frac{\alpha (\beta-1)}{(\alpha-1)\beta}$ for $\alpha\neq 1$. Furthermore, recalling the functions $B_{XY}(\alpha,\beta)$ and $T_{XY}(\alpha,\tau)$, for $\{(\alpha_{\bar{n}_k},\beta_{\bar{n}_k})\}_{k\in \mathbb{N}}$ with $\alpha_{\bar{n}_k}>\beta_{\bar{n}_k}$, we have
\begin{align}
    &\nonumber\displaystyle\lim_{k \to \infty} \mathcal{L}_{\alpha_{\bar{n}_k},\beta_{\bar{n}_k}}(X\to Y)\\&=\displaystyle\lim_{k \to \infty} B_{XY}(\alpha_{\bar{n}_k},\beta_{\bar{n}_k})\\
    &=\displaystyle\lim_{k \to \infty} T_{XY}\left(\alpha_{\bar{n}_k},\frac{\beta_{\bar{n}_k}(\alpha_{\bar{n}_k}-1)}{\alpha_{\bar{n}_k}-\beta_{\bar{n}_k}}\right)\\
    &= T_{XY}\left(\displaystyle\lim_{k \to \infty}\alpha_{\bar{n}_k},\displaystyle\lim_{k \to \infty}\frac{\beta_{\bar{n_k}}(\alpha_{\bar{n}_k}-1)}{\alpha_{\bar{n}_k}-\beta_{\bar{n}_k}}\right)\label{eq:cont2}\\
    &=T_{XY}(\beta_0,\infty)\\
    &=\mathcal{L}^{\text{LRDP}}_{\beta_0}(X\to Y)\label{eq:lim-lrdp}
\end{align}
where \eqref{eq:cont2} follows from the continuity of $\mathcal{L}_{\alpha,\tau}(X \to Y)$ and \eqref{eq:lim-lrdp} follows because $$\displaystyle\lim_{\tau\to \infty}\mathcal{L}_{\alpha,\tau}(X \to Y)=\mathcal{L}^{\text{LRDP}}_{\alpha}(X\to Y).$$ Combining \eqref{cont-pr} and \eqref{eq:lim-lrdp}, we get 
$$\displaystyle\lim_{n\to \infty} \mathcal{L}_{\alpha_n,\beta_n}(X \to Y)=\mathcal{L}^{\text{LRDP}}_{\beta_0}(X\to Y).$$
This completes the proof of continuity.
%%%%%%%%%%%%%%%%%%%%%%%
\section{Proof of Theorem~\ref{thm:continuous-alpha-beta-leakage}}\label{proof:thm-continuous-alpha-beta-leakage} 
Here, we extend our results to continuous real random variables through the Riemann integral. The results can also be extended to higher dimensions through Lebesgue integral. We first consider a case in which $X$ still has a finite alphabet $\mathcal{X}$ but $Y$ takes value from a continuous alphabet $\mathcal{Y}\subseteq \mathbb{R}$.
\begin{lemma}\label{lemma:discrete-x-continuous-Y}
    When $X$ has a finite alphabet and $Y$ is continuous, maximal $(\alpha,\beta)$-leakage defined in \eqref{eqn:alpha,beta-leakage-original-def} simplifies to 
    \begin{align}
        \nonumber \mathcal{L}_{\alpha,\beta}(X\to Y)&= \sup_{P_{X}}\ \sup_{P_{\Tilde{X}}} \ \frac{\alpha}{(\alpha-1)\beta}
\log 
\int_{\mathcal{Y}} f_{Y}(y)^{1-\beta} \\& \ \left( \sum_x P_{\Tilde{X}}(x) f_{Y|X}(y|x)^\alpha \right)^{\frac{\beta}{\alpha}}dy
    \end{align}
  where $P_{\Tilde{X}}$ is a distribution on the support of $P_X$.  
\end{lemma}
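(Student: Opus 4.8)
The plan is to transcribe the proof of Theorem~\ref{theorem:alpha-beta-leakage} (Appendix~\ref{proof:thm-alpha-beta-leakage}) to the mixed discrete/continuous setting, replacing the sum over $y\in\mathcal{Y}$ by a Riemann integral over $\mathcal{Y}\subseteq\mathbb{R}$ and reading $P_{U|Y}(u|y)$, $P_{\hat U|Y}(u|y)$ as conditional pmfs on the (arbitrary finite) alphabet of $U$ indexed by the continuous variable $y$, with $\mathbb{E}_Y[\cdot]=\int_{\mathcal{Y}}f_Y(y)(\cdot)\,dy$. Since $U$ and $X$ remain discrete, the estimator side is unchanged; only the averaging over $Y$ changes. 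As in the discrete case, I would prove the two-sided bound: an upper bound via Jensen's inequality applied along the Markov chain $U-X-Y$, and a matching lower bound via the ``shattering'' construction.

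For the upper bound, fix $y$ and note that the inner maximization $\max_{P_{\hat U|Y=y}}\sum_u P_{U|Y}(u|y)P_{\hat U|Y}(u|y)^{(\alpha-1)/\alpha}$ is the same finite-dimensional problem solved in Appendix~\ref{proof:thm-alpha-beta-leakage}, with optimizer $P_{\hat U|Y}(u|y)=P_{U|Y}(u|y)^\alpha/\sum_{u'}P_{U|Y}(u'|y)^\alpha$ (which is measurable in $y$, hence a valid estimator) and optimal value $\bigl(\sum_u P_{U|Y}(u|y)^\alpha\bigr)^{1/\alpha}$; the denominator likewise equals $\bigl(\sum_u P_U(u)^\alpha\bigr)^{1/\alpha}$. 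The log-term then reduces, exactly as in \eqref{eq:simplified-leakage}, to
\begin{align}
\frac{1}{\beta}\log\int_{\mathcal{Y}}f_Y(y)^{1-\beta}\left(\frac{\sum_u P_U(u)^\alpha f_{Y|U}(y|u)^\alpha}{\sum_u P_U(u)^\alpha}\right)^{\beta/\alpha}dy.
\end{align}
Because $X$ is discrete, $f_{Y|U}(y|u)=\sum_x P_{X|U}(x|u)f_{Y|X}(y|x)$, so Jensen's inequality for $t\mapsto t^\alpha$ gives $f_{Y|U}(y|u)^\alpha\le\sum_x P_{X|U}(x|u)f_{Y|X}(y|x)^\alpha$; substituting and setting $P_{\tilde X}(x)=\sum_u P_U(u)^\alpha P_{X|U}(x|u)\big/\sum_u P_U(u)^\alpha$, a distribution on $\mathrm{supp}(P_X)$, yields the claimed upper bound after taking suprema over $P_X$ and $U$.

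For the lower bound, I would reuse the shattering construction verbatim: take $U$ with $H(X|U)=0$, with $U$ uniform on a finite set $\mathcal{U}_x$ given $X=x$; then $f_{Y|U}(y|u)=f_{Y|X}(y|x)$ for $u\in\mathcal{U}_x$, and the bracketed ratio becomes $\sum_x|\mathcal{U}_x|^{1-\alpha}P_X(x)^\alpha f_{Y|X}(y|x)^\alpha\big/\sum_x|\mathcal{U}_x|^{1-\alpha}P_X(x)^\alpha$. Choosing the cardinalities $|\mathcal{U}_x|$ appropriately realizes any $P_{\tilde X}$ with full support, and any $P_X$ is arbitrarily close to one with full support, so the lower bound matches the upper bound. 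Note that, unlike in Theorem~\ref{theorem:alpha-beta-leakage}, the statement keeps the $\sup_{P_X}$ rather than collapsing it to a maximum over $x'$, so no extremal-point argument in $P_Y$ is needed. The edge cases $\alpha=\infty$, $\beta=\infty$, and $\alpha=1$ (for $\beta\in(1,\infty]$) follow by the same continuous-extension arguments as in the discrete case.

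The main obstacle is the measure-theoretic bookkeeping rather than any new idea: one must check that the pointwise-in-$y$ optimal estimator is jointly measurable so that $\mathbb{E}_Y$ of the optimized integrand indeed equals the supremum over estimators, justify interchanging $\max_{P_{\hat U|Y}}$ with $\int dy$ (Tonelli, using non-negativity), and confirm that all integrals appearing are finite. Continuity of $f_{XY}$, together with the fact that $X$ ranges over a finite alphabet (so $f_{Y|X}(y|x)$ is a finite family of continuous densities), makes these integrals well behaved; with those points dispatched, the argument is a line-by-line port of Appendix~\ref{proof:thm-alpha-beta-leakage}.
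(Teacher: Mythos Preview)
Your proposal is correct and matches the paper's own approach: the paper merely states that the proof ``follows similar steps to the proof of Theorem~\ref{theorem:alpha-beta-leakage} along the lines of \cite[Proof of Theorem~7]{IssaWK2020},'' and your write-up is precisely that transcription, with the pointwise-in-$y$ optimization, the Jensen step, and the shattering lower bound carried over verbatim while replacing $\sum_y$ by $\int_{\mathcal{Y}}dy$. Your remark that no extremal-point reduction of $\sup_{P_X}$ is needed here (since the lemma's statement retains the supremum) and your list of measure-theoretic checks are exactly the extra care the continuous-$Y$ setting demands.
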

The proof of lemma~\ref{lemma:discrete-x-continuous-Y} follows similar steps to the proof of Theorem~\ref{theorem:alpha-beta-leakage} along the lines of \cite[Proof of Theorem~7]{IssaWK2020}. Using lemma~\ref{lemma:discrete-x-continuous-Y}, we prove Theorem~\ref{thm:continuous-alpha-beta-leakage} as follows. 
\\
\textbf{Upper bound:} Applying similar steps to the proof of upper bound for Theorem~\ref{theorem:alpha-beta-leakage}, we may get 
\begin{align}
\nonumber\mathcal{L}_{\alpha,\beta}(X\to Y)&\leq\sup_{f_{X}}\ \sup_{f_{\Tilde{X}}} \ \frac{\alpha}{(\alpha-1)\beta}
\log 
\int_{\mathcal{Y}} f_{Y}(y)^{1-\beta} \\&\left( \int_{\mathcal{X}} f_{\Tilde{X}}(x) f_{Y|X}(y|x)^\alpha dx\right)^{\frac{\beta}{\alpha}}dy.\label{upper-continuous}
\end{align}
\textbf{Lower bound:} 
Fix $n_1,n_2\in \mathbb{N}$ and $a,b\in \mathbb{R}$ such that $a,b>0$. We partition the intervals $[-a,a]$ and $[-b,b]$ into subintervals with equal lengths $\Delta_1=\displaystyle\frac{2a}{n_1}$ and $\Delta_2=\displaystyle\frac{2b}{n_2}$, respectively. Let 
$$\bar{X}=\displaystyle\sum_{i=1}^{n_1} x^*_i \ \mathbbm{1}\{X\in [-a+(i-1)\Delta_1,-a+i\Delta_1] \}$$
for $x \in [-a,a]$, and $\bar{X}=x_0^*$,  otherwise. Moreover, let 
$$\bar{Y}=\displaystyle\sum_{j=1}^{n_2} y^*_j \ \mathbbm{1}\{Y\in [-b+(j-1)\Delta_2,-b+j\Delta_2] \}$$
for $y \in [-b,b]$, and $\bar{Y}=y_0^*$, otherwise. Here, $\mathbbm{1}\{\cdot\} $ is the indicator function. Furthermore, $x^*_i\in [-a+(i-1)\Delta_1,-a+i\Delta_1]$  for $i\in \{1,\cdots,n_1\}$ and $y^*_j\in [-b+(j-1)\Delta_2,-b+j\Delta_2]$  for $j\in \{1,\cdots,n_2\}$ are fixed points, and $x_0^* $ and $y_0^*$ are fixed symbols. For continuous random variables $X$ and $Y$ and their quantized versions $\bar{X}$ and $\bar{Y}$, we now prove that since the Markov chain $\bar{X}-X-Y-\bar{Y}$ holds, we have $\mathcal{L}_{\alpha,\beta}(X \to Y) \geq \mathcal{L}_{\alpha,\beta} (\bar{X}\to \bar{Y})$. To do so, we first prove $\mathcal{L}_{\alpha,\beta}(X \to Y) \geq \mathcal{L}_{\alpha,\beta}(\bar{X} \to Y) $ and then, we show $\mathcal{L}_{\alpha,\beta}(\bar{X} \to Y) \geq \mathcal{L}_{\alpha,\beta}(\bar{X} \to \bar{Y})$. At the end, we bound $\mathcal{L}_{\alpha,\beta}(\bar{X}\to \bar{Y})$ from below. %Therefore, $P_{\bar{X}}(x^*_i)=P_X(\Delta_{1i})$ and $P_{\bar{Y}}(y^*_j)=P_Y(\Delta_{2j})$. 
Applying similar steps to the proof of the linkage inequality for random variables with finite alphabets, i.e., \eqref{data-processing1}, we may prove the following lemma. 
\begin{lemma}
    Let $X$ have a finite alphabet and $Y$ and $Z$ be continuous random variables. If the Markov chain $X-Y-Z$ holds then $\mathcal{L}_{\alpha,\beta}(Y\to Z) \geq \mathcal{L}_{\alpha,\beta}(X\to Z)$, \text{which means}
\begin{align}\label{ineq:linkage-continuous}
  \mathcal{L}_{\alpha,\beta}(X \to Y) \geq \mathcal{L}_{\alpha,\beta}(\bar{X} \to Y).  
\end{align}
\end{lemma} 
Using lemma~\ref{lemma:discrete-x-continuous-Y}, we prove the following post-processing inequality.
\begin{lemma}\label{ineq:post-process}
   Let $X$ and $Z$ have finite alphabets and $Y$ be a continuous random variable. If the Markov chain $X-Y-Z$ holds then $\mathcal{L}_{\alpha,\beta}(X\to Y) \geq \mathcal{L}_{\alpha,\beta}(X\to Z)$. 
\end{lemma}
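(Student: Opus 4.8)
The plan is to adapt the proof of the post-processing inequality \eqref{data-processing2} for finite alphabets from Appendix~\ref{proof:thm-properties}, replacing the sum over $y$ by a Riemann integral and the finite-alphabet formula of Theorem~\ref{theorem:alpha-beta-leakage} by its continuous-$Y$ counterpart, Lemma~\ref{lemma:discrete-x-continuous-Y}. Since $X$ and $Z$ are both finite, $\mathcal{L}_{\alpha,\beta}(X\to Z)$ is still given by \eqref{eqn:thm-alpha-beta-leakage}, whereas
\begin{align*}
\mathcal{L}_{\alpha,\beta}(X\to Y)=\sup_{P_X}\ \sup_{P_{\Tilde{X}}}\ \frac{\alpha}{(\alpha-1)\beta}\log\int_{\mathcal{Y}} f_Y(y)^{1-\beta}\Big(\sum_x P_{\Tilde{X}}(x) f_{Y|X}(y|x)^\alpha\Big)^{\beta/\alpha}\,dy.
\end{align*}
Fixing $P_X$ (which determines $f_Y$ on the left and $P_Z$ on the right) and $P_{\Tilde{X}}$ on the support of $P_X$, it suffices to show the argument of the logarithm on the left dominates that on the right; monotonicity of $t\mapsto\frac{\alpha}{(\alpha-1)\beta}\log t$ and taking suprema over $P_X$ and $P_{\Tilde{X}}$ then conclude the proof.

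First I would set $g(y)=\big(\sum_x P_{\Tilde{X}}(x) f_{Y|X}(y|x)^\alpha\big)^{1/\alpha}$ and, for each $z$ with $P_Z(z)>0$, $c_z(y)=f_Y(y)P_{Z|Y}(z|y)/P_Z(z)$, so that $\int c_z(y)\,dy=1$. Writing $\int f_Y(y)^{1-\beta}g(y)^\beta\,dy=\sum_z P_Z(z)\int c_z(y)\big(g(y)/f_Y(y)\big)^\beta\,dy$ and applying Jensen's inequality to the convex map $x\mapsto x^\beta$ (valid since $\beta\ge1$) gives, exactly as in \eqref{jensens_application1}--\eqref{g_form},
\begin{align*}
\int f_Y(y)^{1-\beta}g(y)^\beta\,dy\ \ge\ \sum_z P_Z(z)^{1-\beta}\Big(\int P_{Z|Y}(z|y)\,g(y)\,dy\Big)^\beta.
\end{align*}
It then remains to lower bound $\int P_{Z|Y}(z|y)\,g(y)\,dy$ by $\big(\sum_x P_{\Tilde{X}}(x) P_{Z|X}(z|x)^\alpha\big)^{1/\alpha}$. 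In the finite-alphabet proof this was the step \eqref{ineq:norm} using the triangle inequality for the $\ell^\alpha$-norm; here I would instead invoke Minkowski's integral inequality for $\alpha\in(1,\infty)$, applied to the non-negative kernel $h_x(y)=P_{\Tilde{X}}(x)^{1/\alpha}P_{Z|Y}(z|y)f_{Y|X}(y|x)$, with the sum over $x$ in the role of the $\ell^\alpha$-norm and the integral over $y$ in the role of the $L^1$-norm:
\begin{align*}
\Big(\sum_x\Big(\int h_x(y)\,dy\Big)^\alpha\Big)^{1/\alpha}\ \le\ \int\Big(\sum_x h_x(y)^\alpha\Big)^{1/\alpha}\,dy\ =\ \int P_{Z|Y}(z|y)\,g(y)\,dy.
\end{align*}
Since $\int h_x(y)\,dy=P_{\Tilde{X}}(x)^{1/\alpha}\int f_{Y|X}(y|x)P_{Z|Y}(z|y)\,dy=P_{\Tilde{X}}(x)^{1/\alpha}P_{Z|X}(z|x)$ by the Markov chain $X-Y-Z$, the left side is exactly $\big(\sum_x P_{\Tilde{X}}(x)P_{Z|X}(z|x)^\alpha\big)^{1/\alpha}$. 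Substituting back, applying $\frac{\alpha}{(\alpha-1)\beta}\log(\cdot)$, and taking suprema over $P_X$ and $P_{\Tilde{X}}$ yields $\mathcal{L}_{\alpha,\beta}(X\to Y)\ge\mathcal{L}_{\alpha,\beta}(X\to Z)$; the cases $\alpha=\infty$ or $\beta=\infty$ follow by continuous extension.

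The hard part will be justifying the Minkowski-integral step rigorously in the Riemann-integral framework used here: one must ensure $f_{Y|X}$ and $f_Y$ exist and the relevant integrals are finite, that the ratio $g(y)/f_Y(y)$ is only evaluated on the support of $f_Y$, and that the interchange of the finite sum over $x$ with the integral over $y$ is legitimate. In keeping with the paper's approach (cf.\ the proof of Theorem~\ref{thm:continuous-alpha-beta-leakage}), I would handle this either by a quantization/limiting argument reducing to the finite-alphabet inequality \eqref{ineq:norm}, or by citing the standard measure-theoretic form of Minkowski's inequality. A minor point to verify is that the support constraint on $P_{\Tilde{X}}$ is consistent on both sides, which is immediate because the same $P_X$ and $P_{\Tilde{X}}$ are used throughout.
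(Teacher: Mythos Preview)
Your proposal is correct and follows essentially the same approach as the paper: define $g(y)$ and $c_z(y)$, apply Jensen's inequality for $x\mapsto x^\beta$ to get the $P_Z(z)^{1-\beta}(\cdot)^\beta$ form, then invoke Minkowski's integral inequality (in place of the finite-$y$ triangle inequality) together with the Markov chain $X-Y-Z$, and finish by monotonicity of $\frac{\alpha}{(\alpha-1)\beta}\log t$ and suprema over $P_X,P_{\tilde X}$. The paper's proof does not dwell on the measure-theoretic caveats you raise, so your extra care there is harmless but not required for matching the paper.
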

\begin{proof}
 For any $y \in \mathcal{Y}$, let
\begin{align}\label{def:g-continuous}
    g(y)=\left( \sum_{x} P_{\Tilde{X}}(x) f_{Y|X}(y|x)^{\alpha}\right)^{\frac{1}{\alpha}}
\end{align}
and
\begin{align}\label{def:c-continuous}
    c_z(y)=\displaystyle\frac{f_{Y}(y)\; P_{Z|Y}(z|y)}{P_{Z}(z)}
\end{align}
such that $\int_{\mathcal{Y}} c_z(y)\ dy=1$. We have
\begin{align}
    &\int_{\mathcal{Y}} f_Y(y)^{1-\beta} \left( \sum_{x} P_{\Tilde{X}}(x) f_{Y|X}(y|x)^{\alpha}\right)^{\frac{\beta}{\alpha}}dy
    \\&=\int_{\mathcal{Y}} f_{Y}(y)^{1-\beta}g(y)^\beta dy
    \\&=\int_{\mathcal{Y}}\sum_{z} f_Y(y) P_{Z|Y}(z|y)\left(\frac{g(y)}{f_Y(y)}\right)^\beta dy
    \\&=\sum_{z} P_Z(z) \int_{\mathcal{Y}} c_z(y)\left(\frac{g(y)}{f_Y(y)}\right)^\beta dy
    \\&\ge \sum_z P_Z(z) \left(\int_{\mathcal{Y}} c_z(y)\frac{g(y)}{f_Y(y)} \ dy\right)^\beta \label{jensens_application}
    \\&=\sum_z P_{Z}(z)^{1-\beta} \left(\int_{\mathcal{Y}} P_{Z|Y}(z|y)g(y) \ dy\right)^\beta\label{g_form1}
\end{align}
where \eqref{jensens_application} follows from applying Jensen's inequality to the convex function $f:\;x\to x^{p}\;(x \geq 0,\; p\geq 1)$. Recalling the definition of $g(y)$ from \eqref{def:g-continuous}, we have
\begin{align}
    &\int_{\mathcal{Y}} P_{Z|Y}(z|y) g(y)\ dy
    \\&=\int_{\mathcal{Y}} P_{Z|Y}(z|y) \bigg( \sum_{x} P_{\Tilde{X}}(x)  f_{Y|X}(y|x)^{\alpha}\bigg)^{\frac{1}{\alpha}}dy
    \\&=\int_{\mathcal{Y}} \bigg(\sum_{x}  \Big( P_{\Tilde{X}}(x)^{\frac{1}{\alpha}}P_{Z|Y}(z|y)  f_{Y|X}(y|x)\Big)^{\alpha}\bigg)^{\frac{1}{\alpha}}dy
    \\&\ge \bigg(\sum_{x}  \Big( \int_{\mathcal{Y}}  P_{\Tilde{X}}(x)^{\frac{1}{\alpha}}P_{Z|Y}(z|y)  f_{Y|X}(y|x) \ dy\Big)^{\alpha}\bigg)^{\frac{1}{\alpha}}\label{ineq:Minkowski}
    \\&=\left(\sum_{x} P_{\Tilde{X}}(x)  P_{Z|X}(z|x)^{\alpha}\right)^{\frac{1}{\alpha}}\label{applying Markov chain1}
\end{align}
where 
\begin{itemize}
    \item  \eqref{ineq:Minkowski} follows from Minkowski's integral inequality,
    \item \eqref{applying Markov chain1} follows because the Markov chain $X - Y - Z$ holds.
\end{itemize} 
Applying \eqref{applying Markov chain1} to \eqref{g_form1}, and using the fact that for $\alpha \in (1,\infty)$ and $\beta \in [1,\infty)$, the function $f: t \to \frac{\alpha}{(\alpha-1)\beta} \log t$ is increasing in $t > 0$, give
\begin{align}
   & \frac{\alpha}{(\alpha-1)\beta} \log \int_{\mathcal{Y}} f_{Y}(y)^{1-\beta}\left( \sum_{x} P_{\Tilde{X}}(x) f_{Y|X}(y|x)^{\alpha}\right)^{\frac{\beta}{\alpha}}dy\nonumber\\ &\geq  \frac{\alpha}{(\alpha-1)\beta} \log  \sum_z P_{Z}(z)^{1-\beta} \left(\sum_{x} P_{\Tilde{X}}(x)  P_{Z|X}(z|x)^{\alpha}\right)^{\frac{\beta}{\alpha}}.
\end{align}
Taking suprema over $P_X$ and $P_{\Tilde{X}}$ completes the proof.
\end{proof}
From lemma~\ref{ineq:post-process}, we get $\mathcal{L}_{\alpha,\beta}(\bar{X}\to Y)\geq \mathcal{L}_{\alpha,\beta}(\bar{X}\to \bar{Y})$. Combining this result with \eqref{ineq:linkage-continuous}, 
we have $\mathcal{L}_{\alpha,\beta}(X\to Y)\geq \mathcal{L}_{\alpha,\beta}(\bar{X}\to \bar{Y})$. 
Since $\mathcal{L}_{\alpha,\beta}(X\to Y)\geq \mathcal{L}_{\alpha,\beta}(\bar{X}\to \bar{Y})$ holds for all $a,b>0$ and  $n_1, n_2\in\mathbb{N}$, we get $\mathcal{L}_{\alpha,\beta}(X\to Y)\geq \displaystyle\sup_{a,b,n_1,n_2}\mathcal{L}_{\alpha,\beta}(\bar{X}\to \bar{Y})$. Moreover, since $\bar{X}$ and $\bar{Y}$ have finite alphabets, we may use the result of lower bound for Theorem~\ref{theorem:alpha-beta-leakage}, equation~\eqref{eq:result-lower-finite}, and so we have 
\begin{align}
\nonumber&\mathcal{L}_{\alpha,\beta}(X\to Y)\\\nonumber&\geq \sup_{a,b,n_1,n_2} \sup_{P_{\bar{X}},P_{\Tilde{X}}} \ \frac{\alpha}{(\alpha-1)\beta} \
\log
\sum_{j=0}^{n_2} P_{\bar{Y}}(y^*_j)^{1-\beta}\\& \  \left(\sum_{i=0}^{n_1} P_{\Tilde{X}}(x^*_i) P_{\bar{Y}|\bar{X}}(y^*_j|x^*_i)^\alpha \right)^{\frac{\beta}{\alpha}}\\\nonumber
&\geq \sup_{P_{\bar{X}},P_{\Tilde{X}}}  \ \frac{\alpha}{(\alpha-1)\beta}\
\log \sup_{a,b,n_1,n_2}\
\sum_{j=1}^{n_2} P_{\bar{Y}}(y^*_j)^{1-\beta} \\& \ \left(\sum_{i=1}^{n_1} P_{\Tilde{X}}(x^*_i) P_{\bar{Y}|\bar{X}}(y^*_j|x^*_i)^\alpha \right)^{\frac{\beta}{\alpha}}\\\nonumber
&\geq \sup_{P_{\bar{X}},P_{\Tilde{X}}}  \ \frac{\alpha}{(\alpha-1)\beta}
\log \ \sup_{b,n_2}\
\sum_{j=1}^{n_2} P_{\bar{Y}}(y^*_j)^{1-\beta}\\& \  \left(\lim_{a \to \infty} \ \lim_{n_1 \to \infty} \sum_{i=1}^{n_1} P_{\Tilde{X}}(x^*_i) P_{\bar{Y}|\bar{X}}(y^*_j|x^*_i)^\alpha \right)^{\frac{\beta}{\alpha}}\\\nonumber
&= \sup_{f_{\bar{X}},f_{\Tilde{X}}}  \ \frac{\alpha}{(\alpha-1)\beta}
\log \ \sup_{b,n_2}\
\sum_{j=1}^{n_2} P_{\bar{Y}}(y^*_j)^{1-\beta} \\& \ \left(\int_{\mathcal{X}} f_{\Tilde{X}}(x) P_{\bar{Y}|X}(y^*_j|x)^\alpha dx\right)^{\frac{\beta}{\alpha}}\label{eq:Riemann-x}\\\nonumber
&\geq \sup_{f_{\bar{X}},f_{\Tilde{X}}}  \ \frac{\alpha}{(\alpha-1)\beta}
\log \lim_{b\to \infty} \lim_{n_2 \to \infty}
\sum_{\substack{j=1:\\P_{\bar{Y}}(y^*_j)>0}}^{n_2} P_{\bar{Y}}(y^*_j) \\& \ \left(\int_{\mathcal{X}} f_{\Tilde{X}}(x) \left(\frac{P_{\bar{Y}|X}(y^*_j|x)}{P_{\bar{Y}}(y^*_j)}\right)^\alpha dx\right)^{\frac{\beta}{\alpha}}\\\nonumber
&= \sup_{f_{\bar{X}},f_{\Tilde{X}}}  \ \frac{\alpha}{(\alpha-1)\beta}
\log \int_{\mathcal{Y}} f_{Y}(y)\\& \ \left(\int_{\mathcal{X}} f_{\Tilde{X}}(x) \left(\frac{f_{Y|X}(y|x)}{f_{Y}(y)}\right)^\alpha dx\right)^{\frac{\beta}{\alpha}}dy\label{eq:Riemann-y}\\\nonumber
&= \sup_{f_{\bar{X}},f_{\Tilde{X}}}  \ \frac{\alpha}{(\alpha-1)\beta}
\log \int_{\mathcal{Y}} f_{Y}(y)^{1-\beta} \\& \ \left(\int_{\mathcal{X}} f_{\Tilde{X}}(x) f_{Y|X}(y|x)^\alpha dx\right)^{\frac{\beta}{\alpha}}dy\label{lower-continuous}
\end{align}
where $P_{\tilde{X}}$ is a distribution on the support of $\bar{X}$ , $f_{X}$ is the marginal pdf of $X$ and $f_{\tilde{X}}$ is a pdf on the support of $X$. Furthermore, \eqref{eq:Riemann-x} and \eqref{eq:Riemann-y} follow from the definition of the Riemann integral. Combining \eqref{upper-continuous} and \eqref{lower-continuous} gives
\begin{align}
  \nonumber& \mathcal{L}_{\alpha,\beta}(X\to Y)\\\nonumber&= \sup_{f_{X}}\ \sup_{f_{\Tilde{X}}} \ \frac{\alpha}{(\alpha-1)\beta}
\log 
\int_{\mathcal{Y}} f_{Y}(y)^{1-\beta} \\& \ \left( \int_{\mathcal{X}} f_{\Tilde{X}}(x) f_{Y|X}(y|x)^\alpha dx\right)^{\frac{\beta}{\alpha}}dy\\\nonumber
&=\max_{x': f_X(x')>0} \  \sup_{f_{\Tilde{X}}}\ \frac{\alpha}{(\alpha-1)\beta}  \log 
\int_{\mathcal{Y}} f_{Y|X}(y|x')^{1-\beta} \\& \ \left(\int_{\mathcal{X}} f_{\Tilde{X}}(x) f_{Y|X}(y|x)^\alpha dx\right)^{\beta/\alpha}dy
\end{align}
where the latter equality follows because the quantity inside the log is convex in $f_X$. Similarly, we may prove the following lemma.
\begin{lemma}
  When X is continuous and Y has a finite alphabet, maximal $(\alpha,\beta)$-leakage defined in \eqref{eqn:alpha,beta-leakage-original-def} simplifies to 
  \begin{align}
  \nonumber& \mathcal{L}_{\alpha,\beta}(X\to Y)\\\nonumber&=\max_{x': f_X(x')>0} \  \sup_{f_{\Tilde{X}}}\ \frac{\alpha}{(\alpha-1)\beta}  \log 
\sum_y P_{Y|X}(y|x')^{1-\beta} \\& \ \left(\int_{\mathcal{X}} f_{\Tilde{X}}(x) P_{Y|X}(y|x)^\alpha dx\right)^{\beta/\alpha}
\end{align}
where $f_{\tilde{X}}$ is a pdf on $\mathcal{X}$.
\end{lemma}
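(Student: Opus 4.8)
The plan is to mirror the proof of Theorem~\ref{thm:continuous-alpha-beta-leakage}, specialized to a finite alphabet $\mathcal{Y}$. This specialization is in fact a simplification: since $Y$ is already discrete, quantization is needed only for $X$, and the post-processing step $\bar{X}\to Y\to\bar{Y}$ used in Appendix~\ref{proof:thm-continuous-alpha-beta-leakage} is not required. Throughout I assume the natural regularity conditions, namely that $f_X$ is continuous and that $x\mapsto P_{Y|X}(y|x)$ is continuous for each $y$.

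\textbf{Upper bound.} First I would rerun the upper-bound computation from the proof of Theorem~\ref{theorem:alpha-beta-leakage}, replacing sums over $x$ by integrals over $\mathcal{X}$ and the pmfs $P_X,P_{X|U}$ by densities $f_X,f_{X|U}$; since $Y$ is discrete, $P_Y$ and $P_{Y|X}(y|\cdot)$ remain probability mass values. Optimizing the estimators in \eqref{eqn:alpha,beta-leakage-original-def} as before reduces the objective to $\frac{1}{\beta}\log\sum_y P_Y(y)^{1-\beta}\big(\sum_u P_{U,Y}(u,y)^\alpha\big/\sum_u P_U(u)^\alpha\big)^{\beta/\alpha}$, and the integral form of Jensen's inequality together with the Markov chain $U-X-Y$ gives $P_{Y|U}(y|u)^\alpha=\big(\int_{\mathcal{X}} f_{X|U}(x|u)P_{Y|X}(y|x)\,dx\big)^\alpha\le\int_{\mathcal{X}} f_{X|U}(x|u)P_{Y|X}(y|x)^\alpha\,dx$. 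Setting $f_{\tilde{X}}(x)=\int f_{X|U}(x|u)P_U(u)^\alpha\,du\big/\int P_U(u)^\alpha\,du$ yields
\[
\mathcal{L}_{\alpha,\beta}(X\to Y)\le\sup_{f_X}\sup_{f_{\tilde{X}}}\frac{\alpha}{(\alpha-1)\beta}\log\sum_y P_Y(y)^{1-\beta}\Big(\int_{\mathcal{X}} f_{\tilde{X}}(x)P_{Y|X}(y|x)^\alpha\,dx\Big)^{\beta/\alpha}.
\]

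\textbf{Lower bound.} Next I would fix $a>0$ and $n_1\in\mathbb{N}$, partition $[-a,a]$ into $n_1$ equal subintervals with representatives $x_i^*$, map $\mathcal{X}\setminus[-a,a]$ to an extra symbol $x_0^*$, and let $\bar{X}$ be the resulting finite-alphabet quantization of $X$, exactly as in Appendix~\ref{proof:thm-continuous-alpha-beta-leakage}. Then $\bar{X}-X-Y$ is a Markov chain, so by the linkage inequality in the mixed form where only the middle variable is continuous (proved verbatim like the finite-alphabet inequality \eqref{data-processing1}, with $P_X$ replaced by $f_X$), $\mathcal{L}_{\alpha,\beta}(X\to Y)\ge\mathcal{L}_{\alpha,\beta}(\bar{X}\to Y)$. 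Since $\bar{X}$ and $Y$ are both finite, the finite-alphabet lower bound \eqref{eq:result-lower-finite} applies, giving $\mathcal{L}_{\alpha,\beta}(\bar{X}\to Y)\ge\sup_{P_{\bar{X}},P_{\tilde{X}}}\frac{\alpha}{(\alpha-1)\beta}\log\sum_y P_Y(y)^{1-\beta}\big(\sum_i P_{\tilde{X}}(x_i^*)P_{Y|\bar{X}}(y|x_i^*)^\alpha\big)^{\beta/\alpha}$. Taking the supremum over $a$ and $n_1$, moving it through the (monotone) logarithm and past the finite sum over $y$, and using the definition of the Riemann integral to identify $\lim_{a\to\infty}\lim_{n_1\to\infty}\sum_i P_{\tilde{X}}(x_i^*)P_{Y|X}(y|x_i^*)^\alpha$ with $\int_{\mathcal{X}} f_{\tilde{X}}(x)P_{Y|X}(y|x)^\alpha\,dx$, I obtain the matching lower bound with $\sup_{f_X}\sup_{f_{\tilde{X}}}$.

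\textbf{Conclusion and main obstacle.} Combining the two bounds gives the equality in the $\sup_{f_X}$ form; since $f_X$ enters the expression only through $P_Y$ and the argument of the logarithm is convex in $P_Y$, the supremum over $f_X$ is achieved at an extreme point $P_Y=P_{Y|X=x'}$ with $f_X(x')>0$, which produces the stated formula with $\max_{x':f_X(x')>0}$. I expect the main difficulty to lie in the lower-bound limiting argument: justifying the interchange of the suprema over $(a,n_1)$ with the logarithm and the sum over $y$, bounding the contribution of the truncated tail $\mathcal{X}\setminus[-a,a]$ (and of the symbol $x_0^*$), and ensuring that the Riemann sums $\sum_i P_{\tilde{X}}(x_i^*)P_{Y|X}(y|x_i^*)^\alpha$ converge to the integral — this is where the continuity assumptions on $f_X$ and $P_{Y|X}(y|\cdot)$ are used, exactly as in the corresponding step of the proof of Theorem~\ref{thm:continuous-alpha-beta-leakage}. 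The remaining pieces, in particular the mixed-alphabet linkage inequality, are routine adaptations of the finite-alphabet arguments.
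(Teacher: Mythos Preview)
Your proposal is correct and matches the paper's intended approach: the paper itself does not give a separate proof but simply states that the lemma is proved ``similarly'' to Theorem~\ref{thm:continuous-alpha-beta-leakage}, and your write-up is precisely that adaptation. Your observation that only $X$ needs to be quantized (so the post-processing step $\bar{X}\to Y\to\bar{Y}$ and the accompanying mixed-alphabet Lemma~\ref{ineq:post-process} are unnecessary here) is a legitimate simplification over the full continuous-$Y$ argument.
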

%%%%%%%%%%%%%%%%%%%
\section{Results for Known Mechanisms~\ref{sec:Illustration of Results}}\label{proof:lemma-Laplacian-mechanism}
Here, we obtain an upper bound on vector maximal Renyi leakage under a Laplacian mechanism. The proof for a Gaussian mechanism follows similarly.\\
Let $c_{i,x_{-i}}=\displaystyle\max_{x_i} \ h(x_{-i},x_i)$, $a_{i,x_{-i}}=\displaystyle\min_{x_i} \ h(x_{-i},x_i)$, and $Y=\mathcal{M}(X^n)$. Since  $|h(x_{-i},x_i)-h(x_{-i},\tilde{x}_i)|\leq \delta$, we have $c_{i,x_{-i}}-a_{i,x_{-i}}\leq \delta$. Moreover, for continuous alphabets we have
\begin{align}
   \nonumber&\max_i \ \mathcal{L}_{\infty,\beta}(X_i \to Y|X_{-i})\\\nonumber&=\max_i \ \max_{\tilde{x_i},x_{-i}} \ \frac{1}{\beta} \log
\int_{-\infty}^{\infty} f_{Y|X_i,X_{-i}}(y|\tilde{x_i},x_{-i})^{1-\beta} \ \\& \ \times \max_{x_i} \ f_{Y|X_i,X_{-i}}(y|x_i,x_{-i})^\beta \ dy,
\end{align}
where $$ f_{Y|X_i,X_{-i}}(y|x_i,x_{-i})=f_N\left(y-h(x_{-i},x_i)\right)$$ and $$ f_{Y|X_i,X_{-i}}(y|\tilde{x_i},x_{-i})=f_N\left(y-h(x_{-i},\tilde{x_i})\right).$$For fixed $i, \tilde{x}_i$ and $x_{-i}$, consider the quantity inside the logarithm:
\begin{align}
 &\nonumber\int_{-\infty}^{\infty} f_{Y|X_i,X_{-i}}(y|\tilde{x_i},x_{-i})^{1-\beta} \ \\& \ \times \max_{x_i} f_{Y|X_i,X_{-i}}(y|x_i,x_{-i})^\beta \ dy\\\nonumber
 &=(\frac{1}{2b})^\beta \int_{-\infty}^{\infty} f_{Y|X_i,X_{-i}}(y|\tilde{x_i},x_{-i})^{1-\beta} \ \\& \ \times \max_{x_i} \  \exp{-\frac{\beta|y-h(x_{-i},x_i)|}{b}} \ dy\\\nonumber
&=(\frac{1}{2b})^\beta \bigg[\int_{-\infty}^{a_{i,x_{-i}}} f_{Y|X_i,X_{-i}}(y|\tilde{x_i},x_{-i})^{1-\beta} \\\nonumber& \ \times  \max_{x_i} \  \exp{-\frac{\beta|y-h(x_{-i},x_i)|}{b}} \ dy\\\nonumber& \ + \int_{a_{i,x_{-i}}}^{c_{i,x_{-i}}} f_{Y|X_i,X_{-i}}(y|\tilde{x_i},x_{-i})^{1-\beta} \\\nonumber& \ \times \max_{x_i} \  \exp{-\frac{\beta|y-h(x_{-i},x_i)|}{b}} \ dy \\\nonumber& \ + \int_{c_{i,x_{-i}}}^\infty f_{Y|X_i,X_{-i}}(y|\tilde{x_i},x_{-i})^{1-\beta} \\& \ \times \max_{x_i} \  \exp{-\frac{\beta|y-h(x_{-i},x_i)|}{b}} \ dy\bigg]\\\nonumber
&\leq (\frac{1}{2b})^\beta \bigg[\int_{-\infty}^{a_{i,x_{-i}}} f_{Y|X_i,X_{-i}}(y|\tilde{x_i},x_{-i})^{1-\beta} \\\nonumber& \  \times  \exp{-\frac{\beta|y-a_{i,x_{-i}}|}{b}} \ dy\\\nonumber & \ + \int_{a_{i,x_{-i}}}^{c_{i,x_{-i}}} f_{Y|X_i,X_{-i}}(y|\tilde{x_i},x_{-i})^{1-\beta}\ dy \\&+ \int_{c_{i,x_{-i}}}^\infty f_{Y|X_i,X_{-i}}(y|\tilde{x_i},x_{-i})^{1-\beta} \  \exp{-\frac{\beta|y-c_{i,x_{-i}}|}{b}} \ dy\bigg]\\\nonumber
&=\frac{1}{2b} \bigg[\int_{-\infty}^{a_{i,x_{-i}}} \exp{\frac{(\beta-1)|y-h(x_{-i},\tilde{x_i})|}{b}} \\\nonumber & \ \times \exp{-\frac{\beta|y-a_{i,x_{-i}}|}{b}} \ dy\\\nonumber &\ + \int_{a_{i,x_{-i}}}^{c_{i,x_{-i}}} \exp{\frac{(\beta-1)|y-h(x_{-i},\tilde{x_i})|}{b}}\ dy \\\nonumber&+ \int_{c_{i,x_{-i}}}^\infty \exp{\frac{(\beta-1)|y-h(x_{-i},\tilde{x_i})|}{b}} \\& \ \times  \exp{-\frac{\beta|y-c_{i,x_{-i}}|}{b}} \ dy\bigg]\\\nonumber
&=\frac{1}{2b} \bigg[\int_{-\infty}^{a_{i,x_{-i}}} \exp{\frac{(1-\beta)(y-h(x_{-i},\tilde{x_i}))}{b}} \\\nonumber &\ \times \exp{\frac{\beta(y-a_{i,x_{-i}})}{b}} \ dy\\\nonumber & \ + \int_{a_{i,x_{-i}}}^{h(x_{-i},\tilde{x_i})} \exp{\frac{(1-\beta)(y-h(x_{-i},\tilde{x_i}))}{b}}\ dy \\\nonumber& \ + \int_{h(x_{-i},\tilde{x_i})}^{c_{i,x_{-i}}} \exp{\frac{(\beta-1)(y-h(x_{-i},\tilde{x_i}))}{b}}\ dy\\\nonumber& \ + \int_{c_{i,x_{-i}}}^\infty \exp{\frac{(\beta-1)(y-h(x_{-i},\tilde{x_i}))}{b}} \\& \ \times  \exp{-\frac{\beta(y-c_{i,x_{-i}})}{b}} \ dy\bigg]\\\nonumber
&=\frac{1}{2} \exp\left( \frac{(\beta-1)(h(x_{-i},\tilde{x_i})-a_{i,x_{-i}})}{b}\right)\\\nonumber & \ +\frac{1}{2}\exp\left( \frac{(\beta-1)(c_{i,x_{-i}}-h(x_{-i},\tilde{x_i}))}{b}\right)\\\nonumber & \ +\frac{1}{2(\beta-1)}\Bigg[\exp\left( \frac{(\beta-1)(h(x_{-i},\tilde{x_i})-a_{i,x_{-i}})}{b}\right)\\ & \ +\exp\left( \frac{(\beta-1)(c_{i,x_{-i}}-h(x_{-i},\tilde{x_i}))}{b}\right)-2\Bigg]
\end{align}
Thus, we have
\begin{align}
 \nonumber  &\max_i \ \mathcal{L}_{\infty,\beta}(X_i \to Y|X_{-i})\\\nonumber
 &\leq\max_{i,\tilde{x}_i,x_{-i}} \ \frac{1}{\beta} \log
\Bigg[\frac{1}{2} \exp\left( \frac{(\beta-1)(h(x_{-i},\tilde{x_i})-a_{i,x_{-i}})}{b}\right)\\\nonumber& \ +\frac{1}{2}\exp\left( \frac{(\beta-1)(c_{i,x_{-i}}-h(x_{-i},\tilde{x_i}))}{b}\right)\\\nonumber& \ +\frac{1}{2(\beta-1)}\Bigg(\exp\left( \frac{(\beta-1)(h(x_{-i},\tilde{x_i})-a_{i,x_{-i}})}{b}\right)\\& \ +\exp\left( \frac{(\beta-1)(c_{i,x_{-i}}-h(x_{-i},\tilde{x_i}))}{b}\right)-2\Bigg)\Bigg]\\\nonumber
&=\max_{i,x_{-i}} \ \max_{h(x_{-i},\tilde{x}_i)\in[a_{i,x_{-i}},c_{i,x_{-i}}]}\ \frac{1}{\beta} \\\nonumber& \ \log
\Bigg[\frac{1}{2} \exp\left( \frac{(\beta-1)(h(x_{-i},\tilde{x_i})-a_{i,x_{-i}})}{b}\right)\\\nonumber& \ +\frac{1}{2}\exp\left( \frac{(\beta-1)(c_{i,x_{-i}}-h(x_{-i},\tilde{x_i}))}{b}\right)\\\nonumber& \ +\frac{1}{2(\beta-1)}\Bigg(\exp\left( \frac{(\beta-1)(h(x_{-i},\tilde{x_i})-a_{i,x_{-i}})}{b}\right)\\& \ +\exp\left( \frac{(\beta-1)(c_{i,x_{-i}}-h(x_{-i},\tilde{x_i}))}{b}\right)-2\Bigg)\Bigg]\\\nonumber
&=\max_{i,x_{-i}}\  \frac{1}{\beta} \ \log \Bigg[  \frac{1}{2}-\frac{1}{2(\beta-1)}+\left(\frac{1}{2}+\frac{1}{2(\beta-1)}\right) \\& \  \exp \left(\frac{(\beta-1)(c_{i,x_{-i}}-a_{i,x_{-i}})}{b}\right)\Bigg]\label{eq:convex in h}\\\nonumber
 &\leq \frac{1}{\beta} \ \log \Bigg[  \frac{1}{2}-\frac{1}{2(\beta-1)}+\left(\frac{1}{2}+\frac{1}{2(\beta-1)}\right) \\& \ \exp \left(\frac{(\beta-1)\delta}{b}\right)\Bigg],\label{ineq:c-a<delta}
\end{align}
 where \eqref{eq:convex in h} follows because the quantity inside the logarithm is convex in $h(x_{-i},\tilde{x_i})$ and so we have $h(x_{-i},\tilde{x_i})=a$ or $h(x_{-i},\tilde{x_i})=c$. It is easy to show that both values of $h(x_{-i},\tilde{x_i})$ give the expression \eqref{eq:convex in h}. Moreover, \eqref{ineq:c-a<delta} follows because $c_{i,x_{-i}}-a_{i,x_{-i}}\leq \delta$. The equality is achieved if there exist $i$ and $x_{-i}$ such that the function $h(x_{-i},x_i)$ from $\mathcal{X}$ to $[a_{i,x_{-i}},c_{i,x_{-i}}]$ is surjective and $c_{i,x_{-i}}-a_{i,x_{-i}}=\delta$.    
\bibliographystyle{IEEEtran}
\bibliography{Bibliography}

\begin{IEEEbiographynophoto}
{Atefeh Gilani} (Student Member, IEEE) received the B.S. degree in electrical engineering from K. N. Toosi University of Technology, Iran in 2016, and the M.S. degree in electrical engineering from University of Tehran, Iran in 2019. She is currently pursuing the Ph.D. degree in the School of Electrical, Computer, and Energy Engineering at Arizona State University. Her research interests include privacy, information theory and machine learning.
\end{IEEEbiographynophoto}
\begin{IEEEbiographynophoto}
{Gowtham R. Kurri} (Member, IEEE) graduated from the International Institute of Information Technology (IIIT), Hyderabad, India, with a B.\ Tech.\ degree in Electronics and Communication Engineering, in 2011. He received his M.Sc. and Ph.D. degrees from the Tata Institute of Fundamental Research, Mumbai, India in 2020. From 2020-2023, he was a Post-Doctoral Researcher at the School of Electrical, Computer and Energy Engineering at Arizona State University. Since February 2023, he has been an Assistant Professor with IIIT Hyderabad, where he is affiliated to the Signal Processing and Communications Research Centre. His research interests are in information theory and statistical machine learning.

From 2011-2012, he worked as an Associate Engineer at Qualcomm India Private Limited, Hyderabad, India. From July to October, 2019, he was a Research Intern in the Blockchain Technology Group at IBM Research, Bangalore, India. 
\end{IEEEbiographynophoto}
\begin{IEEEbiographynophoto}{Oliver Kosut} (S'06--M'10--SM'22)
     received B.S. degrees in electrical engineering and mathematics from the Massachusetts Institute of Technology, Cambridge, MA, USA in 2004, and the Ph.D. degree in electrical and computer engineering from Cornell University, Ithaca, NY, USA in 2010.

Since 2012, he has been a faculty member in the School of Electrical, Computer and Energy Engineering at Arizona State University, Tempe, AZ, USA, where he is an Associate Professor. Previously, he was a Postdoctoral Research Associate at MIT from 2010 to 2012. His research interests include information theory---particularly with applications to security and machine learning---and power systems.

Prof.~Kosut received the NSF CAREER award in 2015. He is an associate editor for the \emph{IEEE Transactions on Information Forensics and Security}. He is an IEEE Information Theory Society Distinguished Lecturer, 2023--2024. He has also served as a co-lead editor of an issue on  Information-Theoretic Methods for Trustworthy and Reliable Machine Learning for the \emph{IEEE Journal on Selected Areas in Information Theory}.
\end{IEEEbiographynophoto}
\begin{IEEEbiographynophoto}{Lalitha Sankar} (S'02--M'07--SM'13) received the B. Tech. degree from the Indian Institute of Technology, Bombay, the M.S. degree from the University of Maryland, and the Ph.D. degree from Rutgers University. 

She is currently a Professor in the School of Electrical, Computer, and Energy Engineering at Arizona State University. Her research interests include applying information theory and data science to study reliable, responsible, and privacy-protected machine learning.

Prof. Sankar received the National Science Foundation CAREER Award in 2014, the IEEE Globecom 2011 Best Paper Award for her work on privacy of side-information in multi-user data systems, and the Academic Excellence award from Rutgers in 2008. She is presently an Associate Editor for the \emph{IEEE Transactions on Information Forensics and Security} and the \emph{IEEE Transactions on Information Theory} and has served as Associate Editor for the \emph{IEEE BITS Magazine}. She has also served as a co-lead editor of a special issue on Information-Theoretic Methods for Trustworthy and Reliable Machine Learning for the \emph{IEEE Journal on Selected Areas in Information Theory}.

\end{IEEEbiographynophoto}

\end{document}